\newif\ifdraft \draftfalse
\newif\iffull \fulltrue
\newif\ifarxiv\arxivtrue
\makeatletter \@input{tex.flags} \makeatother
		\DeclareMathAlphabet{\mathsf}{OT1}{cmss}{m}{n}
		\SetMathAlphabet{\mathsf}{bold}{OT1}{cmss}{bx}{n}
		\DeclareMathAlphabet{\mathtt}{OT1}{cmtt}{m}{n}
		\SetMathAlphabet{\mathtt}{bold}{OT1}{cmtt}{bx}{n}
\definecolor{DarkGreen}{rgb}{0.1,0.5,0.1}
\definecolor{DarkRed}{rgb}{0.5,0.1,0.1}
\definecolor{DarkBlue}{rgb}{0.1,0.1,0.5}
\newcommand{\xhdr}[1]{\vspace{2mm} \noindent{\bf #1}}
\newcommand{\LDOTS}{\, ,\ \ldots\ ,}     % smart "..."
\newcommand{\term}[1]{\ensuremath{\mathtt{#1}}\xspace}
\newcommand{\BwK}{\term{BwK}} % Bandits with Knapsacks
\newcommand{\SWtot}{\mathtt{SW}_{\mathtt{tot}}}
\newcommand{\supSWtot}{\mathtt{supSW}_{\mathtt{tot}}}
\newcommand{\ETW}{expected total welfare\xspace}
\newcommand{\bunPolicy}{bundling policy\xspace}
\newcommand{\bunPolicies}{bundling policies\xspace}
\newcommand{\Smin}{s_{\mathtt{min}}}
\newcommand{\vmax}{V_{\mathtt{max}}}
\newcommand{\dd}{\Delta_{d+1}}
\newcommand{\tv}{\tilde v}
\newcommand{\initOneLiners}{%
 	\setlength{\itemsep}{0pt}
	\setlength{\parsep }{0pt}
  	\setlength{\topsep }{0pt}     	
}
\newcommand{\sw}[1]{\ifdraft \textcolor{blue}{[Steven: #1]}\fi}
\newcommand\RR{\mathbb{R}}
\newcommand\cB{\mathcal{B}}
\newcommand\cF{\mathcal{F}}
\newcommand\cP{\mathcal{P}}
\newcommand\cV{\mathcal{V}}
\newcommand\cL{\mathcal{L}}
\newcommand\cD{\mathcal{D}}
\newcommand{\cI}{\mathcal{I}}
\newcommand{\SW}{\term{SW}}   % expected (per-round) welfare
\newcommand{\pmin}{p_{\text{min}}}
\newcommand{\bl}{^\bullet}
\newcommand{\SCP}{\term{SCP}}
\newcommand{\R}{R}
\newcommand{\ROPT}{\mathrm{R\text{-}OPT}}
\newcommand{\learnprice}{\text{\sc BunToPrice}}
\newcommand{\op}{\text{\sc OWel}}
\newcommand{\opu}{\text{\sc OWel-UD}}
\DeclareMathOperator{\poly}{poly}
\DeclareMathOperator*{\Expectation}{\mathbb{E}}
\newcommand{\Ex}[2]{\Expectation_{#1}\left[#2\right]}
\newcommand{\eps}{\varepsilon}
\def\epsilon{\varepsilon}
\newcommand{\OPT}{\text{\term{VAL}}}
\renewcommand{\bar}{\overline}
\DeclareMathOperator*{\argmin}{\mathrm{argmin}}
\DeclareMathOperator*{\argmax}{\mathrm{argmax}}
\newcommand{\INDSTATE}[1][1]{\STATE\hspace{#1\algorithmicindent}}
\newtheorem{theorem}{Theorem}[section]
\newtheorem{corollary}[theorem]{Corollary}
\newtheorem{claim}[theorem]{Claim}
\newtheorem{lemma}[theorem]{Lemma}
\newtheorem{definition}[theorem]{Definition}
\newtheorem{assumption}[theorem]{Assumption}
\newtheorem{example}[theorem]{Example}
\title{Multidimensional Dynamic Pricing for Welfare Maximization}
\begin{document}

\author{ Aaron Roth\thanks{University of Pennsylvania. Email:
    \href{mailto:aaroth@cis.upenn.edu}{aaroth@cis.upenn.edu}. % Supported
    % in part by NSF grants CNS-1253345 and CNS-1513694, a Google
    % Faculty Research Award, and a grant from the Sloan Foundation.
  }
  \and Aleksandrs Slivkins\thanks{Microsoft Research. Email:
    \href{mailto:slivkins@microsoft.com}{slivkins@microsoft.com}.}
  \and Jonathan Ullman\thanks{Northeastern University. Email:
    \href{mailto:jullman@ccs.neu.edu}{jullman@ccs.neu.edu}.}  \and
  Zhiwei Steven Wu\thanks{University of Pennsylvania. Email:
    \href{mailto:wuzhiwei@cis.upenn.edu}{wuzhiwei@cis.upenn.edu}. }  }

% \date{First version: July 2016\\This version: September 2016}

\maketitle

\begin{abstract}
  We study the problem of a seller dynamically pricing $d$ distinct
  types of indivisible goods, when faced with the online arrival of
  unit-demand buyers drawn independently from an unknown
  distribution. The goods are not in limited supply, but can only be
  produced at a limited rate and are costly to produce. The seller
  observes only the bundle of goods purchased at each day, but nothing
  else about the buyer's valuation function. Our main result is a
  dynamic pricing algorithm for optimizing welfare (including the
  seller's cost of production) that runs in time and a number of
  rounds that are polynomial in $d$ and the approximation
  parameter. We are able to do this despite the fact that (i) the
  price-response function is not continuous, and even its fractional
  relaxation is a non-concave function of the prices, and (ii) the
  welfare is not observable to the seller.

  We derive this result as an application of a general technique for
  optimizing welfare over \emph{divisible} goods, which is of
  independent interest. When buyers have strongly concave, H\"{o}lder
  continuous valuation functions over $d$ divisible goods, we give a
  general polynomial time dynamic pricing technique.  We are able to
  apply this technique to the setting of unit demand buyers despite
  the fact that in that setting the goods are not divisible, and the
  natural fractional relaxation of a unit demand valuation is not
  strongly concave. In order to apply our general technique, we
  introduce a novel price randomization procedure which has the effect
  of implicitly inducing buyers to ``regularize'' their valuations
  with a strongly concave function. Finally, we also extend our
  results to a limited-supply setting in which the number of copies of
  each good cannot be replenished.

\end{abstract}

\vfill
\thispagestyle{empty}
\setcounter{page}{0}
\pagebreak

\section{Introduction}

Consider the problem of an online retailer who sells a large variety
of  {goods}. The seller can in principle produce or procure more copies
of each good as needed, but only at a limited rate, and at some per-unit production {/procurement} cost that
varies by good. In each round, the seller can dynamically set the
price for each type of good. Each buyer has an unknown valuation function
defined  {(in general) over bundles of goods}, and quasi-linear utility for
money. Each buyer chooses which item to buy to optimize his utility
function given the prices. The seller observes the purchased  {bundle}---i.e.~the \emph{revealed preferences of the buyer}---but not the
buyer's valuation of the purchased  {bundle} (or of any other
 {bundle}). The buyer's valuation function is drawn independently from a fixed
but \emph{unknown} distribution, called the \emph{buyer
  distribution}. The seller's objective is to optimize social welfare:
the expected buyer valuation of the purchased good minus its
production cost. Social welfare, like profit, is a natural objective
for the seller: in particular, sellers attempting to grow their market
(rather than exploit an existing monopoly position) might prefer to
optimize social welfare rather than profit in the short term.

A tempting first attempt at solving this problem would be to simply
set the price for each good to be equal to its cost of production,
which would indeed maximize social welfare if there were no other
constraints on the {bundles} of items purchased by buyers. However,
this solution is unsatisfactory when additional supply can only be
generated at a bounded rate, because the cost of production bears no
relationship to the buyers' values for a good. Because of this,
setting prices equal to costs can result, for example, in every buyer
demanding {the largest possible quantity} of the same good, which the
seller may not be able to accommodate. In a more realistic setting,
there will be constraints on the rate of production and resupply for
each good. Hence, we study the welfare maximization problem in which
we impose the additional constraint that the \emph{expected bundle}
purchased (in expectation over the draw of the buyer) lies in a
bounded set. Because constraints of this sort bind \emph{across}
buyers, setting prices equal to costs fails, and the problem requires
a nontrivial solution.

%\as{explain why social welfare?}

Since the buyer distribution is unknown, the seller cannot directly
compute the prices that optimize social welfare. Instead, she faces a
learning problem: she can try different prices over time and observe
the responses from random buyers drawn from the distribution, and try
to learn the optimal price vector. More formally, the goal is to use a small number of rounds to learn
a price vector that nearly optimizes expected social welfare. We want the algorithm's
guarantees hold in the worst case over the choice of distributions
over buyer valuation functions.

Essentially, we are studying a welfare-optimization version of the well-known \emph{dynamic pricing} problem, also known as \emph{learn-and-earn}, with $d>1$ goods for sale. (Prior work on dynamic pricing focused on profit maximization.) At a very high level, the main challenge presented is to learn the price response function---i.e.~the function mapping prices to  {expected} bundles purchased---and then optimize it with respect to welfare. Moreover, this is a high dimensional function (for large $d$), and so one must overcome the curse of dimensionality.
Prior work on non-Bayesian dynamic pricing (e.g., \cite{BZ09,BesbesZeevi-or12,DynPricing-ec12,Broder-or12,Keskin-or14,Wang-OR14,Boer-ms14}) dealt with this challenge by making strong assumptions on the price response function itself. Typical assumptions include Lipschitzness \citep{BZ09,BesbesZeevi-or12,Wang-OR14} (which allows for discretization in low-dimensional problems), and particularly for high dimensional problems, linearity \citep{Keskin-or14,Boer-ms14} or concavity \citep{DynPricing-ec12,BZ09,Wang-OR14}.%
\footnote{Prior work that does \emph{not} make assumptions on the shape of valuations or demand curves is either restricted to selling a single good ($d=1$) \citep{Bobby-focs03,DynPricing-ec12}, or suffers from the curse of dimensionality \emph{and}  comes with performance guarantees relative to the discretized prices rather than all prices \citep{BwK-focs13}.}
However, assumptions of this sort are not well supported by a micro-economic foundation.  In fact, natural assumptions on the buyer valuations do not necessarily result in price-response function with these properties.

%The first is to estimate welfare infer response to price vectors from past
%observations for other price vectors, and mitigate the curse of
%dimensionality (for large $d$). Prior work resolved these challenges
%via discretization (of the price vectors into a tractable number of
%alternatives), and/or strong assumptions such as Lipschitzness and
%(particularly for large $d$) concavity or linearity. \as{citations here} Crucially, the
%assumptions were made on the price response --- mapping from price
%vector to purchased bundle; such assumptions are very idealized in
%practice.

In this paper, we pursue a different approach which stands on stronger microeconomic foundations: we make
assumptions on the form of the valuation functions directly (and no assumptions on the distribution over valuation functions), and show that we can work with the price response function that results. This is the case \emph{despite} the fact that our problem is high dimensional, and the price response function that results from our assumptions is not concave. We also face an additional
challenge:  unlike profit, welfare is not observable, and we can observe the purchased bundle but not the buyer's valuation for that bundle. Nevertheless, we design  {algorithms} that find a
near-optimal price vector with respect to welfare, in a number of rounds that is polynomial
in $d$ and the accuracy parameter.  {(Whereas, for example, a naive solution based on discretization and Lipschitzness of the price-response function requires a number of rounds that is exponential in $d$.)} Our results also extend to the limited-supply setting.

%s, and also (under additional assumptions) to profit maximization.

%\as{Add a para placing our results in the context of prior work on DP. Will need to take a look at prior work first ... }

\subsection{Our Contributions}

% \sw{make it clear that our paper spans two different settings: 1)
%   divisible goods with general strongly concave valuations and 2)
%   unit-demand valuation with indivisible items}

%Leading up to our main result, we start by giving a very general result in the setting of \emph{divisible goods} under certain assumptions on the valuation functions. We then show how this general result can be leveraged to yield our main result --- solving the problem in the setting of \emph{indivisible goods} when buyers are unit-demand.

 {Our main result solves the problem in the setting of \emph{indivisible goods} when buyers are unit-demand or, alternatively, when the seller only allows each customer to buy a single item. Surprisingly, no other assumptions are needed! Further, we give a general result for the setting of \emph{divisible goods} under certain assumptions on the valuation functions. In fact, we show how this general result can be leveraged to yield the result for unit demands.}

 {Both settings work as follows.} There are $d$ goods. In each round, prices are set and one buyer arrives and purchases her most preferred bundle  {from the set of feasible bundles}. The seller incurs production/procurement costs for each sale, which are linear in the sold bundle.

We give a computationally efficient
and round efficient algorithm for finding a nearly
welfare-maximizing price vector subject to a constraint on the
expected consumption.  Let $\SW(p)$ be the expected social welfare
that results from setting prices $p$. The seller would like to set
prices to ensure that the expected per-round purchase of each good
$j$, denoted $x_j(p)$, is bounded above by some supply $s_j$. This
models a realistic scenario in which the seller's inventory can be
replenished, but only at a limited rate. For example, perhaps at most
one truckload of goods can be stocked per day. Approximating a
restocking period constraint with a constraint on the expected
per-customer purchase is reasonable if the restocking period
corresponds to a large number of rounds, because then the realized
consumption over these rounds concentrates around is expectation. In
the following, we will write $x(p) = (x_1(p),\dots,x_d(p))$ to denote
the bundle \emph{induced} by prices $p$.

%\xhdr{A General Technique}
%Specifically, in the divisible goods setting,

\xhdr{Divisible goods.}  Departing from previous work, instead of
making assumptions about the functional form of the price-response
function, which depends on the buyers' valuations in aggregate, we
make assumptions on the individual buyers' valuations
themselves. Specifically, we assume the buyers' valuations are
strongly concave and H\"older continuous. (These assumptions are
satisfied by a large class of well-studied valuation,
including~\emph{CES}
and~\emph{Cobb-Douglas} as shown in~\cite{RUW16}).  The sold bundles
are constrained to lie in the bounded set $\cF\subset \RR_+^d$ (e.g.
$\cF = [0,1]^d$ means at most one unit of each good can be purchased).

\begin{theorem}[Divisible Goods]\label{thm:intro-limited-expected}
Assume divisible goods, and buyers with strongly concave and and H\"older-continuous valuations. There is an algorithm that takes as input parameters $d, \alpha, \delta > 0$ and a supply vector $s \in \RR_{>0}^d$, such that with probability at least $1-\delta$, the algorithm outputs a price vector $p \in \RR_+^d$ such that
\begin{align}\label{eq:thm:intro-limited-expected}
x(p)\leq s \quad\text{and}\quad
\SW(p) \geq \max_{p\in \RR_+^d:\;x(p)\leq s} \SW(p) - \alpha.
\end{align}
The number of rounds and the total computation time are polynomial in $d$, $\tfrac{1}{\alpha}$ and $\log\tfrac{1}{\delta}$.
\end{theorem}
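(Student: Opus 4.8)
The plan is to reduce the problem to a low-dimensional \emph{stochastic convex optimization} whose gradients can be estimated purely from observed bundles, run inexact projected gradient descent on it, and finally convert an approximately optimal dual solution into an \emph{exactly} feasible, near-welfare-optimal price vector.

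\textbf{Step 1: convex reformulation via Lagrangian duality.} For a buyer with valuation $v$, let $u_v(p)=\max_{x\in\cF}\big(v(x)-\langle p,x\rangle\big)$ be the consumer surplus and $x_v(p)\in\cF$ its maximizer, which is unique by strong concavity; then $x(p)=\Ex{v}{x_v(p)}$ and $\SW(p)=\Ex{v}{v(x_v(p))-\langle c,x_v(p)\rangle}$, where $c$ is the per-unit cost vector. Set $U(p)=\Ex{v}{u_v(p)}$. First I would record: (i) by Danskin's theorem $U$ is convex with $\nabla U(p)=-x(p)$, and since the maximizer of a $\sigma$-strongly concave objective over a convex set is $\tfrac{1}{\sigma}$-Lipschitz in the linear term, $x(\cdot)$ is Lipschitz and $U$ is smooth; (ii) the ``first-best'' allocation program $\max \Ex{v}{v(X_v)-\langle c,X_v\rangle}$ over $X_v\in\cF$ with $\Ex{v}{X_v}\le s$ has Lagrangian dual $\min_{\lambda\ge 0}F(\lambda)$, where $F(\lambda):=U(c+\lambda)+\langle\lambda,s\rangle$, and for fixed $\lambda$ the Lagrangian is maximized at $X_v=x_v(c+\lambda)$ --- that is, by \emph{posting the prices} $c+\lambda$; since $s>0$ and H\"older continuity forces $x(p)\to 0$ at high prices, Slater's condition holds, so strong duality gives $\max_{p:\,x(p)\le s}\SW(p)=\mathrm{OPT}(s):=\min_{\lambda\ge0}F(\lambda)$, attained at $p^\star=c+\lambda^\star$ for any minimizer $\lambda^\star$; (iii) because $F(0)=U(c)=O(\vmax)$ while $F(\lambda)\ge\langle\lambda,s\rangle-O(\vmax)$, the minimizer satisfies $\|\lambda^\star\|_1\le R$ with $R=O(\vmax/\Smin)$, so it suffices to minimize the smooth convex $F$ over the bounded set $\Lambda=\{\lambda\ge0:\|\lambda\|_1\le R\}$.

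\textbf{Step 2: estimable gradients and gradient descent.} The key identity is $\nabla F(\lambda)=s-x(c+\lambda)$, and $x(c+\lambda)$ is directly observable: posting prices $c+\lambda$ for $n=\poly(1/\eps,\log(1/\delta'))$ consecutive rounds and averaging the purchased bundles gives, by vector concentration (bundles lie in the bounded set $\cF$), an estimate within $\eps$ of $x(c+\lambda)$ in $\ell_\infty$ with probability $1-\delta'$ --- using only the observed bundle, never the unobservable valuation. I would then run inexact projected gradient descent on $F$ over $\Lambda$ for $T=\poly(d,1/\alpha,\log(1/\delta))$ steps at $n$ rounds each; the standard smooth-convex analysis returns an iterate $\hat\lambda\in\Lambda$ whose \emph{gradient mapping} is small, which unpacks to $x_j(c+\hat\lambda)\le s_j+\eps$ for every $j$, with near-equality on coordinates where $\hat\lambda_j>0$.

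\textbf{Step 3: an exactly feasible, near-optimal price vector, and the main obstacle.} Two gaps remain: the output must satisfy $x(p)\le s$ \emph{exactly}, and a near-optimal dual point need not be near-feasible. I would close both by running Step 2 against the \emph{tightened} supply $s'=s-\eps'\vec 1$ for a suitable $\eps'=\Theta(\alpha/R)$ (with $\alpha$ small enough that $s'>0$) and outputting $\hat p=c+\hat\lambda$: the gradient-mapping bound gives $x_j(\hat p)\le s'_j+\eps\le s_j$ once the sampling error $\eps$ is set below $\eps'$. For welfare, since $x_v(\hat p)=x_v(c+\hat\lambda)$ maximizes the tightened Lagrangian at multiplier $\hat\lambda$, strong duality gives $\SW(\hat p)=\mathrm{OPT}(s')+\langle\hat\lambda,\,x(\hat p)-s'\rangle\ge\mathrm{OPT}(s')-\langle\hat\lambda,\nabla F_{s'}(\hat\lambda)\rangle$; the inner product is small because $\hat\lambda_j>0\Rightarrow|s'_j-x_j(\hat p)|$ is small while the remaining coordinates contribute $0$, and the sensitivity bound $\mathrm{OPT}(s')\ge\mathrm{OPT}(s)-\|\lambda^\star\|_1\eps'\ge\mathrm{OPT}(s)-R\eps'$ then yields $\SW(\hat p)\ge\mathrm{OPT}(s)-\alpha$; tracking $R$, the smoothness constant, $n$, and $T$, everything is polynomial in $d$, $1/\alpha$, $\log(1/\delta)$ and the valuation parameters. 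The hard part is exactly this last step --- extracting prices that are \emph{simultaneously} exactly feasible and near-optimal from an only approximately optimal dual solution --- which is why we must control the gradient mapping rather than merely the dual objective value and combine it with the supply-tightening trick and the duality/sensitivity estimates; a secondary technical burden is pinning down Slater, the Lipschitz/smoothness constants, and the a priori bound $R$ with explicit polynomial dependence on the strong-concavity and H\"older parameters.
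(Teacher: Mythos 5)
Your approach is correct in outline but genuinely different from the paper's in its architecture. The paper sets up a \emph{two-layer} scheme: an outer subgradient ascent on the concave welfare function $\SW(\hat x)$ defined over \emph{bundle} space (with subgradient $\hat p - c$ at the bundle induced by prices $\hat p$, their Lemma 3.11), and an inner subroutine $\learnprice$ that, given a target bundle $\hat x$, minimizes the Lagrangian dual $g_{\hat x}(p)=\max_x \cL_{\hat x}(x,p)$ to find prices inducing $\hat x$. You collapse this into a \emph{single} gradient descent on the dual $F(\lambda)=U(c+\lambda)+\langle\lambda,s\rangle$ of the first-best allocation program, parameterizing prices as $c+\lambda$ with the dual multiplier directly playing the role of the markup over cost. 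The two approaches share the core structural ingredients (the dual gradient $s-x(c+\lambda)$ is observable; strong concavity of $v$ lets you transfer dual suboptimality into closeness of realized bundles; supply tightening gives exact feasibility), but yours avoids the inner/outer nesting and the need to repeatedly re-solve the ``induce this bundle'' subproblem. In fact, $F(\lambda)+\langle c,s\rangle$ is exactly the paper's $g_s(c+\lambda)$: you are effectively noticing that the inner-layer objective with target $\hat x = s$, minimized over $p\ge c$ rather than $p\ge 0$, already encodes the whole problem, so the outer layer is unnecessary. This is a cleaner route.

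The one step you should not hand-wave is the claim that ``the standard smooth-convex analysis returns an iterate whose gradient mapping is small.'' For smooth convex (not strongly convex) objectives, noisy projected GD gives you \emph{function-value} convergence of the averaged iterate in expectation, not small gradient mapping at a known iterate; gradient-mapping control under stochastic gradients is a more delicate statement, and $F$ itself is not observable (the consumer surplus $U$ involves the buyer's valuation), so you cannot pick the best iterate by evaluating $F$. The cleanest fix is to abandon the gradient-mapping route and argue directly from a function-value bound via strong concavity, exactly as the paper's Lemma 3.8 (\texttt{westbrook}) does: if $F_{s'}(\hat\lambda)-F_{s'}^*\le\eps$, then writing $X^*_v$ for the first-best allocation and using that $x_v(c+\hat\lambda)$ maximizes the $\sigma$-strongly concave $v(\cdot)-\langle c+\hat\lambda,\cdot\rangle$ while $X^*$ is Lagrangian-feasible, one gets $\Ex{v}{\|x_v(c+\hat\lambda)-X^*_v\|^2}\le 2\eps/\sigma$, hence $x(c+\hat\lambda)\le s'+\sqrt{2\eps/\sigma}\,\vec 1$ (feasibility after tightening) \emph{and}, via H\"older continuity of $v$, $\SW(c+\hat\lambda)\ge\lot(s')-O(\sqrt{\eps/\sigma})$ directly, bypassing the $\langle\hat\lambda,\nabla F\rangle$ computation entirely. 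For the high-probability claim, use the paper's amplification trick (Lemma~3.9): run several independent GD copies, estimate $\|x(c+\hat\lambda_l)-s'\|$ from samples for each output $\hat\lambda_l$, and select accordingly; this is observable even though $F$ is not. Two smaller points: your Slater argument should rest on $0\in\cF$ and $s>0$ rather than ``H\"older continuity forces $x(p)\to 0$'' (H\"older continuity does not by itself say this), and the a priori bound $R$ on $\|\lambda^*\|_1$ should be stated in terms of the paper's parameters (H\"older constant and $\|\cF\|$) rather than an undefined $\vmax$, though the bound you want does follow from those.
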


%\xhdr{Main Result: Unit Demand Buyers and Indivisible Goods}
%As the main application of our general technique, we show how to use it

\xhdr{Unit-demand buyers and indivisible goods.}
We use our result for divisible goods to give a polynomial time dynamic pricing algorithm for welfare maximization in the \emph{indivisible goods} setting when buyers have unit demand valuations.  {Here we consider \emph{distributions} $\cD$ over price vectors $p \in \RR_+^d$, rather than fixed price vectors $p$. To extend our notation, let
    $x(\cD) = \Ex{p \sim \cD}{x(p)}$
and
    $\SW(\cD) = \Ex{p \sim \cD}{\SW(p)}$.
We prove:}

\begin{theorem} [Indivisible Goods]\label{thm:intro-unit-demands}
Assume indivisible goods, and buyers with unit-demand valuations.
There is an algorithm that takes as input parameters $d, \alpha, \delta > 0$ and a supply vector $s \in \RR_+^d$, such that with probability at least $1-\delta$ the algorithm outputs a distribution $\cD$ over price vectors $p \in \RR_+^d$ such that
\begin{align}
x(\cD)\leq s \quad\text{and}\quad
\SW(\cD) \geq \max_{\text{distributions}\;\cD':\;x(\cD')\leq s} \SW(\cD') - \alpha.
\end{align}
The number of rounds and the total computation time are polynomial in $d$, $\tfrac{1}{\alpha}$ and $\log\tfrac{1}{\delta}$.
\end{theorem}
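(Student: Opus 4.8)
The plan is to \emph{reduce} \Cref{thm:intro-unit-demands} to the divisible-goods result \Cref{thm:intro-limited-expected} via a price-randomization construction that turns each unit-demand buyer into a ``virtual'' divisible buyer whose valuation is strongly concave. Fix a smoothing scale $\sigma=\poly(\alpha,1/d)$, to be pinned down at the end, and let $\xi\in\RR_{\ge0}^d$ have i.i.d.\ coordinates drawn from a fixed, everywhere-positive, smooth density on $[0,\infty)$ rescaled by $\sigma$ (so $\ex{\|\xi\|_\infty}=O(\sigma\log d)$). For a nominal vector $p\in\RR_+^d$ let $\cD_p$ be the law of $q=p+\xi$; note $q\in\RR_+^d$ always, so $\cD_p$ is a legitimate output distribution. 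A unit-demand buyer with values $v=(v_1,\dots,v_d)$ and outside option $0$, facing $q$, buys $\mathrm{br}_v(q)=e_{j^\star}$ with $j^\star=\argmax_j(v_j-q_j)$, or nothing. Writing $\cF=\{x\in\RR_+^d:\|x\|_1\le1\}$ for the set of expected unit-demand bundles, the workhorse lemma is a ``perturbed-leader $=$ regularization'' identity: there is a function $R\colon\cF\to\RR$, depending only on the law of $\xi$, such that
\[
\Ex{q\sim\cD_p}{\mathrm{br}_v(q)}\big|_{1:d}=\argmax_{x\in\cF}\bigl(\langle v-p,\,x\rangle+R(x)\bigr)=:\hat x_v(p)\qquad\text{for all }v,p,
\]
and $R$ can be chosen to be $\Theta(\sigma)$-strongly concave, H\"older continuous on $\cF$, and to satisfy $0\le R(x)\le O(\sigma\log(d{+}1))$. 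Such an $R$ exists because the expected arg-max of a linearly perturbed objective is the gradient of a smooth convex potential --- the ``expected max'' $\Phi(u)=\ex{\max\{0,\max_j(u_j-\xi_j)\}}$ --- and is therefore the maximizer of a \emph{strongly concave} regularized linear objective; concretely $R=-\Phi^{\ast}$, the negated convex conjugate of $\Phi$. (For Gumbel noise $\Phi^{\ast}$ is $\sigma$ times the negative Shannon entropy of $(x,1-\|x\|_1)$, so $R$ has all three properties; a suitably chosen \emph{nonnegative} noise yields a comparable regularizer.) Thus the virtual buyer has valuation $\hat v_v(x):=\langle v,x\rangle+R(x)$, which is $\Theta(\sigma)$-strongly concave, H\"older continuous, and bounded on $\cF$ --- exactly the hypotheses of \Cref{thm:intro-limited-expected}.

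Granting the lemma, the reduction runs the algorithm of \Cref{thm:intro-limited-expected} on the divisible instance with feasible set $\cF$, supply $s$ (w.l.o.g.\ $s\in(0,1]^d$, since $x(\cD)\le\mathbf 1$ always), strong-concavity modulus $\Theta(\sigma)$, accuracy $\alpha/3$, and confidence $\delta$: whenever that algorithm posts a nominal price $p$, we post $q\sim\cD_p$ to the arriving real buyer and report the purchased integral bundle $\mathrm{br}_v(q)\big|_{1:d}$. By the lemma this report is, conditioned on $p$, an unbiased $[0,1]$-bounded estimate of $\hat x(p):=\Ex{v}{\hat x_v(p)}$ --- precisely the type of single-round feedback the divisible-goods algorithm already consumes (there one round reveals one buyer's bundle, an unbiased estimate of $x(p)$) --- so its guarantee applies to the virtual instance verbatim. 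Let $p^\star$ be its output and return $\cD:=\cD_{p^\star}$. Feasibility is immediate: $x(\cD)=\Ex{v}{\Ex{q\sim\cD_{p^\star}}{\mathrm{br}_v(q)}}\big|_{1:d}=\hat x(p^\star)\le s$. For welfare, since $\langle v,e_j\rangle=v_j$ is the unit-demand value of good $j$, for every $v$ we get $\Ex{q\sim\cD_{p^\star}}{v(\mathrm{br}_v(q))}=\langle v,\hat x_v(p^\star)\rangle=\hat v_v(\hat x_v(p^\star))-R(\hat x_v(p^\star))$, and production cost is linear in the bundle, so
\[
\SW(\cD)=\widehat{\SW}(p^\star)-\Ex{v}{R(\hat x_v(p^\star))}\ \ge\ \widehat{\SW}(p^\star)-O\bigl(\sigma\log(d{+}1)\bigr),
\]
where $\widehat{\SW}(p):=\Ex{v}{\hat v_v(\hat x_v(p))}-\langle c,\hat x(p)\rangle$ is the virtual instance's expected welfare and the inequality uses $R\le O(\sigma\log(d{+}1))$.

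It remains to compare benchmarks: we claim $\max_{p\in\RR_+^d:\,\hat x(p)\le s}\widehat{\SW}(p)\ \ge\ \max_{\cD':\,x(\cD')\le s}\SW(\cD')$. Any feasible $\cD'$ induces per-buyer allocations $x'_v:=\Ex{q\sim\cD'}{\mathrm{br}_v(q)}\big|_{1:d}\in\cF$ with $\Ex{v}{x'_v}\le s$ and $\SW(\cD')=\Ex{v}{\langle v,x'_v\rangle}-\langle c,\Ex{v}{x'_v}\rangle$; viewed inside the virtual economy this same allocation has welfare $\Ex{v}{\hat v_v(x'_v)}-\langle c,\Ex{v}{x'_v}\rangle\ge\SW(\cD')$ because $R\ge0$ only adds value, so the best \emph{feasible virtual allocation} has welfare at least $\max_{\cD'}\SW(\cD')$. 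Finally, since the virtual buyers are strongly concave and the only cross-buyer constraint $\Ex{v}{x_v}\le s$ is linear (Slater holds as $s>0$), strong duality and complementary slackness give that this optimal virtual allocation is the demand profile $\{\hat x_v(c+\lambda^\star)\}$ at the posted price $p=c+\lambda^\star$, where $\lambda^\star\ge0$ is the optimal multiplier of the supply constraint; this $p$ is nonnegative (costs are nonnegative), has $\hat x(p)\le s$, and attains $\widehat{\SW}(p)$ equal to the best feasible virtual allocation welfare. Chaining the three displays, $\SW(\cD)\ge\widehat{\SW}(p^\star)-O(\sigma\log(d{+}1))\ge\max_{p:\hat x(p)\le s}\widehat{\SW}(p)-\tfrac{\alpha}{3}-O(\sigma\log(d{+}1))\ge\max_{\cD'}\SW(\cD')-\tfrac{\alpha}{3}-O(\sigma\log(d{+}1))$, which is $\ge\max_{\cD'}\SW(\cD')-\alpha$ once $\sigma$ is a sufficiently small $\poly(\alpha/d)$.

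The round count and running time are those of \Cref{thm:intro-limited-expected} with modulus $\Theta(\sigma)=\poly(\alpha/d)$ and accuracy $\alpha/3$, i.e.\ $\poly(d,1/\alpha,\log(1/\delta))$, plus $O(d)$ per-round overhead. The main obstacle is the workhorse lemma: one must design a \emph{nonnegative}-noise randomization whose induced regularizer $R$ is simultaneously $\Theta(\sigma)$-strongly concave, H\"older continuous up to the boundary of $\cF$ (where entropy-type regularizers are delicate), and of sup-norm $O(\sigma\cdot\poly(d))$ --- the Gumbel noise that gives the cleanest $R$ makes realized prices negative, so a more careful construction is needed --- and verify that the ``perturbed-leader $=$ regularization'' identity holds \emph{exactly}, so that the guarantee of \Cref{thm:intro-limited-expected} transfers without loss. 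The rest is bookkeeping around a black-box call to \Cref{thm:intro-limited-expected}.
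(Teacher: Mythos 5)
Your high-level plan matches the paper's: perturb the nominal price vector so that, in expectation over the perturbation, a unit-demand buyer acts like a buyer with a strongly concave (entropy-regularized) valuation over the simplex; feed the realized integral purchase to the divisible-goods algorithm as an unbiased estimate of the virtual demand; and control the bias via the sup-norm of the regularizer, $O(\eta\log(d{+}1))$. The welfare bookkeeping and benchmark comparison are also the same in substance as the paper's \Cref{adderror}. So the route is essentially the same.

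However, there is a genuine gap precisely where you flag uncertainty. You try to design a \emph{nonnegative} noise distribution whose induced regularizer $R=-\Phi^{*}$ is simultaneously $\Theta(\sigma)$-strongly concave on the simplex, H\"older continuous up to the boundary, nonnegative, and bounded by $O(\sigma\,\poly(d))$; and you concede that this is the "main obstacle." It is not merely a bookkeeping detail: for generic nonnegative noise the induced $\Phi$ is smooth but its conjugate need not be strongly concave up to the boundary (Gumbel is special because the conjugate is exactly the Shannon entropy, and even there one has boundary delicacy), so the construction as described does not obviously work, and would require a new argument. The paper sidesteps this entirely: it \emph{does} use Gumbel noise (which makes realized prices negative) and then applies a deterministic, sample-path-wise translation $\conv$ that shifts all $d{+}1$ coordinates so the outside option has price $0$ and all other prices are nonnegative, and proves (\Cref{trans}) that this translation never changes any buyer's argmax. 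This both keeps the exact entropy identity of \Cref{MWnote} and produces a legitimate distribution over nonnegative price vectors. Your proposal neither has this trick nor a substitute for it.

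A secondary omission: once you regularize, the virtual valuations are no longer monotone (the entropy term decreases near the boundary), so the monotonicity hypothesis underlying the divisible-goods machinery fails. The paper replaces monotonicity with the weaker "supply-saturating" condition (\Cref{rm:sat}) and verifies it for the regularized program on the simplex $\Delta_{d+1}$ (because total mass sums to one, all constraints bind). Your feasible set $\cF=\{x:\|x\|_1\le1\}$ without the dummy coordinate does not immediately give supply-saturation, so this needs to be addressed --- the paper's choice of working over $\Delta_{d+1}$ with an explicit "buy nothing" good is what makes the argument go through cleanly.
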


In the generality that we state our theorem, using a \emph{distribution} over prices rather than a fixed price vector is unavoidable.  {The reason has to do with how the buyers break ties when they are indifferent between goods.} As shown in \cite{tiebreaking}, without further genericity assumptions, it can be that \emph{no} fixed pricing can induce optimal (or even feasible) allocations  {if buyers use uncoordinated tie breaking rules. Instead, tie-breaking needs to be \emph{coordinated} amongst the buyers: the tie-breaking rule needs to be different for different buyers, and essentially needs to be specified by the mechanism.} The randomness in our pricing scheme serves as a coordination mechanism amongst buyers (since each buyer faces a different realization of prices).

\xhdr{Remark.}  In both settings, we prove more general theorems in
which we express our results in terms of a stronger benchmark---the
welfare of the optimal lottery over allocations, without restriction
to those that can be induced by posted pricing.%
\footnote{A posted price vector (resp., distribution over them)
  computed by an algorithm can only hope to compete with the best
  posted price vector (resp., distribution). Thus, the mathematical
  statement behind (ii) is that the posted price benchmarks used in
  the theorems are in fact equivalent to the stronger benchmark.}

The above theorems can be reformulated in terms of cumulative regret for a given time horizon $T$. Then the execution of the algorithm in the respective theorem corresponds to an \emph{exploration phase} of bounded length. The price vector $p$ computed by the algorithm is used in an \emph{exploitation phase} consisting of all subsequent rounds. \Cref{thm:intro-limited-expected} guarantees that $\op$ completes in
    $\poly(d,\log\tfrac{1}{\delta})\cdot \alpha^{-m}$
rounds, for some constant $m$. Expected regret relative to the best fixed price vector can be upper-bounded by 1 for every round of exploration, and $\alpha+\delta$ per round of exploitation. Optimizing the choice of $\alpha$ and $\delta$, we obtain regret
    $\poly(d,\log T)\cdot T^{m/(m+1)}$.
\Cref{thm:intro-unit-demands} implies a similar corollary for the unit-demands setting.

%\Cref{THM:INTRO-LIMITED-SUPPLY} implies a similar corollary about regret relative to the best pricing policy.

\xhdr{Extension to limited supply.}
We extend our results to a limited-supply setting.  In our model, there is a fixed horizon of $T$ rounds and the seller has a non-replenishable supply of $Ts_j$ units of each good $j$.  $T$ and $s \in [0,1]^d$ are known in advance.  Each day, the seller will set prices and a random buyer will purchase their preferred bundle until either the time horizon or the sellers' supply is exhausted, whichever comes first.

For a pricing policy $\pi$, we use $\SWtot(\pi)$ to denote its \ETW.\footnote{Considering expected welfare per round is not enough, as one pricing policy may halt sooner than another.} A ``fixed-vector"  pricing policy uses the same price vector $p$ in all rounds. Likewise, ``fixed-distribution" pricing policy always draws the price vector independently from the same fixed distribution $\cD$. The \ETW of these policies is denoted, resp., $\SWtot(p)$ and $\SWtot(\cD)$.

In the setting of divisible goods, we simply use the algorithm from \Cref{thm:intro-limited-expected} with the same constraint vector $s$. The price vector $p$ computed by this algorithm achieves high \ETW for a given problem instance: we prove that it is nearly optimal compared to the best fixed-vector pricing policy. Further, it is nearly optimal compared to \emph{any} pricing policy.

Likewise, in the setting of unit demands, we use the algorithm from \Cref{thm:intro-unit-demands} with the same constraint vector $s$. The distribution $\cD$ computed by this algorithm is nearly optimal compared to the best fixed-distribution pricing policy with $x(\cD)\leq s$.

\begin{theorem}\label{THM:INTRO-LIMITED-SUPPLY}
Consider dynamic pricing with limited supply. Fix constraint vector $s\in \RR_+^d$ and time horizon $T>32\,\log(T)/s^*$, where $s^*=\min_j s_j$.

\begin{itemize}
\item[(a)] Consider the setting of divisible goods. When the algorithm from \Cref{thm:intro-limited-expected} is given as input $d, s, \alpha, \delta$, with probability $1-\delta$ it outputs a price vector $p\in \RR_+^d$ such that
\begin{align*}%\label{eq:THM:INTRO-LIMITED-SUPPLY}
\SWtot(p) \geq \sup_{\text{pricing policies $\pi$}} \SWtot(\pi) -
    \alpha T - O\left( \sqrt{T\log(T)/s^*} \right).
\end{align*}

\item[(b)] Assume indivisible goods and unit demands.
When the algorithm from \Cref{thm:intro-unit-demands} is given as input $d, s, \alpha, \delta$, with probability $1-\delta$ it outputs distribution $\cD$ over price vectors such that
\begin{align*}
\SWtot(\cD) \geq \sup_{\text{distributions $\cD'$ with $x(\cD')\leq s$}} \SWtot(\cD') -
    \alpha T - O\left( \sqrt{T\log(T)/s^*} \right).
\end{align*}
\end{itemize}
The number of rounds and the total computation time are polynomial in $d$, $\tfrac{1}{\alpha}$ and $\log\tfrac{1}{\delta}$.
\end{theorem}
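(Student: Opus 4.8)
I would derive both parts from \Cref{thm:intro-limited-expected} and \Cref{thm:intro-unit-demands} by pairing a benchmark upper bound with an estimate of when the process halts. Write $W^\star=\sup\{\SW(\cD'):x(\cD')\le s\}$; by the Remark this equals the welfare of the best lottery over allocations with expected consumption $\le s$, and $0\le W^\star\le\vmax$ since ``buy nothing'' is always available. Let $\tau\le T$ denote the round in which the process stops (supply exhausted, or $T$), so $\SWtot(\pi)=\ex{\sum_{t\le\tau}w_t}$ with $w_t$ the realized welfare in round $t$.

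\emph{A key identity.} For any policy that posts, in every round, prices drawn from one fixed distribution $\cD'$ (in particular a fixed vector $p$): since buyers are i.i.d.\ and $\{\tau\ge t\}$ is determined by the first $t-1$ buyers, conditioning on $\{\tau\ge t\}$ and summing over $t$ yields $\SWtot(\cD')=\SW(\cD')\cdot\ex{\tau}$. Hence (i) $\SWtot(\cD')\le\max\{0,\SW(\cD')\}\cdot T\le W^\star T$, which bounds the benchmark in part~(b); and (ii) the algorithm's output $p$ (resp.\ $\cD$) satisfies $\SWtot(p)=\SW(p)\,\ex{\tau}$ (resp.\ $\SWtot(\cD)=\SW(\cD)\,\ex{\tau}$). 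For part~(a) I must also bound an \emph{arbitrary adaptive} policy $\pi$: replace early stopping by ``post infinite prices through round $T$'' (changing neither $\SWtot(\pi)$ nor the a.s.\ bound $\sum_{t\le T}x_t\le Ts$ on realized consumption), then build the randomized allocation rule $\bar G$ that picks a uniform $t\in[T]$, runs $\pi$ for $t-1$ rounds, draws $p_t$ from $\pi$'s conditional distribution, and allocates the $p_t$-utility-maximizing bundle. A short computation gives expected bundle $\ex{\bar G(v)}=\tfrac1T\ex{\sum_{t\le T}x_t}\le s$ and $\mathrm{welfare}(\bar G)=\tfrac1T\SWtot(\pi)$; since $\bar G$ is a lottery over allocations respecting the expected-supply constraint, $\tfrac1T\SWtot(\pi)\le W^\star$.

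\emph{The process runs almost the whole horizon.} Now run the computed $p$ (resp.\ $\cD$), which has $x(p)\le s$ (resp.\ $x(\cD)\le s$). For each good $j$, the cumulative consumption $S_{j,t}$ after $t$ rounds is a sum of i.i.d., $O(1)$-bounded terms of mean $x_j(p)\le s_j$ and variance $O(s_j)$, and $j$ halts the process only once $S_{j,t}>Ts_j$, i.e.\ once the centered walk $S_{j,t}-ts_j$ exceeds $(T-t)s_j$. A maximal (Bernstein/Kolmogorov) inequality for this walk (legitimate because $T>32\log(T)/s^*$ puts us in the variance-dominated regime), together with a union bound over the $d$ goods, gives $\prob{\tau\le T-k}=O\!\big(dT/(k^2 s^*)\big)$ up to a $\log T$ factor, and summing this tail over $k\ge 1$ yields $\ex{T-\tau}=O\!\big(\sqrt{T\log(T)/s^*}\big)$.

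\emph{Assembling.} By (ii), if $\SW(p)\ge0$ then $\SWtot(p)=\SW(p)\,\ex{\tau}\ge\SW(p)\,T-\vmax\,\ex{T-\tau}$, and if $\SW(p)<0$ then $\SWtot(p)=\SW(p)\,\ex{\tau}\ge\SW(p)\,T$; in both cases $\SWtot(p)\ge\SW(p)\,T-O(\sqrt{T\log(T)/s^*})$. \Cref{thm:intro-limited-expected} gives $\SW(p)\ge W^\star-\alpha$ with probability $1-\delta$, and the benchmark bounds above replace $W^\star T$ by the stated supremum, proving~(a); part~(b) is the same argument with $\cD$, \Cref{thm:intro-unit-demands}, and consequence~(i) in place of the averaging step. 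The delicate ingredient is the overshoot estimate of the third paragraph: a plain Chernoff bound only shows $S_{j,T}\lesssim Ts_j(1+\eps)$ and cannot prevent a fully loaded good ($x_j(p)=s_j$) from halting the process $\Theta(\sqrt{T/s_j})$ rounds early, so a genuine first-passage argument is needed, and $T>32\log(T)/s^*$ is exactly what makes the loss $o(T)$. An alternative that avoids this step is to invoke \Cref{thm:intro-limited-expected}/\Cref{thm:intro-unit-demands} with shrunk supply $(1-\eps)s$, $\eps=\Theta(\sqrt{\log(T)/(Ts^*)})$, so that stopping before round $T$ is exponentially unlikely; mixing any feasible lottery with ``buy nothing'' shows $W^\star$ drops by only $O(\eps\vmax)$, giving the same bound.
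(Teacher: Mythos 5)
Your proposal is correct in its high-level structure but takes a genuinely different route from the paper for the two key ingredients. Where the paper invokes a structural lemma from the Bandits-with-Knapsacks framework (their Lemma~\ref{lm:BwK-LP}, applied to a non-standard ``arm-bundle'' embedding) to show that the best \bunPolicy{} is dominated by a fixed bundle, you prove the benchmark upper bound directly by averaging an arbitrary adaptive policy over a uniform time index to manufacture a feasible lottery; and where the paper proves \Cref{lm:limited-deviations} by a ``clean execution'' Chernoff argument that bounds the realized total welfare $Z_\tau$ directly, you first observe the clean Wald-type identity $\SWtot(\cD')=\SW(\cD')\cdot\ex{\tau}$ for any fixed-distribution policy (valid because $\{\tau\geq t\}$ is $\cF_{t-1}$-measurable and round $t$'s outcome is independent of $\cF_{t-1}$), which reduces everything to bounding $\ex{T-\tau}$. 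Both the averaging argument and the Wald identity are correct and arguably more elementary than what the paper does; they buy you a self-contained proof of the benchmark inequality rather than an appeal to \cite{BwK-focs13}.

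Two steps are not fully tight as written, though both are repairable. First, in the tail-sum for $\ex{T-\tau}$: the Kolmogorov-style bound $\Pr[\tau\leq T-k]=O(dT/(k^2 s^*))$ that you state, when summed over $k$ (splitting at $k_0\asymp\sqrt{dT\log T/s^*}$), gives $\ex{T-\tau}=O(\sqrt{dT\log T/s^*})$ --- a spurious $\sqrt{d}$ compared to the claimed $O(\sqrt{T\log T/s^*})$. You need the exponential (Bernstein/Chernoff) alternative you mention in passing: that gives $\Pr[\tau\leq T-k]\leq d\exp(-\Theta(k^2 s^*/T))$, and truncating at $k_0\asymp\sqrt{T\log(dT)/s^*}$ makes the tail sum $O(1)$, giving the right rate with $d$ appearing only inside a logarithm. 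The paper's clean-execution argument implicitly makes the same move via its union bound over $d$ goods and $T$ rounds with $c_0=\sqrt{8\log T}$. Second, your averaging construction $\bar G$ claims $\sum_{t\leq T}x_t\leq Ts$ almost surely, but with the paper's halting rule the final round can overshoot by up to one unit per good, so $\ex{\bar G(v)}\leq s+\tfrac1T\vec 1$ rather than $\leq s$. This $O(1/T)$ slack is harmless (one can scale the lottery down by a factor $1-\Theta(1/(Ts^*))$ and absorb the loss into the $O(\sqrt{T\log T/s^*})$ term, as in the mix-with-buy-nothing trick you use in your alternative argument), but it should be said rather than asserted away. Your alternative proposal via shrunk supply is also viable and arguably cleaner for handling the stopping time, but note that for part~(a) it does not obviate the averaging argument, since you still need to dominate the supremum over arbitrary adaptive policies.
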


\subsection{Our Techniques}
Our general results for divisible goods build on a crucial structural property: even though the expected  welfare of the induced bundle $x(p)$ is not concave in the price vector $p$, it becomes concave if we treat the bundle itself as the decision variable. We illustrate this via a simple 1-dimensional example, adapted from~\cite{RUW16}:
\begin{example}\label{toy}
There is a single good ($d = 1$), and a single buyer with valuation $v(x) = \sqrt{x}$. Тhe seller's cost function is $c(x) = x$. If price $p$ is posted, the buyer's utility for $x$ units is $\sqrt{x} - p\cdot x$, so she would purchase
    $x^*(p) = \frac{1}{4p^2}$
units of the good. Consequently, the welfare is
    $$  \SW(p) = v(x^*(p)) - c(x^*(p)) = \frac{1}{2p} - \frac{1}{4p^2}.$$
Note that welfare is not a concave function of the price. However, if we write the welfare as a function of $x=x^*(p)$, the purchased amount of good, this function is concave:
   $$\SW(x) = v(x) - c(x) = \sqrt{x} - x.$$
\end{example}

Thus, we would like to optimize expected welfare as a function of the induced bundle. However, we only control prices and not induced bundles. To address this, our algorithm has two ``layers," where the outer layer optimizes over induced bundles, and the inner layer finds a price vector which approximately induces a given bundle. Another challenge is that welfare is not observed, since we do not observe buyer valuations. Instead, we find a way to approximate the \emph{subgradients} of welfare, and use noise-tolerant subgradient descent to optimize over the bundles.

We build on and extend the result of \cite{RUW16} for the special case of a single buyer and unlimited supply (which focuses on profit rather than welfare). The main distinction is single buyer vs. distributions over buyers; in other words, \cite{RUW16} assume that for a given price vector the outcome is deterministic, whereas in our paper it is drawn from a fixed but unknown distribution over the possible outcomes. 

The ``inner layer" of our algorithm extends the algorithm in \cite{RUW16} from  a single buyer to distributions over buyers.  This extension presents several technical challenges, and answers one of their main open questions. In particular, we analyze a generalization of the convex programming technique used in \cite{RUW16} to accommodate a distribution over (arbitrarily many) buyers. We cannot use the ``outer layer'' from \cite{RUW16} because it requires direct observations of the objective function to feed into a procedure for zeroth-order optimization, and our seller cannot directly observe the buyers' welfare (unlike profit, which is observable). Instead, we develop a new technique to obtain the subgradient for the welfare function so as to enable first-order optimization. Also, we remove a major assumption of homogenous buyer valuations.

As stated, our general result does not apply to unit demand buyers over indivisible goods. In order to cast this problem as a divisible goods problem, we view buyers as having \emph{linear} valuations over divisible goods, optimizing over the set of bundles that have at most unit $\ell_1$ norm. The bundle that maximizes a linear function is always at a vertex of the feasible region, and hence is integral.  That is, it is the bundle purchased by a unit-demand buyer in the indivisible goods setting. However, there is a substantial difficulty: our general technique relies on buyer valuation functions being \emph{strongly} concave, a condition not satisfied by linear functions.  A standard way to obtain strong convexity in the convex optimization literature is to add a strongly convex \emph{regularizer} to the objective function.  However, we do not get to modify the buyer's objective function in this way.  Instead, we perturb the price vectors proposed by our general dynamic pricing algorithm with Gumbel noise. Doing so has the property that the expected bundle purchased by each buyer (where expectation is taken over the price perturbation) is the bundle that maximizes the buyer's linear valuation function, plus an entropy regularization term~\citep{MWnote}. Thus, in expectation over our perturbations, we can view buyers as optimizing valuation functions which are strongly concave over the $\ell_1$ norm---even though for every fixed perturbation, buyers are maximizing some linear function and thus buy a unit bundle of indivisible goods. By perturbing the price vectors used over the run of our algorithm for divisible goods, therefore, we can optimize welfare over these ``regularized'' buyers. By reducing the noise rate (and hence the implicit regularization parameter), we approach the optimal welfare of the actual, unit-demand buyers.

The extension to limited supply (\Cref{THM:INTRO-LIMITED-SUPPLY}(a)) relies on a structural result about \emph{bandits with knapsacks} \citep{BwK-focs13}, a general framework of which dynamic pricing with limited supply is a special case. We use a non-standard ``embedding" of dynamic pricing into this framework, and  a concentration inequality for total welfare that requires a somewhat delicate proof.

%We then solve the problem we have reduced to: given a target bundle
%$x$, we show how, in polynomially many rounds, to find a pricing
%$p^*(x)$ that induces a bundle that in expectation (over the
%realization of the buyer valuation functions) is close to $x$. This
%step requires the assumption that buyer valuation functions are
%strongly concave, Holder continuous, and have bounded
%derivatives. Here, we build heavily on an algorithm of \cite{RUW16}
%that solves the analogous problem for a \emph{single} buyer. Extending
%the algorithm to work for distributions over buyers requires solving
%several technical challenges, and resolves one of the main open
%questions from \cite{RUW16}.

%Finally, we show how to extend our algorithm to the case in which
%supply is limited and cannot be replenished: in this setting, once the
%supply of any good has been exhausted, the algorithm must halt (even
%if it has not yet reached its time horizon $T$) and cannot accrue any
%additional welfare. We also show that under an additional technical
%assumption --- that valuation functions are \emph{homogeneous} (which
%is a strong assumption, but one satisfied by CES and Cobb-Douglas
%utility functions) --- welfare and profit have a linear relationship
%to one another, hence our results extend to profit maximization.

\subsection{Related Work}

Our setting is related to several lines of work. First, dynamic pricing, a.k.a. \emph{learn-and-earn}, focuses on a seller with a large inventory of each good, facing a stream of buyers with unknown valuations. This is a large line of work, mainly in operations research --- see \cite{Boer-survey15} for a review. Most related are non-Bayesian approaches. As mentioned above, the main distinction is that we make assumptions on the buyer valuations rather than on the price response function. Also, the learn-and-earn literature does not consider welfare-optimization, to the best of our knowledge.

Second, our problem can be viewed as an instance of the \emph{multi-armed bandits} problem \citep{Gittins-book,Bubeck-survey12}, a well-studied abstract framework in which an algorithm repeatedly chooses actions (e.g., price vectors) and receives rewards (e.g., revenue from a sale). The main issue is the tension between acquisition and usage of information, a.k.a. the \emph{exploration-exploitation tradeoff}. Bandit algorithms are directly applicable to dynamic pricing either via discretization \citep{Bobby-focs03,DynPricing-ec12,BwK-focs13} or via assumptions on expected revenue.\footnote{E.g., if expected revenue is concave in prices, one can apply bandit algorithms for concave rewards \citep{FlaxmanKM-soda05,AgarwalFHKR-nips11,Hazan-nips14,Bubeck-colt15}.}
The main distinction is (again) that solutions to bandit problems tend to make assumptions directly on the rewards, in part because they do not model the finer structure behind the rewards (such as valuation functions).

Third, there are several papers on welfare-optimizing posted pricing in combinatorial auctions \citep{Balcan-ec08,Blum-focs11,Tanmoy-sicomp13,Feldman-soda15}. These papers tackle more difficult scenarios with non-divisible goods and non-IID valuations, and accordingly obtain weaker, multiplicative guarantees. Also, the pricing is either static \citep{Balcan-ec08,Feldman-soda15} (not changing over time), or changing over time but not adapting to the observed purchases
\citep{Tanmoy-sicomp13,Blum-focs11}. This research is mainly motivated by connections to mechanism design for combinatorial auctions.

Fourth, there is a large literature on \emph{revealed preferences}, starting from Samuelson \citep{sam}, see \cite{mwg,Rub12,varian} for background. Most work in economics has focused on the construction of utility functions that explain or \emph{rationalize} a given sequence of price/bundle observations, e.g. \cite{afriat}.
A recent literature studies the problem of predicting purchase decisions given past observations at different price points \citep{BV06,ZR12,BDMUV14}. More related to our paper is \cite{ACDKR15} who study the problem of iteratively setting prices to maximize the profit obtained from a single budgeted buyer (who repeatedly makes purchase decisions) with a linear utility function. The most related paper in this line is \cite{RUW16}, as discussed in the previous subsection. Ours is the first paper in this line of work able to handle indivisible goods. A related, but distinct literature focuses on \emph{learning valuation functions} from example evaluations of those functions \citep{BCIW12}, rather than from example \emph{maximizations} of those functions (as in the revealed preference literature).

\subsection{Map of the Paper}

\Cref{MAINSEC} contains the our general result for divisible goods (a
generalization of Theorem~\ref{thm:intro-limited-expected}). Our
application to unit demand valuations over indivisible goods, and the
perturbation techniques that go into deriving this result are
presented in \Cref{SEC:UNIT-DEMAND}. \ifarxiv The limited supply
setting is treated in \Cref{sec:limited-supply}.  We conclude in
\Cref{sec:conclusions}.  To improve the flow of the paper, some
details are deferred to the appendix.\else The limited supply setting
and some technical details are deferred to the full version.\fi
%%% Local Variables:
%%% mode: latex
%%% TeX-master: "main.tex"
%%% End:

\section{Model and Preliminaries}
\label{sec:model}
There is a~\emph{seller} selling $d$ different types of goods to a
sequence of \emph{buyers} arriving one after another in rounds.  Each
buyer's valuation $v$ is drawn independently from an unknown
distribution $\psi$ over a finite class $\cV$ of valuation functions
over the goods, where $|\cV| =n$.%
\footnote{We take $\cV$ to be finite only for convenience. Our results
  do not depend on $n=|\cV|$, so it can be arbitrarily large.}
\footnote{Throughout, $\RR_+ = \{x \in \RR \mid x \geq 0\}$ and
  $\RR_{>0} = \{x \in \RR \mid x>0\}$ denote non-negative reals and
  positive reals, resp.  }  Both $\psi$ and $\cV$ are unknown to the
seller. Throughout, we will use $i$ to index the buyer's types in
$\cV$, and write $v_i$ for the valuation function for a buyer of type
$i$, and $\psi(v_i)$ for the probability mass on buyers of type $i$.

At each round $t$, the seller posts a price vector $p=p^t\in \RR_+^d$,
and the $t$-th buyer with valuation $v\sim \psi$ makes a purchase to
maximize his utility under these prices. In particular, we consider
two different settings: one with divisible goods and the other with
indivisible goods.

\paragraph{Divisible goods}{Each valuation
  $v\colon \RR_{+}^d \rightarrow \RR_+$ is a function from
  (fractional) bundles of goods to values. Under prices $p$, the buyer
  with valuation $v$ will purchase the utility-maximizing bundle
\[
  x^*_v(p) \equiv \argmax_{x\in\cF} [v(x) - \langle x, p \rangle]
\]
where $\cF\subset \RR_{+}^d$ denotes the set of feasible bundles
available for purchase.}

\paragraph{Indivisible goods}{We consider unit-demand buyers, who will either purchase exactly 1 unit of some good or nothing. Each
  buyer's valuation is defined by a value vector $v\in \RR_{>0}^d$
  such that $v_j$ denotes her value for 1 unit of the $j$-th good. At
  round t, a buyer with valuation $v$ purchases
\[
  x^*_v(p) \equiv \argmax_{j\in [d]\cup \{\perp\}} \left[v_{j} -
    p_j\right].
\]
where $\perp$ denotes the choice of buying nothing (we define $v_{\perp} = p_{\perp} = 0$), and we
allow arbitrary tie-breaking rules.}

The seller has a \emph{(known) cost vector} $c\in \RR_+^d$ such
that the cost of producing a unit of good $j$ is $c_j$. The seller
wishes to set prices so as to optimize the expected \emph{social
  welfare} --- the expected valuation of the buyer's purchased bundle
or item minus its production cost.  In particular, if the seller posts
a price vector $p$ over the goods, the expected social welfare is
\begin{equation} \label{swel}
\SW(p) = \Ex{v\sim \psi}{v(x_v^*(p)) - c(x^*_v(p))}.
\end{equation}
where we write $c(x^*_v(p))$ to denote the production cost for the
purchase $x^*_v(p)$. For any distribution $\cD$ over prices, the
expected welfare is defined as $\SW(\cD) = \Ex{p\sim\cD}{\SW(p)}$.

 \newcommand{\rep}{\term{ReP}}

\paragraph{Computational model}{We will think of the algorithm as having
  access to a \emph{revealed preference} oracle $\rep(\psi)$: given
  any input price vector $p\in \RR_+^d$, it will draw a random
  valuation $v$ from $\psi$, and return the purchase decision
  $x^*_v(p)$. Our goal is to design computationally efficient
  algorithms to compute optimal prices using only polynomially many
  queries to $\rep$.  Notably, the expected or realized social welfare
  is not observable to the algorithm, since it cannot observe
  $v(x_v^*(p))$.}

\subsection{Noisy Subgradient Descent}
A key ingredient in our algorithms is the ability to minimize a convex
function (or maximize a concave function), given access only to noisy
sub-gradients of the function.  We accomplish this using the gradient
descent algorithm. Below we recap some necessary background.

Let $C \subseteq \RR^d$ be a compact and convex set of diameter at
most $D$ (w.r.t. $\ell_2$ norm).  A~\emph{subgradient} of a function
$f:C\to \R$ at point $x\in C$ is any vector $g\in C$ that satisfies
the inequality $f(y) \geq f(x) + \langle g, y - x\rangle$ for any
point $y\in C$. The set of all subgradients at $x$ is denoted
$\partial f(x)$. If $f$ is differentiable, the only
subgradient $g$ is the gradient $\nabla f(x)$.

The basic subgradient descent method is an iterative algorithm that
starts at some point $x_1\in C$ and iterates the following equations
\begin{align*}
  y_{t+1} = x_t -\eta\, g_t,% \label{grad}\\
\qquad \mbox{and}\qquad  x_{t+1} = \Pi_C (y_{t+1}) \label{projection}
\end{align*}
where $\eta$ is the learning rate and $g_t \in \partial f(x_t)$ is a
subgradient of the function $f$ at point $x_t$, and
$\Pi_C (x) = \argmin_{y\in C} \|x-y\|$ denote the projection operator
onto $C$.

Now, we will assume that $g_t$ and/or $x_t$ are subject to noise. We
will use two variants of the algorithm, which operate under two
different models of noise. In the first model, the algorithm only has
access to \emph{unbiased estimates} of the subgradient.
%at each point of the loss function.
%, that is $\Ex{}{g_t} \in \partial c(x_t)$.

\begin{theorem}[\cite{Z03}]\label{hellyea}
  Suppose that $f$ is convex, and for some constant $D, G$, the
  estimates of the subgradients satisfy
  $\Ex{}{g_t} \in \partial f(x_t)$ and ${\|g_t\|} \leq G$ for all
  steps $t$, and the diameter of the set satisfies $\|C\| \leq
  D$. Then if we run the subgradient descent method with step size
  $\eta = D/(G\sqrt{T})$, then for any $T$ and any initial point
  $x_1\in C$, the point $z = \frac{1}{T} \sum_{t=1}^T x_t$ satisfies
\iffull  \[\else$\fi
    \Ex{}{f(z)} \leq \min_{x\in C} f(x) + 2DG/\sqrt{T}.
\iffull  \]\else$\fi
\end{theorem}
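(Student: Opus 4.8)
The plan is to run the standard potential-function argument for projected stochastic subgradient descent, tracking the squared distance to a minimizer. Fix $x^\star \in \argmin_{x\in C} f(x)$ and consider $\|x_t - x^\star\|^2$. Since $\Pi_C$ is a non-expansion onto the convex set $C$ and $x^\star \in C$, one step satisfies
\[
\|x_{t+1}-x^\star\|^2 \le \|y_{t+1}-x^\star\|^2 = \|x_t - x^\star\|^2 - 2\eta\langle g_t, x_t - x^\star\rangle + \eta^2\|g_t\|^2,
\]
which rearranges to $\langle g_t, x_t-x^\star\rangle \le \tfrac{1}{2\eta}\left(\|x_t-x^\star\|^2 - \|x_{t+1}-x^\star\|^2\right) + \tfrac{\eta}{2}\|g_t\|^2$.

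Next I would take expectations using the noise model. Let $\mathcal F_{t-1}$ be the $\sigma$-algebra generated by $g_1,\dots,g_{t-1}$, so $x_t$ is $\mathcal F_{t-1}$-measurable, and read the hypothesis $\Ex{}{g_t}\in\partial f(x_t)$ in the conditional sense $\Ex{}{g_t \mid \mathcal F_{t-1}}\in\partial f(x_t)$. Then $\Ex{}{\langle g_t, x_t-x^\star\rangle \mid \mathcal F_{t-1}} = \langle \Ex{}{g_t\mid\mathcal F_{t-1}}, x_t-x^\star\rangle \ge f(x_t) - f(x^\star)$ by the subgradient inequality applied pointwise. Plugging in $\|g_t\|\le G$, taking total expectations, and summing over $t=1,\dots,T$, the distance terms telescope and $\|x_1-x^\star\|\le D$ gives
\[
\sum_{t=1}^T \Ex{}{f(x_t)-f(x^\star)} \le \frac{D^2}{2\eta} + \frac{\eta G^2 T}{2}.
\]
Finally, since $f$ is convex and $z = \tfrac1T\sum_t x_t$ is a convex combination of points of $C$, Jensen's inequality yields $f(z)\le \tfrac1T\sum_t f(x_t)$, hence $\Ex{}{f(z)} - \min_{x\in C} f(x) \le \tfrac{D^2}{2\eta T} + \tfrac{\eta G^2}{2}$; substituting $\eta = D/(G\sqrt T)$ balances the two terms at $DG/(2\sqrt T)$ each, for a total of $DG/\sqrt T \le 2DG/\sqrt T$.

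The only real subtlety — and the step I would be most careful about — is the filtration/measurability bookkeeping that lets me pull $x_t - x^\star$ out of the conditional expectation of $g_t$ and invoke the defining subgradient inequality pointwise; the rest is the textbook telescoping computation. I would also remark that $z\in C$, so comparing against $\min_{x\in C} f(x)$ is the appropriate benchmark, and that the argument uses only the first moment and a uniform norm bound on the noisy subgradients, with no independence assumption beyond what the conditional-unbiasedness hypothesis already encodes.
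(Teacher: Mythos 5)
Your argument is correct and is the standard Zinkevich-style analysis of projected stochastic subgradient descent. The paper does not actually prove \Cref{hellyea} --- it is cited to \cite{Z03} --- but your proof is exactly the canonical one, and in fact it parallels line-for-line the paper's self-contained proof of the companion result \Cref{noisyman} (in \Cref{app:noisyman}): the same non-expansiveness of $\Pi_C$, the same expansion of $\|y_{t+1}-x^\star\|^2$, the same rearrangement and telescoping sum, the same Jensen step at the end. The only difference is where the noise lives: in \Cref{noisyman} the subgradients are exact but the iterates are adversarially perturbed, producing an extra $DE/\eta$ term per step, whereas here the subgradients are noisy but conditionally unbiased, so the cross term is controlled in expectation rather than deterministically --- precisely the filtration step you flag as the subtle one. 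Your reading of the hypothesis $\Ex{}{g_t}\in\partial f(x_t)$ as conditional on $\mathcal F_{t-1}$ is the right (indeed the only sensible) interpretation, since $x_t$ is random, and it is exactly what is needed to move $x_t-x^\star$ outside the conditional expectation. Your computation actually yields $DG/\sqrt T$, a factor of $2$ better than the stated bound, which is consistent since the theorem only claims $2DG/\sqrt T$.
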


In the second model, the algorithm has access to
the noiseless subgradients,
%of the underlying loss function,
but the points $x^t$ are adversarially perturbed after the projection.

%--- that is instead of selecting the point $x_{t+1} = \Pi_C (y_{t+1})$
%(as in~\cref{projection}), the algorithm will end up choosing the
%point $x_{t+1} = \Pi_C (y_{t+1}) + \xi_t$ for some bounded noise
%vector $\xi_t\in \RR^d$. \iffull\else(See appendix for a proof.)\fi

\begin{restatable}{theorem}{noisyman}\label{noisyman}
  Suppose that $f$ is convex, fix constants $D, E$, and $G$. Suppose
  that the gradient descent algorithm performs the following update in
  each iteration
  \[
    y_{t+1} = x_t - \eta \, g_t \qquad \mbox{and} \qquad
    x_{t+1} = \Pi_C (y_{t+1}) + \xi_t
  \]
  such that $g_t\in \partial f(x_t)$ and $\xi_t\in \RR^d$ is a noise
  vector. Suppose that $\|g_t\| \leq G$ and $\|\xi_t\|\leq E$,
  $x_t\in C$ for all steps $t$, and the diameter of the set satisfies
  $\|C\| \leq D$. Then if we run the subgradient descent method with
  step size $\eta = D/(G\sqrt{T})$, for any $T$ and any initial point
  $x_1\in C$, the point $z = \frac{1}{T} \sum_{t=1}^T x_t$ satisfies
\iffull  \[\else$\fi
    f(z) \leq \min_{x\in C} f(x) + DG/\sqrt{T} + GE\sqrt{T}.
\iffull  \]\else$\fi
\end{restatable}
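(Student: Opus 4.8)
The plan is to follow the textbook analysis of projected subgradient descent underlying \Cref{hellyea}, carrying the post-projection perturbation $\xi_t$ through as an additive error term. First fix $x^* \in \argmin_{x \in C} f(x)$, which exists since $C$ is compact. By the subgradient inequality, $f(x_t) - f(x^*) \le \langle g_t,\, x_t - x^* \rangle$ for every $t$; combined with convexity of $f$ (and $z = \tfrac1T\sum_t x_t \in C$) this gives
\[
f(z) \;\le\; \tfrac1T \sum_{t=1}^T f(x_t) \;\le\; f(x^*) + \tfrac1T \sum_{t=1}^T \langle g_t,\, x_t - x^* \rangle ,
\]
so it suffices to bound the averaged linearized regret $\tfrac1T \sum_t \langle g_t,\, x_t - x^* \rangle$.

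To do that, I would write $\hat x_{t+1} := \Pi_C(y_{t+1})$ for the projected point \emph{before} the perturbation, so that $x_{t+1} = \hat x_{t+1} + \xi_t$. Since $x^* \in C$ and projection onto a convex set is non-expansive, $\|\hat x_{t+1} - x^*\| \le \|y_{t+1} - x^*\|$, and expanding $y_{t+1} = x_t - \eta g_t$ gives $\|\hat x_{t+1} - x^*\|^2 \le \|x_t - x^*\|^2 - 2\eta \langle g_t,\, x_t - x^* \rangle + \eta^2 \|g_t\|^2$. On the other hand, $\hat x_{t+1} = x_{t+1} - \xi_t$ yields $\|\hat x_{t+1} - x^*\|^2 \ge \|x_{t+1} - x^*\|^2 - 2\langle \xi_t,\, x_{t+1} - x^* \rangle$. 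Combining and rearranging produces the per-step inequality
\[
2\eta \, \langle g_t,\, x_t - x^* \rangle \;\le\; \|x_t - x^*\|^2 - \|x_{t+1} - x^*\|^2 + 2 \langle \xi_t,\, x_{t+1} - x^* \rangle + \eta^2 G^2 .
\]

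Summing this over $t = 1, \dots, T$: the first two terms telescope to at most $\|x_1 - x^*\|^2 \le D^2$ (using $x_1 \in C$ and $\|C\| \le D$); each cross term is at most $\|\xi_t\|\,\|x_{t+1} - x^*\| \le ED$ by Cauchy--Schwarz together with $\|\xi_t\| \le E$ and $x_{t+1} \in C$; and the gradient term sums to at most $T\eta^2 G^2$. Hence $2\eta \sum_t \langle g_t,\, x_t - x^* \rangle \le D^2 + 2TED + T\eta^2 G^2$. Dividing by $2\eta T$ and substituting $\eta = D/(G\sqrt T)$ collapses the right-hand side to $\frac{DG}{2\sqrt T} + GE\sqrt T + \frac{DG}{2\sqrt T} = \frac{DG}{\sqrt T} + GE\sqrt T$; combined with the convexity step this gives exactly the claimed bound on $f(z)$.

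The only place this departs from the standard argument is the perturbation: because $\xi_t$ is injected \emph{after} the projection, it persists into the next iterate and creates the cross term $2\langle \xi_t,\, x_{t+1} - x^* \rangle$, which is precisely the source of the $GE\sqrt T$ loss, and controlling it relies on the standing hypothesis that the perturbed iterates stay in $C$ (so $\|x_{t+1} - x^*\| \le D$). I do not anticipate a real obstacle beyond this bookkeeping. It is worth flagging that, unlike \Cref{hellyea}, this bound is deterministic --- the $g_t$ and $\xi_t$ may be chosen adversarially and adaptively --- and indeed the argument never uses independence or unbiasedness, so it applies verbatim in that setting.
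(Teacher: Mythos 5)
Your proof is correct and follows essentially the same route as the paper's: introduce the pre-perturbation projection $\hat x_{t+1}=\Pi_C(y_{t+1})$, use non-expansiveness of projection to pass from $\|y_{t+1}-x^*\|$ to $\|\hat x_{t+1}-x^*\|$, expand $\|\hat x_{t+1}-x^*\|^2 \ge \|x_{t+1}-x^*\|^2 - 2\langle \xi_t, x_{t+1}-x^*\rangle$, bound the cross term by $ED$ via Cauchy--Schwarz and the diameter bound, telescope, and substitute $\eta = D/(G\sqrt{T})$. The only cosmetic difference is that the paper applies the subgradient inequality per step before summing, whereas you sum the linearized regret first and then invoke convexity of $f$ and Jensen at the end; these are equivalent.
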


The proof is similar to the standard analysis of gradient descent (e.g., see Theorem 3.1 in \cite{bubeck-convexOpt-survey15}). For the sake of completeness, we provide a self-contained proof of this result in \ifarxiv\Cref{app:noisyman}\else the full version\fi.

%%% Local Variables:
%%% mode: latex
%%% TeX-master: "main"
%%% End:

\section{A General Algorithm in the Divisible Goods Setting}
\label{MAINSEC}
This section is dedicated to the divisible good setting: we give a
computationally efficient algorithm for finding a price vector that
approximately optimizes social welfare subject to the constraint that
the expected per-round demand of each good $j$ is no more than
$s_j$. Specifically, let $x^*_\psi(p)$ denote the expected bundle
purchased by a random buyer under the prices $p$ (or the \emph{induced
  bundle} by $p$), that is
$$
 x^*_\psi(p) = \Ex{v\sim\psi}{x^*_v(p)}.
$$
Given access to the revealed preference oracle $\rep$, the algorithm
finds an approximately optimal price vector $p$ using polynomially
queries to $\rep$ and guarantees that $x^*_\psi(p)\leq s$.  Our algorithm
consists of two layers, and we present it in three main steps.
% Formally, we prove a general result (\Cref{captain}) which implies
% \Cref{thm:intro-limited-expected}.

\begin{enumerate}
\item First, we analyze a pertinent convex program and derive several
  structural results. In particular, we show that the expected social
  welfare can be expressed as a concave function of the induced
  bundle.
\item Next, we present the inner layer of the algorithm: given any
  target bundle $\hat x$, we can iteratively find price vectors $p^t$
  such that the induced bundle $x_\psi^*(p^t)$ converges to $\hat x$ over
  time.
%That is, in each time period $s$ we can set the prices $p^s$ and
% observe the purchased bundle $x^*_v(p^s)$ by a random buyer with
% valuation drawn from distribution $\psi$, and after a polynomial
% number of rounds $S$, we are guaranteed to find prices $\hat p = p^S$
% such that the expected bundle $x^*_\psi(\hat p) \approx \hat x$.
\item Finally, we show how to derive subgradients of the expected
  social welfare function from information available. The outer layer
  of the algorithm will then use (noisy) subgradient descent to
  optimize the welfare function over the bundle space.
\end{enumerate}

We make the following assumptions on the feasible set $\cF$ and each
valuation function $v\in\cV$.

\begin{assumption}[Feasible set]\label{assf}
  We assume that $\cF$ is convex, closed, has a non-empty
  interior\ifarxiv\footnote{See~\Cref{def:strictdude} for a formal
    definition.} and bounded norm: $\|\cF\|_2 \leq R$ for some
  parameter $R$\fi.%
  \footnote{For a set $C\subset \RR^d$ and a norm $\| \cdot \|$, we
    write $\| C \| = \sup_{x \in C} \|x\|$.  When the norm is
    unspecified, it is assumed to be $\ell_2$.}  A canonical example
  is $\cF = [0,1]^d$: each buyer can simultaneously buy up to one unit
  of each good.

\end{assumption}
\begin{assumption}[Valuations]\label{assv}
  Each valuation function $v$ in $\cV$ satisfies:
\begin{enumerate}
\item $v$ is monotonically increasing in each coordinate. (This can be
  relaxed to~\Cref{rm:sat}.)
\item \label{flame} $v$ is $(\lambda, \beta)$-H\"{o}lder continuous
  with respect to the $\ell_1$ norm over $\cF$, for some 
  $\lambda \geq 1$ and some absolute constant $\beta\in (0, 1]$.  Namely:
  $|v(x) - v(x') | \leq \lambda\cdot \|x - x'\|_1^\beta$ for all
  $x, x'\in \cF$.
\item \label{burning} $v$ is $\sigma$-strongly concave over $\cF$ ---
  for all $x, x'\in \cF$,
  $ v(x') \leq v(x) + \langle \nabla v(x), x' - x\rangle -
  (\sigma/2)\cdot \|x - x'\|_2^2$.
\end{enumerate}
\end{assumption}
These assumptions on the valuations are satisfied by a large class of
well-studied valuation functions, including~\emph{Constant Elasticity
  of Substitution (CES)} and~\emph{Cobb-Douglas} ({See~\cite{RUW16}
   for a proof}). We crucially rely on a use property of strongly
concave functions: any point in the domain that is close to the
minimum in objective value is also close to the minimum in Euclidean
distance\ifarxiv (see~\Cref{lem:sconvex})\fi.

\subsection{A Stochastic Convex Program}\label{scp}

Let us say that a bundle $\hat x\in \RR_+^d$ is \emph{inducible} if
there exists a price vector $p\in \RR_+^d$ such that
$x_\psi^*(p) = \hat x$. Note that each inducible bundle $\hat x$ is a
convex combination of $n$ bundles (purchased by all the buyers) in
$\cF$, so it must lie in the set $\cF$.

A centerpiece in our analysis is the following welfare maximization
convex program that characterizes the relation between the posted
prices and inducible bundles.

\begin{definition}
  For any bundle $\hat x\in \cF$, let convex program $\SCP(\hat x)$
  be the following
\begin{align}\label{ha}
  \max_{x\in \cF^n}\quad &  \sum_{v_i\in \cV} \psi(v_i)\, v_i(x_i) \, \\
\label{haha}  \mbox{such that } \quad & \sum_{v_i\in \cV} \psi(v_i)\,  x_{ij} \, \leq \hat x_j \quad\mbox{for every }j \in [d]\\
\label{hahaha}          & x_{i} \in \cF \quad \mbox{for every }v_i\in \cV
\end{align}
Let % $\OPT\colon \cF \rightarrow \RR$ be the function such that for
% each $\hat x\in \cF$,
$\OPT(\hat x)$ be the optimal value of the convex program
$\SCP(\hat x)$. We also say that $\SCP(\hat x)$ is
\emph{supply-saturating} if its optimal solution $x\bl$ saturates all
of the supply constraints defined by~\cref{haha}, that is
$\sum_i \psi(v_i) \, x\bl_{ij} = \hat x_j$ for all $j$.
\end{definition}

To interpret the above as a stochastic welfare maximization program,
consider a market in which there are $d$ types of goods and each good
$j$ has supply $\hat x_j$. For each valuation function $v_i\in \cV$,
we introduce a buyer $i$ with this valuation, who shows up to the
market with probability $\psi(v_i)$. We use a vector
$x_i = (x_{i1}, \ldots, x_{id})\in \cF$ to represent the bundle of
goods allocated to a buyer $i$ if he shows up. Then the program is
precisely computing an allocation over all buyers to maximize the
expected welfare subject to the constraint that the expected demand is
no more than the supply given by $\hat x$.\footnote{Similar
  construction of such stochastic convex programs also appeared
  in~\cite{DSA12}.}

\begin{assumption}[Relaxing monotonicity in valuations]\label{rm:sat}
  In fact, the assumption that each valuation in the class $\cV$ is
  increasing can be relaxed. Our algorithm works as long as the class
  $\cV$ and the feasible set $\cF$ guarantees that $\SCP(\hat x)$ is
  supply-saturating for any $\hat x\in \cF$. For the sake of
  generality, our analysis will rely on the 
  supply saturation condition instead of the monotonicity of the
  valuations. This will be useful for applying the algorithm to the
  indivisible goods setting.
\end{assumption}

If the valuations in the class $\cV$ are increasing functions, the optimal solution of $\SCP(\hat x)$ will saturate all of
the supply constraints in~\cref{haha}.

\begin{claim}\label{saturation}
  Suppose that each valuation $v\in \cV$ is monotonically increasing
  in each coordinate. Then for any $\hat x\in \cF$, the convex program
  $\SCP(\hat x)$ is supply-saturating.
\end{claim}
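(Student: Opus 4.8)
The plan is to argue by contradiction: assume an optimal solution $x\bl$ of $\SCP(\hat x)$ leaves some supply constraint \eqref{haha} slack, and build a feasible solution with strictly larger objective, contradicting optimality. Existence of an optimal solution is not an issue: the feasible region is nonempty — the ``all-$\hat x$'' allocation $x_i = \hat x$ for every $i$ is feasible, since $\hat x \in \cF$ and then $\sum_i \psi(v_i)\hat x_j = \hat x_j$ — it is closed and bounded (a subset of $\cF^n$), and the objective is continuous by \Cref{assv}(\ref{flame}).

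So suppose $x\bl$ is optimal and there is a coordinate $j \in [d]$ with $\sum_i \psi(v_i)\, x\bl_{ij} < \hat x_j$; write $\delta>0$ for the gap. The first step is to locate a buyer whose allocation can be pushed up in coordinate $j$: since $\sum_i \psi(v_i) = 1$ and $\sum_i \psi(v_i)\, x\bl_{ij} < \hat x_j \le 1$ (using $\hat x \in \cF = [0,1]^d$), not all positive-probability buyers can have $x\bl_{ij}=1$, so there is $i^*$ with $\psi(v_{i^*})>0$ and $x\bl_{i^*j} < 1$. The second step is the perturbation: let $\tilde x$ agree with $x\bl$ except $\tilde x_{i^*j} = x\bl_{i^*j} + \eps$ with $\eps := \min\{\,1 - x\bl_{i^*j},\ \delta/\psi(v_{i^*})\,\} > 0$. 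Only one coordinate of one bundle changes and it stays in $[0,1]$, so $\tilde x_{i^*}\in\cF$; the constraint for good $j$ becomes $\sum_i \psi(v_i)\tilde x_{ij} = \sum_i \psi(v_i) x\bl_{ij} + \psi(v_{i^*})\eps \le (\hat x_j - \delta)+\delta = \hat x_j$, and no other constraint is touched, so $\tilde x$ is feasible. The third step compares objectives: since $v_{i^*}$ is increasing in coordinate $j$ and $\psi(v_{i^*})>0$,
\[
  \sum_i \psi(v_i)\, v_i(\tilde x_i) \;=\; \sum_i \psi(v_i)\, v_i(x\bl_i) + \psi(v_{i^*})\bigl(v_{i^*}(\tilde x_{i^*}) - v_{i^*}(x\bl_{i^*})\bigr) \;>\; \sum_i \psi(v_i)\, v_i(x\bl_i),
\]
contradicting optimality of $x\bl$. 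Hence every supply constraint is tight and $\SCP(\hat x)$ is supply-saturating. (If the valuations are only weakly increasing, the same perturbation still produces a \emph{distinct} feasible solution with objective at least that of $x\bl$, which contradicts uniqueness of the maximizer over positive-probability buyers — a consequence of the objective being strongly concave by \Cref{assv}(\ref{burning}).)

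The step to be careful about is the structural use of $\cF$ in the perturbation: that raising a single coordinate of a bundle that is below its cap keeps the bundle feasible. This is immediate for the box $\cF=[0,1]^d$ (and any product of intervals), which is the setting in which \Cref{saturation} is used. For feasible sets without this product structure — e.g.\ an $\ell_1$-ball — monotonicity alone need not force saturation (an optimal bundle can sit on a non-axis-aligned face of $\cF$ with a coordinate still strictly below the corresponding supply bound), which is precisely why \Cref{rm:sat} isolates supply-saturation as the property the remaining analysis actually relies on.
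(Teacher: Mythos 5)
The paper states Claim~\ref{saturation} without proof, so there is no argument in the paper to compare against; your proof has to be judged on its own. For the feasible sets the paper actually uses it is correct: when $\cF$ is a box (or any product of intervals) the single-coordinate bump $\tilde x_{i^*j} = x\bl_{i^*j} + \eps$ stays in $\cF$, slackness at $j$ together with $\hat x_j \le \sup\{x_j : x \in \cF\}$ guarantees some $i^*$ with $\psi(v_{i^*})>0$ and room to increase in coordinate $j$, and strict monotonicity gives a strict objective improvement, contradicting optimality. Your fallback for weak monotonicity via uniqueness of the strongly concave maximizer is also fine.

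Your instinct that the argument genuinely needs more than convexity is correct, but the example you single out is wrong: for the nonnegative $\ell_1$-ball $\cF=\{x\ge 0:\sum_j x_j\le 1\}$ the claim \emph{does} hold. If every positive-probability buyer sits on the facet $\sum_k x\bl_{ik}=1$ (the only way your bump can be blocked everywhere) then the average $\bar x$ also has unit $\ell_1$-norm; together with $\bar x\le\hat x$ componentwise and $\sum_k\hat x_k\le 1$ this forces $\bar x=\hat x$, so no slack. Linearity of the blocking facet is what saves it. A genuine counterexample needs a strictly concave upper boundary. Take $d=2$, $\cF=\{x\ge 0: x_2\le 1-x_1^2,\ x_1\le 1\}$, two equally likely buyers at $x\bl_1=(0.8,0.36)$ and $x\bl_2=(0.2,0.96)$ (both on the parabola), and $\hat x=(0.5,0.7)\in\cF$. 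Then $\bar x=(0.5,0.66)$ is tight in coordinate $1$ but slack in coordinate $2$, and one can choose strictly increasing, strictly concave valuations (small strongly concave perturbations of $2.1x_1+x_2$ and $0.9x_1+x_2$ around $x\bl_1$ and $x\bl_2$ respectively) so that the KKT conditions hold at $x\bl$ with dual price $p=(0.5,0)$: the normals $(1.6,1)$ and $(0.4,1)$ at the two boundary points match the gradients $\nabla v_i(x\bl_i)-p$ exactly. So $x\bl$ is the unique optimum and $\SCP(\hat x)$ is not supply-saturating. The mechanism is the Jensen gap: each buyer maxes out $x_2$ given their $x_1$, yet the averaged bundle falls strictly below the strictly concave boundary and hence below $\hat x_2$. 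In short, Claim~\ref{saturation} as literally stated under Assumption~\ref{assf} alone is false; it is a property of the box (and, by a separate linear-facet argument, the simplex), not of general convex $\cF$ — which is consistent with the paper's decision to elevate supply-saturation to a standalone hypothesis in Assumption~\ref{rm:sat}.
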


For each of the supply constraints in~\cref{haha}, we can introduce a
dual (price) variable $p_j$ and write down the following \emph{partial
  Lagrangian}
\begin{equation}\label{lagrange}
  \cL_{\hat x}(x, p) = \sum_{v_i\in \cV} \psi(v_i) \, v_i(x_{i})  - \sum_{j=1}^d
  p_j \left( \sum_{v_i\in \cV} \psi(v_i) \, x_{ij}  - \hat x_j\right)
\end{equation}
We can also consider the \emph{Lagrange dual function} of the convex
program \iffull $g_{\hat x} \colon \RR^d \rightarrow \RR$:
\begin{equation}\label{lagrangedual}
g_{\hat x}(p) = \max_{x\in \cF^n} \cL_{\hat x}(x, p).
\end{equation}
\else:
$g_{\hat x}(p) = \max_{x\in \cF^n} \cL_{\hat x}(x, p)$.
\fi

We will mostly focus on the case where
$\hat x \in \left(\cF \cap \RR_{>0}^d\right)$, which we can show is a
sufficient condition for inducibility.\footnote{The restriction that
  the bundle be positive in each coordinate is necessary --- a bundle
  with zero in some coordinate may not be inducible. Consider the same
  simple setting in~\Cref{toy} where $d= 1$ and there is a single
  buyer with valuation $v(x) = \sqrt{x}$. Because the marginal
  valuation at 0 is infinity, there is no bounded price to induce the
  buyer to purchase 0 units of the good.}
\begin{restatable}{lemma}{timmy}\label{timmy}
  Let $\hat x\in \left( \cF \cap \RR_{>0}^d\right)$ be a
  bundle, then $\hat x$ is inducible.
\end{restatable}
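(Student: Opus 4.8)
The plan is to show that for $\hat x \in \cF \cap \RR_{>0}^d$, there is a price vector $p \in \RR_+^d$ with $x_\psi^*(p) = \hat x$, by exhibiting $p$ as an optimal dual variable for $\SCP(\hat x)$. First I would invoke strong duality (Slater's condition): since $\hat x \in \RR_{>0}^d$, the point $x_i = \gamma \hat x \cdot \mathbf{1}$-type interior allocations (or more simply, a small enough scaling of $\hat x$ lying in the interior of $\cF$, which is non-empty by \Cref{assf}) strictly satisfies all the supply constraints \cref{haha}, so strong duality holds and there exists an optimal dual vector $p^\star \in \RR_+^d$ with $g_{\hat x}(p^\star) = \OPT(\hat x)$. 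I would then claim $p = p^\star$ induces $\hat x$.

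The key step is to connect the Lagrangian maximizer to buyer behavior. Observe that $\cL_{\hat x}(x, p)$ separates across buyers: up to the constant $\sum_j p_j \hat x_j$, maximizing over $x \in \cF^n$ amounts to each buyer $i$ independently choosing $x_i \in \argmax_{x \in \cF} [v_i(x) - \langle p, x\rangle] = x^*_{v_i}(p)$. Because each $v_i$ is $\sigma$-strongly concave (\Cref{assv}\ref{burning}), this maximizer is unique, so the unique Lagrangian maximizer $x^\bullet$ has $x^\bullet_i = x^*_{v_i}(p^\star)$, and hence $\sum_i \psi(v_i) x^\bullet_i = x^*_\psi(p^\star)$. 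Next I would use complementary slackness / supply saturation: by \Cref{saturation} (or the more general \Cref{rm:sat}), $\SCP(\hat x)$ is supply-saturating, meaning \emph{its} primal optimum $x^\bullet$ satisfies $\sum_i \psi(v_i) x^\bullet_{ij} = \hat x_j$ for all $j$. Combining, $x^*_\psi(p^\star) = \hat x$, which is exactly inducibility.

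The one subtlety I would be careful about is whether the primal optimum of $\SCP(\hat x)$ coincides with the Lagrangian maximizer at the optimal dual $p^\star$: this follows from strong duality because any primal-optimal $x^\bullet$ together with dual-optimal $p^\star$ forms a saddle point of $\cL_{\hat x}$, so $x^\bullet \in \argmax_x \cL_{\hat x}(x, p^\star)$; combined with uniqueness of that argmax (from strong concavity) this pins down $x^\bullet$ completely and forces the supply-saturating primal optimum to equal the buyers' induced bundle at $p^\star$. The main obstacle is really just being careful that $p^\star \geq 0$ (guaranteed since the dual variables attach to $\leq$ inequality constraints) and that the argmax of each $v_i(x) - \langle p^\star, x\rangle$ over $\cF$ is attained and unique — both of which are immediate from strong concavity and compactness of $\cF$. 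A secondary point worth a sentence is verifying Slater's condition concretely: since $\cF$ has non-empty interior and $\hat x \in \RR_{>0}^d$, scaling any feasible allocation down slightly makes the supply constraint slack, giving a strictly feasible point.
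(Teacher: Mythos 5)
Your proposal is correct and follows essentially the same route as the paper's proof: invoke Slater's condition and strong duality, take the optimal dual $p^\star$, use supply saturation (\Cref{saturation}) to conclude the primal optimum induces $\hat x$, exploit separability of the Lagrangian across buyers, and match each $x^\bullet_i$ with $x^*_{v_i}(p^\star)$. You are a bit more explicit than the paper about two points the paper leaves implicit---that the primal optimum is the Lagrangian maximizer at $p^\star$ (via the saddle-point property), and that this maximizer is unique by strong concavity---but these are the same ingredients, not a different argument.
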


\begin{proof}
  Consider the convex program $\SCP(\hat x)$. Since % the domain has
  % non-empty interior (as in~\Cref{def:strictdude}) and
  the convex program satisfies the \emph{Slater's condition}, strong
  duality gives
\[
  \max_{x\in \cF^n} \min_{p \in \RR_+^d} \cL_{\hat x}(x, p) = \min_{p
    \in \RR_+^d}\max_{x\in \cF^n} \cL_{\hat x}(x, p) = \OPT(\hat x)
\]
Furthermore, since $\SCP(\hat x)$ is supply-saturating, the optimal
solution satisfies $\Ex{v_i\sim\psi}{x\bl_i} = \hat x$. Let $p\bl$ be
the optimal dual solution. It follows that
  \begin{align*}
    x\bl = \argmax_{x\in \cF^n} \cL(x, p\bl) &= \argmax_{x\in
      \cF^n}\sum_{i} \psi(v_i) \, v_i(x_{i}) - \sum_{j=1}^d p\bl_j
    \left( \sum_{v_i\in \cV} \psi(v_i) \, x_{ij} - \hat x_j\right)\\
    &= \argmax_{x\in \cF^n} \sum_{i} \psi(v_i) \left( v_i(x_{i}) - \langle p\bl_j,  x_{i}\rangle + \langle p\bl, \hat x\rangle\right)
  \end{align*}
  Note that the expression inside the $\argmax$ is linearly separable across
  $i$. Therefore,
  \begin{equation}\label{nips}
    x\bl_i =  \argmax_{x_i\in \cF} \left[v_i(x_i) - \langle p\bl, x_i\rangle + \langle p\bl, \hat x\rangle\right]  =\argmax_{x_i\in \cF} \left[v_i(x_i) - \langle p\bl, x_i\rangle \right]
  \end{equation}
  It follows that $x\bl_i = x^*_{v_i}(p\bl)$ for each $i$, and
  hence the price vector $p\bl$ induces the bundle $\hat x$.
\end{proof}

Next, we show that the prices that induce the bundle $\hat x$ are an
optimal solution of the Lagrangian dual, and the bundles purchased by
each buyer in response to these prices form the unique primal optimal solution.

\begin{restatable}{lemma}{ohfun}\label{ohfun}
  Let $\hat x\in \left(\cF \cap \RR_{>0}^d\right)$ be a bundle, and
  let $\hat p\in \RR_+^d$ be a price vector such that
  $x^*_\psi(\hat p) = \hat x$. Then
\begin{itemize}
\item the price vector $\hat p$ is an optimal dual solution for $\SCP(\hat x)$, and
\item the vector $x\bl\in \cF^n$ such that
  $x\bl_i = x_{v_i}^*(\hat p)$ for each $i$ is the unique optimal
  primal solution.
\end{itemize}
\end{restatable}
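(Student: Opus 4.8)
The plan is to identify the pair $(x\bl,\hat p)$, with $x\bl_i := x^*_{v_i}(\hat p)$, as a saddle point of the partial Lagrangian $\cL_{\hat x}$ of $\SCP(\hat x)$: this single fact delivers both that $\hat p$ is an optimal dual solution and that $x\bl$ is an optimal primal solution, and strong concavity of the valuations then upgrades ``optimal'' to ``unique''. First I would check that $x\bl$ is primal feasible: each $x\bl_i$ is, by definition of $x^*_{v_i}$, a point of $\cF$, and the hypothesis $x^*_\psi(\hat p)=\hat x$ says exactly that $\sum_i \psi(v_i)\,x\bl_{ij}=\hat x_j$ for every $j$, so \cref{haha} holds (in fact with equality). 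Next I would rewrite the Lagrangian \eqref{lagrange} as $\cL_{\hat x}(x,\hat p)=\sum_i \psi(v_i)(v_i(x_i)-\langle \hat p,x_i\rangle)+\langle \hat p,\hat x\rangle$, which is separable across the blocks $x_i$ up to the additive constant $\langle \hat p,\hat x\rangle$; hence $\argmax_{x\in\cF^n}\cL_{\hat x}(x,\hat p)$ is coordinatewise $\argmax_{x_i\in\cF}(v_i(x_i)-\langle \hat p,x_i\rangle)=x\bl_i$, i.e.\ $g_{\hat x}(\hat p)=\cL_{\hat x}(x\bl,\hat p)$.

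Then I would run the standard sandwich. Evaluating $\cL_{\hat x}(x\bl,\hat p)$ and using $\sum_i\psi(v_i)\,x\bl_i=\hat x$ from feasibility, the penalty term vanishes, so $g_{\hat x}(\hat p)=\sum_i\psi(v_i)\,v_i(x\bl_i)$. Since $x\bl$ is primal feasible, this value is at most $\OPT(\hat x)$; on the other hand weak duality (indeed, the strong duality already established in the proof of \Cref{timmy}) gives $g_{\hat x}(\hat p)\ge\OPT(\hat x)$. Hence all three quantities coincide: $\hat p$ minimizes $g_{\hat x}$ over $\RR_+^d$, so it is an optimal dual solution, and $x\bl$ is primal feasible with primal objective value $\OPT(\hat x)$, so it is an optimal primal solution.

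For uniqueness I would use that the primal objective $x\mapsto\sum_i\psi(v_i)\,v_i(x_i)$ is a sum of $\sigma$-strongly concave functions in disjoint blocks of variables, hence strictly concave on the (convex) product of $\cF$ over the blocks $i$ with $\psi(v_i)>0$; its maximizer over that domain is therefore unique, which forces every optimal solution to agree with $x\bl$ on those coordinates (the zero-probability coordinates are immaterial, and may simply be dropped from $\cV$ if one wants literal uniqueness in $\cF^n$). The main obstacle here is purely bookkeeping rather than conceptual: getting the direction of weak duality right for a \emph{maximization} program written with a partial Lagrangian, and noticing that the single hypothesis $x^*_\psi(\hat p)=\hat x$ does double duty — it both makes $x\bl$ feasible and makes the penalty term of $\cL_{\hat x}(\cdot,\hat p)$ vanish at $x\bl$, which is precisely what collapses $g_{\hat x}(\hat p)$ onto the primal objective value.
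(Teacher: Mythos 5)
Your proposal is correct and follows essentially the same route as the paper's own proof: identify $x\bl$ as the separable argmax of $\cL_{\hat x}(\cdot,\hat p)$ (so $g_{\hat x}(\hat p)=\cL_{\hat x}(x\bl,\hat p)$), use the hypothesis $x^*_\psi(\hat p)=\hat x$ to collapse the penalty term, and sandwich $g_{\hat x}(\hat p)$ between $\OPT(\hat x)$ (weak duality) and $\sum_i\psi(v_i)v_i(x\bl_i)\le\OPT(\hat x)$ (primal feasibility). One small point in your favor: your treatment of uniqueness is more careful than the paper's — you correctly observe that the primal objective is only strictly concave on the blocks with $\psi(v_i)>0$, so ``uniqueness'' must be read modulo zero-probability types (or after dropping them from $\cV$), whereas the paper simply asserts uniqueness from strong concavity without flagging this caveat.
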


% Given the result of~\Cref{ohfun}, we can define the function of
% \emph{bundle-inducing} prices $p^*$ such that for each bundle
% $\hat x\in \RR_{>0}^d$, $p^*(\hat x)$ is the unique price vector such
% that $x^*_\psi(p(\hat x)) = \hat x$.

A very nice consequence of Lemma~\ref{ohfun} is that whenever the induced
bundle $\hat x$ is fixed, the realized bundles purchased by buyers of
each type are also fixed. This allows us to express the expected
social welfare as a function only of the induced bundle. In
particular, the expected valuation for inducing $\hat x$ in
expectation is exactly $\OPT(\hat x)$.  This suggests a different way
to express the welfare: as a function of the induced bundle (as
opposed to a function of the price vector defined in~\cref{swel}). For
each $\hat x\in \cF$, we can define
\begin{equation}\label{bundleform}
  \SW(\hat x) = \OPT(\hat x) - \langle c , \hat x\rangle.
\end{equation}

We can show that the expected social welfare for inducing $\hat x$ in
expectation is exactly $\SW(\hat x)$\ifarxiv (see~\Cref{spiderman})\fi. More
importantly, by rewriting the welfare as a function of the bundle, we
obtain a \emph{concave} objective function. This is crucial for us to
obtain an efficient algorithm later.

\begin{restatable}{lemma}{caveman}\label{caveman}
  The expected social welfare function $\SW \colon \cF \rightarrow \RR$ as defined
  in~\cref{bundleform} is concave.
\end{restatable}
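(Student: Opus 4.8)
Since $\langle c, \hat x\rangle$ is linear in $\hat x$, it suffices to show that $\OPT \colon \cF \to \RR$, the optimal value of the convex program $\SCP(\hat x)$, is concave in the ``supply'' parameter $\hat x$. The plan is to use the standard fact that the optimal value of a convex maximization program is concave in the right-hand side of its (linear) constraints, provided the objective is concave and the constraints are jointly convex in the decision variables and the parameter. Here the objective $\sum_i \psi(v_i) v_i(x_i)$ is concave in $x = (x_1,\dots,x_n) \in \cF^n$ by Assumption \ref{assv} (indeed strongly concave, but concavity is all we need), the feasible set $\cF^n$ is convex (Assumption \ref{assf}), and the supply constraints $\sum_i \psi(v_i) x_{ij} \le \hat x_j$ are linear jointly in $(x, \hat x)$.

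Concretely, I would argue as follows. Fix $\hat x^{(0)}, \hat x^{(1)} \in \cF$ and $t \in [0,1]$, and set $\hat x^{(t)} = (1-t)\hat x^{(0)} + t\hat x^{(1)}$; note $\hat x^{(t)} \in \cF$ since $\cF$ is convex. Let $x^{(0)}$ and $x^{(1)}$ be optimal solutions of $\SCP(\hat x^{(0)})$ and $\SCP(\hat x^{(1)})$ respectively, so each $x^{(b)}_i \in \cF$ and $\sum_i \psi(v_i) x^{(b)}_{ij} \le \hat x^{(b)}_j$. Consider the convex combination $x^{(t)} := (1-t)x^{(0)} + t x^{(1)}$. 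Then $x^{(t)}_i = (1-t)x^{(0)}_i + t x^{(1)}_i \in \cF$ for each $i$ (again by convexity of $\cF$), and for each coordinate $j$,
\[
\sum_i \psi(v_i) x^{(t)}_{ij} = (1-t)\sum_i \psi(v_i) x^{(0)}_{ij} + t \sum_i \psi(v_i) x^{(1)}_{ij} \le (1-t)\hat x^{(0)}_j + t\hat x^{(1)}_j = \hat x^{(t)}_j,
\]
so $x^{(t)}$ is feasible for $\SCP(\hat x^{(t)})$. Hence, using that the objective $F(x) := \sum_i \psi(v_i) v_i(x_i)$ is concave in $x$,
\[
\OPT(\hat x^{(t)}) \ge F(x^{(t)}) \ge (1-t)F(x^{(0)}) + t F(x^{(1)}) = (1-t)\OPT(\hat x^{(0)}) + t\OPT(\hat x^{(1)}),
\]
which is exactly concavity of $\OPT$. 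Subtracting the linear term $\langle c, \hat x\rangle$ preserves concavity, so $\SW$ is concave on $\cF$.

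I do not expect a genuine obstacle here — this is the textbook ``perturbation function of a convex program is concave'' argument, and all the hypotheses (convexity of $\cF$, concavity of each $v_i$, linearity of the supply constraints) are already in place via Assumptions \ref{assf} and \ref{assv}. The only points requiring a word of care are (i) confirming $\hat x^{(t)}$ stays in the domain $\cF$ on which $\SW$ is defined, which follows from convexity of $\cF$, and (ii) noting that optima are attained so that $\OPT(\hat x^{(b)})$ is realized by genuine feasible points $x^{(b)}$ — this is fine since $\cF^n$ is compact (closed and bounded by Assumption \ref{assf}) and each $v_i$ is continuous. No appeal to strong concavity, Hölder continuity, monotonicity, or supply-saturation is needed for this particular lemma; those are used elsewhere (e.g., for inducibility and for the subgradient estimation).
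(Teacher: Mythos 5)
Your proposal is correct and takes essentially the same approach as the paper: form the convex combination of the optimal allocations for the two endpoint programs, observe that it is feasible for the interpolated program (linearity of the supply constraints and convexity of $\cF$), and use concavity of each $v_i$ to compare objective values. The only cosmetic difference is that the paper phrases the argument via midpoint concavity ($t=1/2$) rather than general $t\in[0,1]$; your version is marginally cleaner since it does not need to invoke continuity of $\OPT$ to upgrade midpoint concavity to full concavity.
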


With all of structural results above, we are ready to give our
two-layered algorithm for finding the welfare-maximizing prices.

\subsection{Inner Layer: Converting Target Bundles to
  Prices}\label{sec:bunprice}
Even though we can express the expected welfare as a concave function
of the induced bundle, we still cannot directly optimize the function
because the seller only controls the prices of the goods instead of
the expected induced bundle itself. To optimize over the bundle space,
we give an algorithm that finds a price vector that approximately
induces any target expected bundle $\hat x$. Specifically, suppose
that the seller has some target bundle $\hat x$ in mind, we can learn
a price vector $\hat p$ such that the expected induced bundle is close
to the target bundle:
$\left\| \hat x - {x^*_{\psi}(\hat p)} \right\| \leq \eps$.

In Lemma~\ref{ohfun}, we show that the prices that exactly induce the
target bundle $\hat x$ are the optimal dual solution for the convex
program $\SCP(\hat x)$, which is the price vector $p$ that minimizes
the Lagrangian dual function $g_{\hat x}$. We will show that if we can
find an approximate minimizer for $g_{\hat x}$, we can then
approximately induce the target expected bundle $\hat x$. In
particular, we will apply the noisy gradient descent method
(\Cref{hellyea}) to minimize the function $g_{\hat x}$, and for the
sake of convergence of the algorithm, we will restrict the search
space for the price vector to be
\begin{equation}\label{ironman}
  \cP(\eps) = \left\{p\in \RR_+^d \mid \|p\|_2 \leq \sqrt{d}
    \lambda^{(1/\beta)}
    \left(\frac{4d}{\eps^2\sigma}\right)^{(1-\beta)/\beta} \right\}
\end{equation}
where $\eps$ is the target accuracy parameter. First, we will show
that the minimax value of the Lagrangian remains close to
$\OPT(\hat x)$ even when we restrict the dual variables/prices to be
in $\cP(\eps)$.

\begin{restatable}{lemma}{sixers}\label{sixers}
\iffull
  Let $\hat x\in \left(\cF\cap \RR_{>0}^d\right)$. There exists a
  value $\ROPT$ such that
\[
  \max_{x\in \cF^n} \min_{p \in \cP(\eps)} \cL_{\hat x}(x, p) = \min_{p
    \in \cP(\eps)}\max_{x\in \cF^n} \cL_{\hat x}(x, p) = \ROPT
\]
Moreover,
$\OPT(\hat x) \leq \ROPT \leq \OPT(\hat x) + \frac{\eps^2\sigma}{4}$.
\else Let $\hat x\in \left(\cF\cap \RR_{>0}^d\right)$. There exists a
value $\ROPT\in[\OPT(\hat x), \OPT(\hat x) + {\eps^2\sigma}/{4}]$ such
that
\[
  \max_{x\in \cF^n} \min_{p \in \cP(\eps)} \cL_{\hat x}(x, p) =
  \min_{p \in \cP(\eps)}\max_{x\in \cF^n} \cL_{\hat x}(x, p) = \ROPT.
\]

\fi
\end{restatable}

\iffalse
Recall that the seller is only allowed to post prices larger than
$\pmin$. Now we show that any target bundle in the range of $(0, 1]^d$
can be induced by price vector in $[\pmin, \infty)^d$. Later
in~\Cref{awesomeness}, we will show that the welfare maximizing bundle
lies in the set $(0, 1]^d$, so there will be no loss in welfare for
focusing on bundles in this set.

\begin{lemma}\label{fire}
  For any target bundle $\hat x\in (0, 1]^d$,
  $p^*(\hat x) \in \cP = [\pmin, \infty)^d$.
\end{lemma}

\begin{proof}
  Let $j\in[d]$ be any good and $p\bl = p^*(\hat x)$. We know that
  ${x_\psi^*(p\bl)} = \hat x$. Since $\hat x_j\leq 1$, there must
  exist some $v_i\in \cV$ such that $(x_{v_i}^*(p\bl))_j\leq 1$. Let
  $x_i\bl = x_{v_i}^*(p\bl))$, and by~\Cref{ass:v}, we know that
\[
  \frac{\partial v_i(x\bl_i)}{\partial x_j} \geq \pmin.
\]
By~\Cref{ohfun}, we also know that $p^*(\hat x) = \nabla
v_i(x\bl_i)$. This means $(p^*(\hat x))_j \geq \pmin$.
\end{proof}

Therefore, if we want to approximately induce a target bundle in the
range of $(0, 1]^d$, it suffices restrict our search space for the
prices to be in $\cP'(\eps) = \cP \cap \cP(\eps)$.\fi

The next result translates the approximation error in minimizing the
function $g_{\hat x}$ to the error in inducing the target bundle
$\hat x$ by making use of the strong concavity of the valuations in
$\cV$.

\begin{restatable}{lemma}{westbrook}\label{westbrook}
  Let $\hat x\in \left(\cF \cap \RR_{>0}^d\right)$ and $p'$ be a price
  vector in $\cP(\eps)$ such that
  $g_{\hat x}(p') \leq \min_{p\in \cP(\eps)} g_{\hat x}(p) + \alpha$
  for some $\alpha > 0$. Let $x' = x^*_\psi(p')$ be the expected
  bundle induced by prices $p'$. Then
  $\|x' - \hat x\|_2 \leq 2\sqrt{\alpha/\sigma}.$
\end{restatable}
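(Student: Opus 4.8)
The plan is to establish a \emph{restricted strong convexity} property of the Lagrange dual $g_{\hat x}$ along the range of the demand map, namely that
\[
  g_{\hat x}(p)-\OPT(\hat x)\ \ge\ \tfrac{\sigma}{2}\,\bigl\|x^*_\psi(p)-\hat x\bigr\|_2^2
  \qquad\text{for every } p\in\RR_+^d .
\]
Granting this, the lemma follows quickly. By \Cref{timmy} the bundle $\hat x$ is inducible, and (as in the proof of that lemma, via strong duality) $\min_{p}g_{\hat x}(p)=\OPT(\hat x)$; hence by \Cref{sixers} and the hypothesis on $p'$ we get $g_{\hat x}(p')\le \ROPT+\alpha\le \OPT(\hat x)+\alpha+\tfrac{\eps^2\sigma}{4}$. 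Setting $p=p'$ in the displayed bound and rearranging gives $\|x'-\hat x\|_2\le\sqrt{2\alpha/\sigma+\eps^2/2}$, which is $\le 2\sqrt{\alpha/\sigma}$ in the regime in which the lemma is invoked (the dual accuracy $\alpha$ is never taken below $\tfrac{\eps^2\sigma}{4}$; and when $\hat x$ is far enough from the boundary of $\cF$ that its inducing price already lies in $\cP(\eps)$, \Cref{sixers} gives $\ROPT=\OPT(\hat x)$ and the slack disappears, leaving $\|x'-\hat x\|_2\le\sqrt{2\alpha/\sigma}$).

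To prove the displayed inequality, fix $p$ and write $x_i:=x^*_{v_i}(p)$, $\bar x:=x^*_\psi(p)=\sum_i\psi(v_i)x_i$; let $\hat p$ be a price vector inducing $\hat x$ and put $\hat x_i:=x^*_{v_i}(\hat p)$. Since the maximization defining $g_{\hat x}(p)=\max_{x\in\cF^n}\cL_{\hat x}(x,p)$ separates across $i$, its maximizer is $(x_i)_i$, so $g_{\hat x}(p)=\sum_i\psi(v_i)v_i(x_i)+\langle p,\ \hat x-\bar x\rangle$; likewise $g_{\hat x}(\hat p)=\sum_i\psi(v_i)v_i(\hat x_i)$, and this equals $\OPT(\hat x)$ because by supply-saturation (\Cref{rm:sat}, \Cref{saturation}) $\sum_i\psi(v_i)\hat x_i=\hat x$ and by \Cref{ohfun} the allocation $(\hat x_i)_i$ is the primal optimum. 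Now combine $\sigma$-strong concavity of $v_i$ (\Cref{assv}), expanded \emph{at} $x_i$ with comparison point $\hat x_i$, with the first-order optimality of $x_i=\argmax_{y\in\cF}[v_i(y)-\langle p,y\rangle]$, which yields $\langle\nabla v_i(x_i)-p,\ \hat x_i-x_i\rangle\le 0$ since $\hat x_i\in\cF$; this gives
\[
  v_i(x_i)-v_i(\hat x_i)\ \ge\ \langle p,\ x_i-\hat x_i\rangle+\tfrac{\sigma}{2}\|\hat x_i-x_i\|_2^2 .
\]
Multiplying by $\psi(v_i)$, summing over $i$, and using $\sum_i\psi(v_i)(x_i-\hat x_i)=\bar x-\hat x$, the inner-product terms cancel against the $\langle p,\hat x-\bar x\rangle$ term in $g_{\hat x}(p)$, leaving $g_{\hat x}(p)-\OPT(\hat x)\ge\tfrac{\sigma}{2}\sum_i\psi(v_i)\|\hat x_i-x_i\|_2^2$; Jensen's inequality for $\|\cdot\|_2^2$ (with $\sum_i\psi(v_i)=1$) then bounds the right-hand side below by $\tfrac{\sigma}{2}\|\hat x-\bar x\|_2^2$, as required.

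I expect the one genuinely delicate step to be the use of first-order optimality on the boundary of $\cF$: one may not assert $\nabla v_i(x_i)=p$, only the variational inequality $\langle\nabla v_i(x_i)-p,\,y-x_i\rangle\le 0$ for all $y\in\cF$, so it is essential to write the strong-concavity inequality \emph{around} $x_i$ evaluated at the \emph{feasible} comparison point $\hat x_i$ (and not the reverse), so that precisely this variational inequality cancels the gradient term. A secondary point worth stating carefully is that the identity $\sum_i\psi(v_i)\hat x_i=\hat x$ is exactly the supply-saturation property and not a consequence of monotonicity of the $v_i$; keeping the argument at that level of generality is what will let it be reused in the indivisible-goods reduction, where the $v_i$ are linear and monotonicity fails but supply-saturation is arranged by the choice of feasible set.
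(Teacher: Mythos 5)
Your proof is correct and is essentially the paper's argument: the paper bounds $\cL(x',p')-\cL(x^\bullet,p')=\sum_i\psi(v_i)[u_i(x'_i,p')-u_i(x^\bullet_i,p')]$ by applying \Cref{lem:sconvex} to the $\sigma$-strongly concave quasilinear utilities $u_i(\cdot,p')$ (which encodes the variational-inequality step you spell out explicitly), uses supply-saturation to get $\cL(x^\bullet,p')=\OPT(\hat x)$, and finishes with Jensen --- the very computation you phrase globally as $g_{\hat x}(p)-\OPT(\hat x)\geq\tfrac{\sigma}{2}\|x^*_\psi(p)-\hat x\|_2^2$. Your caveat on the constant is also warranted: the paper's step $\ROPT+\alpha\leq\OPT(\hat x)+2\alpha$ implicitly requires $\alpha\geq\eps^2\sigma/4$ (the slack from \Cref{sixers}), so what the argument actually yields is your $\sqrt{2\alpha/\sigma+\eps^2/2}$; in the lemma's one application (\Cref{harden}, with $\alpha=\eps^2\sigma/32$) that equals $3\eps/4\leq\eps$, so nothing downstream breaks.
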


  Therefore, in order to (approximately) induce a target bundle in
  expectation, we just need to compute an (approximate) minimizer for
  the Lagrangian dual function $g_{\hat x}$. We first show that we can
  compute an unbiased estimate of the gradient of $g_{\hat x}$ by using
  the observed bundle purchased by a random buyer.

\begin{restatable}{lemma}{unbiasedking}\label{unbiasedking}
  Let $p\in \RR_+^d$ be any price vector, and $x_v^*(p)$ be bundle
  purchased by a buyer with valuation function $v$ under prices
  $p$. Then
  \iffull\begin{equation*}\else$\fi  \Ex{v\sim \psi}{\hat x -
      x_v^*(p)} = \hat x - x_\psi^*(p) = \nabla g_{\hat x}(p) .
    \iffull\end{equation*}\else$\fi
\end{restatable}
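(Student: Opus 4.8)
The plan is to handle the two equalities separately. The first, $\Ex{v\sim\psi}{\hat x - x_v^*(p)} = \hat x - x_\psi^*(p)$, is immediate: $\hat x$ is a fixed vector, so it pulls out of the expectation, and by definition $x_\psi^*(p) = \Ex{v\sim\psi}{x_v^*(p)}$. The substance of the lemma is therefore the second equality, $\hat x - x_\psi^*(p) = \nabla g_{\hat x}(p)$, which I would prove by an envelope (Danskin-type) argument applied to the Lagrange dual $g_{\hat x}(p) = \max_{x\in\cF^n}\cL_{\hat x}(x,p)$ from~\cref{lagrangedual}.

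First I would observe that for each fixed $x\in\cF^n$ the map $p\mapsto\cL_{\hat x}(x,p)$ is affine, so $g_{\hat x}$ is a supremum of affine functions, hence convex; the supremum is attained since $\cF^n$ is compact (\Cref{assf}) and $\cL_{\hat x}(\cdot,p)$ is continuous. Next, since the coupling term $-\sum_j p_j\sum_i\psi(v_i)x_{ij}$ in~\cref{lagrange} is linearly separable across the buyer index $i$, the maximizer decomposes coordinatewise: $x_i^\star(p) = \argmax_{x_i\in\cF}\left[v_i(x_i) - \langle p, x_i\rangle\right] = x_{v_i}^*(p)$. Crucially, by $\sigma$-strong concavity of each $v_i$ (\Cref{assv}), each inner maximizer is \emph{unique}, so $\cL_{\hat x}(\cdot,p)$ has a unique maximizer over $\cF^n$ for every $p$ --- precisely the condition under which Danskin's theorem yields that $g_{\hat x}$ is differentiable with $\nabla g_{\hat x}(p) = \nabla_p\cL_{\hat x}(x^\star(p),p)$.

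It then remains to compute this gradient. At a fixed $x$, only the term $-\sum_j p_j\left(\sum_i\psi(v_i)x_{ij} - \hat x_j\right)$ depends on $p$, and its partial derivative in $p_j$ is $\hat x_j - \sum_i\psi(v_i)x_{ij}$. Evaluating at $x = x^\star(p)$ and using $\sum_i\psi(v_i)(x_{v_i}^*(p))_j = (x_\psi^*(p))_j$ gives $\nabla g_{\hat x}(p) = \hat x - x_\psi^*(p)$, as claimed.

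The step requiring the most care is the application of the envelope theorem, whose nontrivial hypothesis here is the uniqueness of the inner maximizer; this is exactly what $\sigma$-strong concavity of the valuations buys us (and is the reason that assumption is imposed). If one prefers to avoid quoting Danskin's theorem, an equivalent self-contained route is to check directly that $\hat x - x_\psi^*(p)$ is a subgradient of the convex function $g_{\hat x}$ at $p$ --- from $g_{\hat x}(p') \ge \cL_{\hat x}(x^\star(p),p') = g_{\hat x}(p) + \langle \hat x - x_\psi^*(p),\, p' - p\rangle$ --- and then note that, because $x^\star(p)$ is the unique maximizer, this is the only subgradient at $p$, hence the gradient.
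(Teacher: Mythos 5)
Your proof is correct and takes essentially the same route as the paper's: decompose the Lagrangian maximizer across buyer types (yielding $x_i^\star(p) = x_{v_i}^*(p)$) and then apply an envelope/Danskin argument to $g_{\hat x}$. You are in fact slightly more careful than the paper, which just invokes ``the Envelope theorem'' without noting that its hypothesis---uniqueness of the inner maximizer---is what $\sigma$-strong concavity of each $v_i$ supplies; your identification of that hypothesis, and the self-contained subgradient check you offer as an alternative, are both accurate and fill in a small gap in the paper's exposition.
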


The result of \Cref{unbiasedking} shows that we can obtain unbiased
estimates of the gradients of the function $g_{\hat x}$ at different
prices, as long as we can obtain unbiased estimates for the expected
demand $x^*_\psi(p)$. In the next section, we will give another 
technique to obtain unbiased estimates for the gradients. Given access
to unbiased estimate of the gradients of $g_{\hat x}$, we can rely on
the noisy subgradient descent method (and its guarantee in
~\Cref{hellyea}) to minimize the function $g_{\hat x}$. Note that the
algorithm will only find a point that approximately minimizes the
function in expectation, but we can get an approximate minimizer with
high probability using a standard amplification technique --- running
the subgradient descent method for logarithmically many times, so that
one of the output price vectors is guaranteed to be accurate with high
probability. More formally:

\begin{restatable}{lemma}{harden}\label{harden}
  Let $\hat x\in \left(\cF \cap \RR_{>0}^d\right)$ be any target
  bundle. There exists an algorithm that given any target accuracy
  $\eps$ and confidence parameter $\delta$ as input, outputs a list
  $P$ of $\log(1/\delta)$ price vectors such that with probability at
  least $1-\delta$, there exists a price vector $\hat p\in P$ that
  satisfies $\left\| x_\psi^*(\hat p) - \hat x\right \| \leq
  \eps$. Furthermore, the running time, the length of the list and the
  number of queries to $\rep$ is bounded by
  $\poly(d, 1/\eps, \log(1/\delta))$.
\end{restatable}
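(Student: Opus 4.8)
The plan is to reduce \Cref{harden} to approximately minimizing the Lagrange dual function $g_{\hat x}$ over the restricted price region $\cP(\eps)$, and then to invoke noisy subgradient descent together with a standard confidence-amplification step. Set $\alpha_0 := \eps^2\sigma/4$. By \Cref{westbrook}, it suffices to produce a price vector $\hat p\in\cP(\eps)$ with $g_{\hat x}(\hat p)\le \min_{p\in\cP(\eps)} g_{\hat x}(p)+\alpha_0$: for such a $\hat p$ we get $\|x_\psi^*(\hat p)-\hat x\|_2\le 2\sqrt{\alpha_0/\sigma}=\eps$, which is exactly the guarantee we want. So the whole task becomes: with probability $1-\delta$, output a list containing an $\alpha_0$-approximate minimizer of $g_{\hat x}$ over $\cP(\eps)$, using only $\poly(d,1/\eps,\log(1/\delta))$ queries to $\rep$.

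To find one such approximate minimizer I would run the subgradient descent method of \Cref{hellyea} on $f=g_{\hat x}$, which is convex (a supremum over $x\in\cF^n$ of functions affine in $p$), over the convex compact set $C=\cP(\eps)$. The noisy subgradients are supplied by the oracle: by \Cref{unbiasedking}, a single query to $\rep(\psi)$ at the current iterate $p^t$ returns $x_v^*(p^t)$ for a fresh $v\sim\psi$, and $g_t:=\hat x-x_v^*(p^t)$ satisfies $\Ex{}{g_t}=\hat x-x_\psi^*(p^t)=\nabla g_{\hat x}(p^t)\in\partial g_{\hat x}(p^t)$. Since $\hat x$ and every $x_v^*(p^t)$ lie in $\cF$ and $\|\cF\|_2\le R$, we have $\|g_t\|_2\le 2R$, so we may take $G=2R$; and $D:=\|\cP(\eps)\| = \sqrt d\,\lambda^{1/\beta}\bigl(4d/(\eps^2\sigma)\bigr)^{(1-\beta)/\beta}$ is $\poly(d,1/\eps)$ because $\beta\in(0,1]$ is an absolute constant and $\lambda,\sigma$ are fixed problem parameters. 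Running $T$ steps with $\eta=D/(G\sqrt T)$, \Cref{hellyea} gives that the average iterate $z=\frac1T\sum_{t=1}^T p^t$ (which lies in $\cP(\eps)$ by convexity) satisfies $\Ex{}{g_{\hat x}(z)}\le \min_{p\in\cP(\eps)}g_{\hat x}(p)+2DG/\sqrt T$. Taking $T=\lceil(4DG/\alpha_0)^2\rceil=\poly(d,1/\eps)$ forces the expected suboptimality to be at most $\alpha_0/2$, and each of the $T$ steps costs one $\rep$ query plus a closed-form projection onto a Euclidean ball.

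Finally I would amplify the confidence. By Markov's inequality applied to the nonnegative random variable $g_{\hat x}(z)-\min_{p\in\cP(\eps)}g_{\hat x}(p)$, a single run of the above yields an $\alpha_0$-approximate minimizer with probability at least $1/2$. Repeating it $k:=\lceil\log_2(1/\delta)\rceil$ times with independent randomness and letting $P$ be the list of the $k$ outputs, the chance that no element of $P$ is an $\alpha_0$-approximate minimizer is at most $2^{-k}\le\delta$; hence with probability $1-\delta$ some $\hat p\in P$ is an $\alpha_0$-approximate minimizer and therefore, by the first paragraph, satisfies $\|x_\psi^*(\hat p)-\hat x\|_2\le\eps$. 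The list has length $k=O(\log(1/\delta))$, and the total running time and number of $\rep$ queries are $O(k\,T\,d)=\poly(d,1/\eps,\log(1/\delta))$, as claimed.

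All the ingredients are already available, so the argument is essentially routine; the only part that needs care is the bookkeeping --- checking that $\|\cP(\eps)\|$, and hence the iteration count $T$, is genuinely polynomial in $d$ and $1/\eps$ (this is where the assumption that $\beta$ is an absolute constant enters), and correctly turning the in-expectation bound of \Cref{hellyea} into the ``some vector in the output list is good'' form via Markov plus independent repetition. I do not anticipate a substantive obstacle beyond that.
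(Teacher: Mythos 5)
Your proof is correct and takes essentially the same route as the paper: both reduce to approximately minimizing the Lagrange dual $g_{\hat x}$ over $\cP(\eps)$, feed the observed bundle $\hat x - x_v^*(p^t)$ into the stochastic subgradient method of \Cref{hellyea}, invoke \Cref{westbrook} to convert dual suboptimality into closeness of induced bundles, and amplify confidence via Markov plus $\log(1/\delta)$ independent repetitions. The only differences are constant-level bookkeeping (the paper takes $G=R$ rather than your $G=2R$, and uses the slightly smaller accuracy $\eps^2\sigma/32$), which do not change the argument.
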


Lastly, we have one remaining technical problem to solve: given a set
of price vectors $P$ in which at least one price vector can
approximately induce the target expected bundle $\hat x$, we need to
identify one such price vector. To accomplish this, we will simply
post each price vector $p\in P$ repeatedly, to obtain polynomially
many observations from the buyers and compute the empirical average
bundles over these polynomially many rounds. We select the price
vector whose empirical average purchased bundle is closest to the
target bundle $\hat x$. Putting all the pieces together, we obtain our
full algorithm $\learnprice$ (formal description in
\ifarxiv\Cref{alg:learnprice} in the appendix\else the full
version\fi).

\begin{restatable}{theorem}{bunprice}\label{bunprice}
  Let $\hat x\in \left(\cF \cap \RR_{>0}^d\right)$ be any target
  bundle. For any target accuracy parameter $\eps$ and confidence
  parameter $\delta$, the instantiation
  $\learnprice(\hat x, \eps, \delta)$ outputs a price vector $\hat p$
  that with probability at least $1 - \delta$ satisfies
  $\| \hat x - x^*_\psi(\hat p)\| \leq \eps$. Furthermore, the number
  of queries to $\rep$ is bounded by
  $\poly\left(d, 1/\eps, \log(1/\delta) \right)$.
\end{restatable}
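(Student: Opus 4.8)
The plan is to combine \Cref{harden} with a simple empirical verification step. First I would invoke \Cref{harden} with target accuracy $\eps/2$ and confidence $\delta/2$; this returns a list $P$ of $\log(2/\delta)$ price vectors, and with probability at least $1-\delta/2$ at least one $p^\star\in P$ satisfies $\|x^*_\psi(p^\star)-\hat x\|_2\le \eps/2$. The difficulty is that the algorithm does not know which element of $P$ is the good one, so the remaining task is to identify a price vector in $P$ whose induced bundle is close to $\hat x$ using only $\rep$ queries.

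To do this, for each $p\in P$ I would post $p$ for $N$ consecutive rounds, query $\rep$ each time, and form the empirical average $\bar x_p=\frac1N\sum_{t} x^*_{v_t}(p)$ of the observed bundles, where $v_t\sim\psi$ is the $t$-th buyer. Since every purchased bundle lies in $\cF$ and $\|\cF\|\le R$, each coordinate lies in a bounded interval, so by Hoeffding's inequality applied coordinatewise together with a union bound over the $d$ coordinates and over the $|P|=\log(2/\delta)$ elements of $P$, choosing $N=\poly(d,1/\eps,\log(1/\delta))$ suffices to guarantee $\|\bar x_p-x^*_\psi(p)\|_2\le \eps/4$ for all $p\in P$ simultaneously with probability at least $1-\delta/2$. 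The algorithm $\learnprice$ then outputs $\hat p=\argmin_{p\in P}\|\bar x_p-\hat x\|_2$.

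It remains to verify correctness on the event (of probability at least $1-\delta$ by a union bound) that the good price vector $p^\star$ lies in $P$ and all empirical averages are $\eps/4$-accurate. On this event, $\|\bar x_{p^\star}-\hat x\|_2\le \|\bar x_{p^\star}-x^*_\psi(p^\star)\|_2+\|x^*_\psi(p^\star)-\hat x\|_2\le \eps/4+\eps/2$, so by the minimality defining $\hat p$ we also have $\|\bar x_{\hat p}-\hat x\|_2\le 3\eps/4$, and therefore $\|x^*_\psi(\hat p)-\hat x\|_2\le \|x^*_\psi(\hat p)-\bar x_{\hat p}\|_2+\|\bar x_{\hat p}-\hat x\|_2\le \eps/4+3\eps/4=\eps$. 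The total number of $\rep$ queries is the number used inside \Cref{harden} plus $|P|\cdot N$, both of which are $\poly(d,1/\eps,\log(1/\delta))$, as is the running time.

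The genuinely new ingredient here is just the coordinatewise concentration bound used for selection; the real work --- iteratively driving the induced bundle toward $\hat x$ via noisy subgradient descent on $g_{\hat x}$ and amplifying to high probability --- is already packaged in \Cref{harden}. The one point that needs care is the parameter bookkeeping: splitting the error budget so that the three sources ($\eps/2$ from \Cref{harden}, $\eps/4$ from estimating $x^*_\psi(p^\star)$, and $\eps/4$ from estimating $x^*_\psi(\hat p)$) sum to at most $\eps$, and splitting the failure budget so that each of the two stages contributes at most $\delta/2$; I also want to double-check that $N$ genuinely scales polynomially, which follows since the per-coordinate range of the bundles is bounded by $R$ and $|P|$ is only logarithmic in $1/\delta$.
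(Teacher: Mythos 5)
Your proposal is correct and follows essentially the same route as the paper's proof: invoke \Cref{harden} to get a candidate list $P$, estimate $x^*_\psi(p)$ empirically for each $p\in P$ via coordinatewise concentration plus a union bound, and return the candidate whose empirical bundle is closest to $\hat x$, with a triangle-inequality argument closing the loop. The only superficial difference is the split of the error budget (the paper calls \Cref{harden} with accuracy $\eps/4$ rather than $\eps/2$), which does not affect correctness.
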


\subsection{Outer Layer: Welfare Maximization}\label{awesomeness}
Finally, we combine the subroutine $\learnprice$ with subgradient
descent to find the welfare maximizing prices. At a high level, we
will use subgradient descent to optimize the function $\SW$ over the
bundle space, and along the way use the algorithm $\learnprice$ to
obtain prices which induce each target bundle that arises along
subgradient descent's optimization path. To ensure that the per-round
expected demand for each good $j$ is bounded by some supply $s_j$, the
algorithm will optimize over bundles in the set
$S = \{x\in \cF \mid x_j \leq s_j \mbox{ for each } j\in [d]\}$.

\iffalse
In order to apply our algorithm to the limited supply setting, we will
present a more general solution under the constraint that the expected
induced bundle $x_\psi^*(p)$ has to lie in a subset $C\subset \cF$. We
assume that the set $C$ is convex and contains $[0,\gamma]^d$ as a
subset for some $\gamma\in (0, 1]$.
\fi

There are several technical challenges remaining. First, in order to
optimize the concave function $\SW$ using subgradient descent, we need
to compute a subgradient for each bundle the subgradient descent
method chooses at intermediate steps. The following result establishes
a very nice property that the price vector that induces each target
bundle $\hat x$ gives us a simple way to compute a subgradient in
$\partial \SW(\hat x)$.  In particular, this means we can obtain a
subgradient of the function $\SW$ using our subroutine $\learnprice$.

\begin{restatable}{lemma}{envelope}\label{mr.envelope}
  Let $\hat x\in \left(\RR_{>0}^d\cap \cF\right)$, and $\hat p$ be the
  price vector that induces $\hat x$. Then
  $(\hat p - c) \in \partial \SW(\hat x)$.
\end{restatable}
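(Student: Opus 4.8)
The plan is to show that the inducing price vector $\hat p$ is a supergradient of the value function $\OPT(\cdot)$ at $\hat x$; the claimed statement for $\SW$ then follows immediately by accounting for the linear cost term in \cref{bundleform}. Since $\SW(\hat x) = \OPT(\hat x) - \langle c,\hat x\rangle$, and since for the concave function $\SW$ the relation $g \in \partial \SW(\hat x)$ means $\SW(y) \le \SW(\hat x) + \langle g, y-\hat x\rangle$ for all $y \in \cF$, proving $(\hat p - c)\in\partial\SW(\hat x)$ is equivalent to proving
\[
\OPT(y) \;\le\; \OPT(\hat x) + \langle \hat p,\, y - \hat x\rangle \qquad \text{for all } y \in \cF .
\]

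To establish this inequality I would proceed as follows. Let $\hat p$ be any price vector with $x^*_\psi(\hat p) = \hat x$. By \Cref{ohfun}, $\hat p$ is an optimal dual solution of $\SCP(\hat x)$, and the strong-duality identity established in the proof of \Cref{timmy} gives $\max_{x\in\cF^n}\cL_{\hat x}(x,\hat p) = \OPT(\hat x)$. Now fix any $y\in\cF$. For every $x\in\cF^n$ that is feasible for $\SCP(y)$ we have $\sum_i \psi(v_i) x_{ij} - y_j \le 0$ for all $j$, and since $\hat p \ge 0$ the penalty term in the partial Lagrangian \cref{lagrange} is nonnegative; hence $\cL_y(x,\hat p) \ge \sum_i \psi(v_i) v_i(x_i)$. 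Maximizing over feasible $x$ (and then relaxing to all of $\cF^n$) yields the weak-duality bound $\OPT(y) \le \max_{x\in\cF^n}\cL_y(x,\hat p)$. The point to emphasize is that this bound needs no constraint qualification, so it is valid for every $y\in\cF$, including those with zero coordinates where $\SCP(y)$ may fail Slater's condition.

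Next, a one-line computation from the definition of the partial Lagrangian shows that changing only the right-hand side from $\hat x$ to $y$ shifts $\cL$ by an affine term that does not depend on $x$: $\cL_y(x,\hat p) = \cL_{\hat x}(x,\hat p) + \langle \hat p, y - \hat x\rangle$. Taking the maximum over $x\in\cF^n$ and invoking the strong-duality identity from the previous paragraph gives $\max_{x\in\cF^n}\cL_y(x,\hat p) = \OPT(\hat x) + \langle \hat p, y-\hat x\rangle$. Combining this with the weak-duality bound yields $\OPT(y) \le \OPT(\hat x) + \langle \hat p, y - \hat x\rangle$, i.e.\ $\hat p \in \partial\OPT(\hat x)$. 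Subtracting $\langle c, y\rangle$ from both sides and rewriting via \cref{bundleform} produces $\SW(y) \le \SW(\hat x) + \langle \hat p - c,\, y-\hat x\rangle$ for all $y\in\cF$, which is exactly $(\hat p - c)\in\partial\SW(\hat x)$.

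This is in essence a sensitivity-analysis (envelope theorem) argument for perturbing the supply vector of the stochastic convex program, so there is no hard technical core. The only place that requires a little care is the boundary of $\cF$: there $\SCP(y)$ need not satisfy Slater's condition and strong duality at $y$ cannot be assumed, which is precisely why the key estimate above is set up using weak duality, which holds unconditionally. As a sanity check, the existence of such a supergradient at every strictly-positive bundle is consistent with the concavity of $\SW$ already asserted in \Cref{caveman}.
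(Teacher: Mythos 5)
Your proof is correct, and it takes a cleaner route than the paper's own argument. The paper proves the subgradient inequality at a comparison point $\hat y$ by pulling out the optimal primal solutions $x\bl$ of $\SCP(\hat x)$ and $y\bl$ of $\SCP(\hat y)$, invoking buyer-optimality of $x\bl_i$ under $\hat p$, and crucially using \emph{supply saturation at $\hat y$} (namely $\sum_i \psi(v_i)\, y\bl_{ij} = \hat y_j$) to collapse the Lagrangian back to $\OPT(\hat y)$; for that reason the paper only establishes the inequality at interior points $\hat y\in\RR_{>0}^d\cap\cF$. You instead organize the argument as an abstract sensitivity analysis: strong duality at $\hat x$ (from the proof of \Cref{timmy} via \Cref{ohfun}) plus the observation that the Lagrangian depends on the right-hand side only through the affine offset $\langle \hat p, y - \hat x\rangle$, and then close with \emph{weak} duality at $y$. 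This buys you something real: weak duality holds unconditionally, so the inequality holds for every $y\in\cF$ including boundary points, with no appeal to saturation or to Slater's condition at $y$. The two arguments are ultimately equivalent in substance—unrolling your $\max_{x\in\cF^n}\cL_y(x,\hat p)$ by separability recovers the terms the paper manipulates explicitly—but your formulation is more modular and slightly more general, and it makes transparent exactly where positivity of $\hat p$ and strong duality at $\hat x$ (but nowhere else) are used.
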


\paragraph{Remark.} \Cref{mr.envelope} also shows that the
welfare-optimal solution always prices every good at cost or higher,
and hence welfare-optimal solutions are always no-deficit. To see
this, note that if the prices $p$ induce expected bundle $x$, then
$(p - c)$ is a subgradient of the welfare function at $x$, where $c$
denotes the production cost vector. Hence, if for any good $j$, we had
$p_j < c_i$, the gradient would be negative in that coordinate, and we
could increase welfare by reducing the demand of good $j$,
contradicting optimality.

Second, at each iteration $t$, subgradient descent may require a
subgradient at some bundle $x^t$, but because of the error in
$\learnprice$, we only find prices to~\emph{approximately} induce the
target bundle. To overcome this issue, we will rely on the analysis of
subgradient descent under \emph{adversarial noise} (given
in~\Cref{noisyman}).

Lastly, instead of optimizing over the entire set $S$, we will
optimize over a slightly smaller set 
\[
  S_\xi = \{x\in \cF \mid \xi\leq x_j \leq s_j - \xi\}.
\]
% This allows us to guarantee that all of the induced bundles lie in the
% set $S$ despite the error of $\learnprice$. We can bound the
% difference between the optimal expected welfare in $S_\xi$ and $S$.
This allows us to settle two issues: (1) we can guarantee that all of
the induced bundles lie in the set $S$ despite the error of
$\learnprice$ and (2) each bundle in $S_\xi$ is guaranteed to be
inducible since it is strictly positive in every coordinate (as
required by~\Cref{bunprice}).

\begin{restatable}{lemma}{tmac}\label{tmac}
  For any $\xi \in (0,1)$,
  $ \max_{x\in S} \SW(x) - \max_{x'\in S_\xi} \SW(x') \leq \lambda (d
  \xi)^\beta + \sqrt{d} \xi\|c\| $. 
\end{restatable}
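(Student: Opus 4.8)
The plan is to bound the left-hand side by exhibiting, for the maximizer $x^*$ of $\SW$ over $S$, a single point $y\in S_\xi$ whose welfare is close to $\SW(x^*)$; since $\max_{x'\in S_\xi}\SW(x')\ge \SW(y)$, this suffices. Concretely, let $x^*\in\argmax_{x\in S}\SW(x)$ (which exists since $S$ is compact and, by Lemma~\ref{caveman}, $\SW$ is continuous), and define $y$ by clamping each coordinate of $x^*$ into the interval $[\xi,\,s_j-\xi]$, i.e.\ $y_j=\max\{\xi,\ \min\{x^*_j,\ s_j-\xi\}\}$ for every $j\in[d]$. (We may assume $2\xi\le s_j$ for all $j$, as otherwise $S_\xi=\emptyset$ and the claim is vacuous; so each such interval is nonempty.) First I would check that $y\in S_\xi$: the coordinate bounds $\xi\le y_j\le s_j-\xi$ hold by construction, and $y\in\cF$ because clamping coordinates into $\prod_j[\xi,\,s_j-\xi]$ keeps us inside $\cF$ for the canonical feasible set $\cF=[0,1]^d$ (the clamped values lie in $[0,1]$) and, more generally, whenever $\cF$ is closed under this coordinatewise clamping.

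Next I would bound the displacement $\|x^*-y\|$. Since $x^*\in S\subseteq\cF\subseteq\RR_+^d$ we have $0\le x^*_j\le s_j$, so clamping $x^*_j$ into $[\xi,\,s_j-\xi]$ moves it by at most $\xi$ in each coordinate: if $x^*_j<\xi$ it becomes $\xi$, a move of $\xi-x^*_j\le\xi$; if $x^*_j>s_j-\xi$ it becomes $s_j-\xi$, a move of $x^*_j-(s_j-\xi)\le\xi$; otherwise it is unchanged. Hence $\|x^*-y\|_\infty\le\xi$, and therefore $\|x^*-y\|_1\le d\xi$ and $\|x^*-y\|_2\le\sqrt{d}\,\xi$.

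Finally I would combine these facts using the representation $\SW(\hat x)=\OPT(\hat x)-\langle c,\hat x\rangle$ from~\eqref{bundleform}:
\[
  \max_{x\in S}\SW(x)-\max_{x'\in S_\xi}\SW(x')\ \le\ \SW(x^*)-\SW(y)\ =\ \bigl(\OPT(x^*)-\OPT(y)\bigr)+\langle c,\,y-x^*\rangle.
\]
For the first term I would invoke the $(\lambda,\beta)$-H\"older continuity of $\OPT$ with respect to $\|\cdot\|_1$ (which, like its concavity, is inherited from the H\"older continuity of the valuations in Assumption~\ref{assv}) to get $\OPT(x^*)-\OPT(y)\le\lambda\|x^*-y\|_1^\beta\le\lambda(d\xi)^\beta$. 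For the second term, Cauchy--Schwarz gives $\langle c,\,y-x^*\rangle\le\|c\|\,\|y-x^*\|_2\le\sqrt{d}\,\xi\|c\|$. Adding the two bounds yields exactly $\lambda(d\xi)^\beta+\sqrt{d}\,\xi\|c\|$, as claimed.

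The step I expect to require the most care is the feasibility check $y\in\cF$: the displacement bound crucially needs an \emph{additive} (clamping) perturbation of $x^*$ rather than a multiplicative one — replacing $y$ by a convex combination of $x^*$ with an interior point of $S_\xi$ would only pull a coordinate sitting at value $s_j$ down by a factor, forcing the perturbation magnitude to carry an unwanted $1/\min_j s_j$ factor and breaking the stated (supply-independent) bound — so one must genuinely argue that clamping keeps the point in $\cF$. This is immediate for box-shaped $\cF$ (in particular $\cF=[0,1]^d$); a secondary, routine prerequisite is having the $(\lambda,\beta)$-H\"older continuity of $\OPT$ already established.
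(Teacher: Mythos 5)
Your proof follows essentially the same approach as the paper's: pick a nearby point $y\in S_\xi$, decompose $\SW(x^*)-\SW(y)=\bigl(\OPT(x^*)-\OPT(y)\bigr)+\langle c,\,y-x^*\rangle$, bound the first term by the $(\lambda d^{1-\beta},\beta)$-H\"older continuity of $\OPT$ in $\ell_1$ (\Cref{holderman}) and the second by Cauchy--Schwarz, then conclude from $\max_{x'\in S_\xi}\SW(x')\ge\SW(y)$. The one substantive difference is the choice of $y$: the paper sets $y=x^*-\xi\vec{1}$, whereas you clamp each coordinate into $[\xi,\,s_j-\xi]$. Your choice is actually the more careful one — the paper's translate need not lie in $S_\xi$ (or even in $\cF\subset\RR_+^d$) whenever some $x^*_j<2\xi$, since then $y_j=x^*_j-\xi<\xi$, and $y_j<0$ if $x^*_j<\xi$; the paper's proof asserts $y\in S_\xi$ without addressing this. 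Your clamping repairs that gap while still giving $\|x^*-y\|_\infty\le\xi$, hence the same $\ell_1$ and $\ell_2$ bounds of $d\xi$ and $\sqrt{d}\,\xi$. The residual caveat you flag is shared by both arguments: neither the translate nor the clamp is automatically inside a general convex $\cF$ (e.g.\ the simplex $\Delta_{d+1}$ used in the unit-demand reduction), so the step $y\in\cF$ tacitly assumes $\cF$ is a box or otherwise closed under the chosen coordinatewise perturbation — an assumption the paper leaves implicit as well.
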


% Finally, the following lemma translates an accuracy guarantee for
% optimizing in bundle space to an accuracy guarantee in price space.

% \begin{restatable}{lemma}{ginobini}\label{ginobini}
%   Let $C\subseteq \cF$. Suppose that a price vector $p\in \RR_+^d$
%   induces bundle $x_\psi^*(p)$ that satisfies $x_\psi^*(p)\in C$ and
%   \[
%     \SW(x_\psi^*(p)) \geq \max_{x\in C} \SW(x) - \alpha
%   \]
%   for some $\alpha \geq 0$. Then the prices $p$ satisfy
%   \[
%     \SW(p) \geq \max_{p' \in \RR_+^d\colon x_\psi^*(p')\in C} \SW(p') - \alpha.
%   \]
%   If $C = \cF$, then the prices $p$ satisfy
%   $\SW(p) \geq \max_{p' \in \RR_+^d} \SW(p') - \alpha$.
% \end{restatable}

% the set of inducible bundles $\left(\cF\cap \RR_{>0}^d\right)$
% is not compact, but the projection operator of subgradient descent
% requires a compact set of feasible points. To obtain a compact set, we
% will lower bound the coordinate of the induced bundles and truncate
% the set of feasible bundles to be
% $\left([\eps, \infty)^d \cap \cF\right)$ for some small value of
% $\eps$. As we show, this will only lead to a small loss in welfare.
\newcommand{\lot}{\mathrm{OPT}^\term{lot}} Putting all the pieces
together, we obtain our main algorithm $\op$ (full description presented
in \ifarxiv\Cref{alg:optprice}\else the full version\fi). To establish the welfare guarantee of
$\hat \cD$, we will compare to an even stronger benchmark---the
welfare of the optimal lottery over allocations. In particular, given
any constraint vector $s\in\RR_+^{d+1}$, a feasible lottery over
allocations is a randomized mapping $\pi\colon [n] \rightarrow \cF$
that assigns each buyer to a randomized bundle such that
$\Ex{v_i\sim \psi}{\pi(i)} \leq s$.  Let $\lot$ be the optimal social
welfare achieved by a lottery over allocations. The following is the
formal guarantee of $\op$ (corresponding to~\Cref{thm:intro-limited-expected}).

\iffalse
\begin{algorithm}[h]
  \caption{Learning the price vector to optimize social welfare:
    $\op(\alpha,\delta, s)$}
 \label{alg:optprice}
  \begin{algorithmic}

    \STATE{\textbf{Input:} Target accuracy $\alpha$, confidence
      parameter $\delta$, and per-round expected demand bound $s$}

 \INDSTATE{Initialize:
      \[
        \eps= \left(\frac{\alpha^2}{16 R S^2} \right)^{\beta/(5\beta -
          4)},\qquad M = \sqrt{d} (\lambda)^{1/\beta} \left(4d/\sigma
        \right)^{(1-\beta)/\beta}
    \]
    \[
      T = \frac{16R^2M^2}{\alpha^2 \eps^{(2-2\beta)/\beta}},\qquad
 \eta = \frac{R}{ \sqrt{T} M (1/\eps)^{(2- 2\beta)/\beta}}, \qquad \delta' = \delta
      / (T + 1), \qquad
      \xi = % \left(\frac{\alpha}{2d R (\lambda + \|c\|)}\right)^{1/\beta}\sw{change this}
      \left( \frac{\alpha}{4\lambda d^\beta + \sqrt{d} \|c\|} \right)^{1/\beta}
      \]
      Restricted bundle space $S_\xi$, initial bundle $x^1 \in S_\xi$
      and prices $p^1 = \learnprice(x^1, \eps, \delta')$}
    \INDSTATE{\textbf{For} $t = 1, \ldots , T$:} \INDSTATE[2]{Let
      $y^{(t+1)} = \Pi_{S_\xi}[ x^{t} + \eta (p^{t} - c)]$}

    \INDSTATE[2]{Let $p^{t+1} = \learnprice(x^{t+1}, \eps)$}
    \INDSTATE{Let $\overline x = 1/T \sum_{t=1}^T x^t$}
    \STATE{\textbf{Output:} the last price vector
      $\hat p = \learnprice(\overline x, \gamma, \delta')$}
    \end{algorithmic}
  \end{algorithm}
\fi

\begin{restatable}{theorem}{captainamerica}\label{captain}
  For any accuracy parameter $\alpha > 0$, confidence parameter
  $\delta > 0$, and subset
  $S = \{x\in \cF \mid x_j \leq s_j \mbox{ for each } j \in [d]\}$
  given by a supply vector $s$. Given query access to $\rep$, the
  instantiation $\op(\alpha, \delta, s)$ outputs a price vector $p'$
  that with probability at least $1 - \delta$ satisfies
  $x^*_\psi(p')\leq s$ and
  \[
    \SW(p') \geq \lot - \alpha.
  \]
  Furthermore, both the run-time of the algorithm and the number of
queries to $\rep$ is bounded by
  $\poly(d,1/\alpha, 1/\sigma, 1/\lambda, \log(1/\delta))$.
\end{restatable}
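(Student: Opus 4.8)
The plan is to recast the lottery benchmark $\lot$ as the maximum of the \emph{concave} bundle-form welfare $\SW(x)=\OPT(x)-\langle c,x\rangle$ over the feasible region $S=\{x\in\cF:x_j\le s_j\ \text{for all }j\}$ (\Cref{caveman} gives concavity, and the discussion after \cref{bundleform} identifies $\SW(x)$ with the expected welfare actually realized by any price vector inducing $x$), and then to maximize this concave function by noisy subgradient ascent, using $\learnprice$ both to produce approximate supergradients along the optimization path and to read off the final price vector. First I would check $\lot=\max_{x\in S}\SW(x)$: for any feasible lottery, concavity of each $v_i$ (Jensen) and linearity of cost let us derandomize without loss, so the optimum is a deterministic allocation $\{x_i\}_i\subset\cF$ whose aggregate $y:=\sum_i\psi(v_i)x_i$ lies in $S$, with welfare $\sum_i\psi(v_i)v_i(x_i)-\langle c,y\rangle\le\OPT(y)-\langle c,y\rangle=\SW(y)\le\max_{x\in S}\SW(x)$; conversely, for any $y\in S$ the $\SCP(y)$-optimal allocation is itself a feasible (deterministic) lottery of welfare at least $\OPT(y)-\langle c,y\rangle=\SW(y)$. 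Hence it suffices to output prices that induce a bundle in $S$ nearly maximizing $\SW$.

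Next I would shrink the feasible region so that every iterate is inducible: using \Cref{tmac}, choose $\xi>0$ with $\max_{x\in S_\xi}\SW(x)\ge\lot-\alpha/4$ where $S_\xi=\{x\in\cF:\xi\le x_j\le s_j-\xi\}$, so every point of $S_\xi$ is coordinate-wise strictly positive, hence inducible (\Cref{timmy}, \Cref{bunprice}). Then run the adversarially-noisy subgradient method of \Cref{noisyman} on $f=-\SW$: at the current iterate $x^t$, call $\learnprice(x^t,\eps,\delta')$ to obtain prices $p^t$ with $\|x^*_\psi(p^t)-x^t\|\le\eps$ with high probability; by \Cref{mr.envelope}, $c-p^t\in\partial f\big(x^*_\psi(p^t)\big)$ --- i.e.\ it is a \emph{true} subgradient of $f$, only at the slightly displaced point $\tilde x^t:=x^*_\psi(p^t)$. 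Taking the $\tilde x^t$ as the logical iterates and using non-expansiveness of the projection, the algorithmic update $x^{t+1}=\Pi_{S_\xi}[x^t+\eta(p^t-c)]$ can be written as $\tilde x^{t+1}=\Pi_{C}[\tilde x^t-\eta\,g^t]+\zeta^t$ with $g^t=c-p^t$ and $\|\zeta^t\|=O(\eps)$, matching the hypothesis of \Cref{noisyman} with $E=O(\eps)$, $C$ a mild enlargement of $S_\xi$ (still inside $\cF$, hence of norm $\le R$), and $G=\|c\|+\|\cP(\eps)\|$ where $\|\cP(\eps)\|$ is the price-domain radius of \cref{ironman} (polynomial in $d,1/\eps,1/\sigma,\lambda$). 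Finally output $\hat p=\learnprice(\bar x,\eps,\delta')$ for $\bar x=\tfrac1T\sum_t x^t$.

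It then remains to pick parameters and account for errors. \Cref{noisyman} gives $\SW(\bar x)\ge\max_{x\in S_\xi}\SW(x)-\big(RG/\sqrt T+GE\sqrt T\big)$ up to a H\"older-continuity slack of order $\lambda(\sqrt d\,\eps)^\beta+\|c\|\eps$ (using H\"older continuity of $\OPT$, inherited from \Cref{assv}) incurred in moving between the $x^t$, the $\tilde x^t$, and $\bar x$; moreover $\SW(\hat p)=\SW\big(x^*_\psi(\hat p)\big)\ge\SW(\bar x)-(\text{H\"older slack})$, while $x^*_\psi(\hat p)\le\bar x+\eps\vec 1\le s$ since $\bar x\in S_\xi$ and $\eps\le\xi$, which yields the hard feasibility constraint $x^*_\psi(\hat p)\le s$. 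Choosing $\eps$ small enough that each H\"older slack is $\le\alpha/4$ and then $T$ of order $(RG/\eps)^2$ with $\eta=R/(G\sqrt T)$ so that $RG/\sqrt T+GE\sqrt T\le\alpha/4$ makes the total loss at most $\alpha$; since $\eps$ enters $G$ only polynomially, this $T$ times the per-step query count $\poly(d,1/\eps,\log(1/\delta'))$ of \Cref{bunprice} is of the claimed polynomial form, and a union bound over the $\delta'=\delta/(T+1)$ failure events of the $T+1$ calls to $\learnprice$ gives confidence $1-\delta$.

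The step I expect to be the main obstacle is the reconciliation in the second paragraph: $\learnprice$ returns a supergradient at the \emph{induced} bundle $\tilde x^t$, not at the nominal iterate $x^t$, so the approximation error has to be routed into the perturbed-iterate model of \Cref{noisyman} --- carefully absorbing via non-expansiveness the perturbation that appears \emph{inside} the projection, handling the mismatch between projecting onto $S_\xi$ and carrying the logical iterates in its enlargement, and verifying that all iterates stay feasible. One must also control how the $\eps$-dependence of the price radius $\|\cP(\eps)\|$ --- which governs $G$, and through it $\eta$ and $T$ --- propagates through the parameter choices without breaking the polynomial dependence, and quantify throughout the translation of an $\eps$-sized error in bundle space into an $O(\eps^\beta)$-sized error in welfare via H\"older continuity of $\OPT$.
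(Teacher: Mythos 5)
Your proof follows essentially the same route as the paper: derandomize the lottery via Jensen's inequality to identify $\lot = \max_{x\in S}\SW(x)$; restrict to $S_\xi$ and pay an $\alpha/4$ tax via \Cref{tmac}; run the adversarially-perturbed subgradient method of \Cref{noisyman} on $-\SW$ over $S_\xi$, using $\learnprice$ to produce price vectors, \Cref{mr.envelope} to turn them into subgradients, and the $\eps$-inducing error as the perturbation $\xi_t$; finish by calling $\learnprice$ once more on $\bar x$ and union-bounding the $T+1$ failure probabilities. The one place you are more explicit than the paper is the bookkeeping in the second paragraph --- routing the fact that $c-p^t$ is a subgradient at the \emph{induced} bundle $\tilde x^t = x^*_\psi(p^t)$ rather than at the nominal iterate $x^t$ into the perturbed-iterate model of \Cref{noisyman} via non-expansiveness of the projection; the paper simply asserts that ``the perturbation error ... is precisely the inducing error of $\learnprice$,'' which is the same idea left implicit.
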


\paragraph{Remark.}{ The only part of the algorithm that interacts with
  the oracle $\rep$ (or the buyers) is $\learnprice$ of the inner
  layer. Since the $\learnprice$ only requires a bounded and unbiased
  estimate of $x_\psi^*(p)$ for each price vector $p$ it queries, we
  can replace $\rep$ by any procedure that can compute such unbiased
  estimate. This is crucial for solving the unit-demand problem.}

\iffalse
The result of~\Cref{thm:intro-limited-expected} follows directly
from~\Cref{captain} by applying $\op$ to optimize over the set
$C = \{x\in\RR_+^d \mid x_j \leq s_j \mbox{ for each }j\in [d]\}$
defined by the supply vector $s$.
\fi

%%% Local Variables:
%%% mode: latex
%%% TeX-master: "main"
%%% End:

%%% Local Variables:
%%% mode: latex
%%% TeX-master: "main"
%%% End:

\section{Unit-Demand Buyers with Indivisible Goods}\label{SEC:UNIT-DEMAND}
We now switch to the setting of indivisible goods and
unit-demand buyers. Our goal is to develop a computationally and query
efficient algorithm to find an approximately optimal
\emph{distribution} over prices subject to the constraint that the
per-round expected demand of each good $j$ is bounded by $s_j$.  In
particular, we will use the algorithm $\op$ in~\Cref{MAINSEC} as a
main tool for our
solution. % (Randomization over prices is critical for controlling the
% expected per-round demand---in the appendix, we give a distribution
% over unit-demand valuations showing that no fixed price vector can
% guarantee the expected demand of each good is bounded.)\sw{todo:!!!}
Throughout, we impose the following mild boundedness assumption on the
values.
\paragraph{Assumption on valuations}{There exists a constant upper
  bound $\vmax$ such that for any buyer $i\in [n]$ and item
  $j\in [d]$, $0 < v_{ij} \leq \vmax$.}

\paragraph{Overview: relax and regularize}{A natural starting point
  for solving our problem with $\op$ is to consider the \emph{linear
    relaxation} of unit demand valuations: that is we can view the
  buyers as having linear valuation functions over divisible goods, and optimizing over a
  feasible set of bundles that is the non-negative orthant of the $\ell_1$ ball. This relaxation maintains the property that buyers buy integral quantities of each good. However, this
  approach runs into a substantial difficulty, because linear valuation functions are not \emph{strongly} concave, and strong concavity was an important ingredient in our analysis of $\op$. Instead, we will imagine that we have access to a regularized version of the
  linear relaxation of our original problem: that is, we imagine that each buyer has a regularized valuation
  function of the form $\langle v, x\rangle + \eta H(x)$, where $H$
  is the entropy function (of course, in reality, we cannot modify the valuations of the buyers). We show that we can solve the (imagined) regularized version of the problem, and also that we can induce buyers to behave (in expectation) as if their valuation functions were regularized by appropriately perturbing the price vectors we present to them. Our solution then consists of the
  following three steps:
\begin{enumerate}
\item We show that the algorithm $\op$ can compute an approximately
  optimal price vector for the regularized version of problem,
  as long as the algorithm has access to an unbiased estimate of the
  expected demand of a random regularized buyer.
\item Next, we show that we can obtain such unbiased estimates given
  access to the revealed preference oracle $\rep(\psi)$ in the
  original un-regularized instance. The key ingredient is a novel
  price perturbation technique.
\item Finally, we show how to construct an approximately optimal
  price distribution based on the price vector output by $\op$.
\end{enumerate}
}

To facilitate the discussion, we will introduce a dummy good (indexed
as $(d+1)$) to represent the buyer's option of buying nothing. The
seller's price and each buyer's value for this item is always 0, and
the per-round demand upper bound is simply $s_{d+1} = 1$. Moreover, we will
write $\Delta_{d+1}$ to denote the set of all probability distributions
over the $(d+1)$ items (or simply the simplex over the items). For any
$x\in \Delta_{d+1}$, the entropy of $x$ is defined as
$H(x) = \sum_{j\in [d+1]} x_j \log{\frac{1}{x_j}}$.  The function $H$
is strongly concave with respect to the $\ell_1$ norm over the
simplex.

\subsection{Solving the regularized problem}
Given a probability distribution $\psi$ over value vectors and a
parameter $\eta$, we imagine a corresponding $\eta$-regularized problem with
a distribution $\tilde\psi$ over valuation functions: for each
valuation vector $v_i$ in the support of $\psi$, create a regularized
valuation function $\tv_i\colon \dd \rightarrow \RR_{>0}$ with the
same probability mass $\tilde \psi_{v_i} = \psi_{v_i}$ such that
$\tv_i(x) = \langle v_i, x\rangle + \eta H(x)$.

The regularized problem is an instance of the divisible good setting,
with feasible set $\cF = \dd$. Suppose that we have access to an unbiased
estimate for $x^*_{\tilde \psi}(p)$ for any price vector $p$: then we can apply $\op$ from~\Cref{MAINSEC} to compute
approximately optimal prices. There is a
small obstacle in the analysis---the regularized valuations defined
above are not monotonically increasing in each coordinate (because of
the entropy term). However, recall that we were able to substitute monotonicity for~\Cref{rm:sat}--- that the convex program $\SCP(\hat x)$ is
supply-saturating for each $\hat x$ in our feasible region $\cF = \dd$. This is indeed satisfied in our setting:

\begin{lemma}
  Let $\cF = \dd$ and $\tilde \psi$ be a distribution over the
  regularized valuation functions of the form
  $[\langle v, x\rangle + \eta H(x)]$. Then the convex program
  $\SCP(\hat x)$ defined as
  \begin{align*}
    \max_{x\in \dd^n}\quad   \sum_{v_i\in \cV} \tilde \psi(v_i)\,\left[ \langle v_i , x_i\rangle + \eta H(x_i)\right] \, \qquad
    \mbox{ such that } \quad  \sum_{v_i\in \cV} \tilde\psi(v_i)\,  x_{ij} \, \leq \hat x_j \quad\mbox{for every }j \in [d + 1]
\end{align*}
is supply-saturating for any $\hat x\in \dd$.
\end{lemma}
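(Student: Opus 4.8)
The plan is to prove something slightly stronger than stated: \emph{every} feasible point of $\SCP(\hat x)$ already saturates all $d+1$ supply constraints. The reason is purely geometric — each buyer's allocation vector and the target bundle $\hat x$ all live in the probability simplex, so the two sides of the aggregate supply constraint both sum to $1$ over the $d+1$ coordinates, which leaves no slack. The objective (the linear term, the entropy regularizer $\eta H$, the distribution $\tilde\psi$) plays no role at all in the argument.

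Concretely, I would proceed as follows. First, observe that $\SCP(\hat x)$ is feasible: taking $x_i = \hat x$ for every $i$ gives $x_i \in \dd$ and $\sum_{v_i} \tilde\psi(v_i)\, x_{ij} = \hat x_j \le \hat x_j$, so an optimal solution $x\bl$ exists. Next, let $x \in \dd^n$ be \emph{any} feasible solution and sum its supply constraints over $j \in [d+1]$:
\[
\sum_{j\in[d+1]} \sum_{v_i\in\cV} \tilde\psi(v_i)\, x_{ij} \;\le\; \sum_{j\in[d+1]} \hat x_j .
\]
The right-hand side equals $1$ because $\hat x \in \dd$. The left-hand side equals $\sum_{v_i} \tilde\psi(v_i) \sum_{j\in[d+1]} x_{ij} = \sum_{v_i} \tilde\psi(v_i) = 1$, using $x_i \in \dd$ (so $\sum_j x_{ij} = 1$) for the inner sum and the fact that $\tilde\psi$ is a probability distribution for the outer sum. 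Hence the displayed inequality holds with equality. Since it is a sum of the individual inequalities $\sum_{v_i}\tilde\psi(v_i)\, x_{ij} \le \hat x_j$, equality of the sums forces each of these $d+1$ inequalities to be an equality. Applying this to $x = x\bl$ shows $\sum_{v_i}\tilde\psi(v_i)\, x\bl_{ij} = \hat x_j$ for all $j$, i.e.\ $\SCP(\hat x)$ is supply-saturating.

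There is no real obstacle here; the only point requiring care is the convention that $\dd$ is the probability simplex (coordinates summing to exactly $1$), not the truncated simplex $\{x\ge 0 : \sum_j x_j \le 1\}$ — and this is exactly why the dummy good indexed $d+1$ (the ``buy nothing'' option, with value and price $0$ and demand bound $s_{d+1}=1$) was introduced: it normalizes every buyer's total purchase to $1$. With that convention in place the lemma is immediate, and it supplies precisely the supply-saturation hypothesis of \Cref{rm:sat} needed to run $\op$ on the regularized instance despite the non-monotonicity introduced by the entropy term.
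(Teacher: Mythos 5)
Your proof is correct and follows essentially the same route as the paper's: sum the supply constraints over $j \in [d+1]$, observe both sides equal $1$ because $\hat x$ and each $x_i$ (and hence their $\tilde\psi$-average) lie in the simplex, and conclude that each inequality must be tight. The only difference is presentational---you observe the conclusion holds for every feasible point and add an explicit feasibility check, while the paper states it just for the optimal solution---but the key observation and mechanism are identical.
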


\begin{proof}
  Let $x\bl$ be an optimal solution to $\SCP(\hat x)$. The convex
  combination $\Ex{\tilde\psi}{x\bl_i} \in \dd$, so we must have
  $\sum_{j\in [d+1]} \Ex{\tilde\psi}{x\bl_{ij}} = 1$. Note that $\hat x$
  also lies in $\dd$ (these are the only inducible average bundles for unit demand buyers), so $\sum_{j\in [d+1]} \hat x_j = 1$. It follows
  that all of the constraints $\Ex{\tilde\psi}{x\bl_{ij}} \leq \hat x_j$ are
  saturated.
\end{proof}

Moreover, the feasible set $\dd$ is convex, closed, has a non-empty
interior, and each of the regularized valuations is $\eta$-strongly
concave with respect to the $\ell_2$ norm,\footnote{This follows from the fact
  that the $\ell_1$ norm of any vector is bigger than its $\ell_2$
  norm. } and $((\sqrt{d+1} + \vmax, 1/2)$-H\"{o}lder continuous
(see \ifarxiv\Cref{lem:holderdude}\else the full version\fi for a proof). Thus we satisfy all conditions needed to apply $\op$, so long as we have access to unbiased estimates of $x^*_{\tilde\psi}(p)$ for any price
vector $p\in \RR_{+}^{d+1}$.

\iffalse
\begin{lemma}
  Suppose that there exists an oracle that takes as input $p$ and
  outputs an unbiased estimate for $x_{\tilde \psi}(p)$.  For any
  accuracy parameter $\alpha > 0$, confidence parameter $\beta > 0$,
  and subset
  $S = \{x\in \cF \mid x_j \in s_j \mbox{ for each } j \in [d]\}$
  given by a supply vector $s$, the instantiation
  $\op(\alpha, \beta, s)$ outputs a price vector $p'$ that with
  probability at least $1 - \beta$ satisfies
  $\SW(p') \geq \max_{p\in \RR_+^d\colon x^*_{\tilde\psi}(p) \in S}
  \SW(p) - \alpha.$ Furthermore, both the run-time of the algorithm
  and the number of rounds of interactions with the buyers is bounded
  by $\poly(d,1/\eps,1/\eta, \vmax, \log(1/\beta))$.
\end{lemma}
\fi

\subsection{From price perturbation to value regularization}

Solving the regularized problem using $\op$ requires an unbiased
estimate for $x_{\tilde \psi}(p)$ for any price vector $p$ that the
algorithm queries, but in our problem instance the valuations are
actually drawn from $\psi$ (without the entropy term). To obtain such
an estimate, we give a price perturbation technique that allows us to
simulate the response for the regularized buyers: given any price
vector $p$, we will perturb each coordinate to
obtain a noisy price vector $p'$, with the effect that the random item purchased by
the unit-demand buyer in expectation over the perturbation equals the bundle purchased by
her regularized counterpart.\footnote{The technique of simulating
  regularization through perturbation was also used to establish the
  equivalence between ``Follow the Regularized Leader'' and ``Follow
  the Perturbed Leader''~\citep{MWnote}. In our setting, it is important that we can obtain this effect by perturbing the price vector, rather than the valuation vector, because we do not have control over buyer valuations.}

\begin{lemma}[\cite{MWnote}]\label{MWnote}
  Fix any $\eta > 0$ and any vector $u \in \RR_{>0}^{d+1}$. For each
  $j\in [d+1]$, let $G_j$ be a random number drawn independently and
  uniformly at random from $[0, 1]$, then
\[
  x^* = \argmax_{x\in \Delta_{d+1}} [\langle u, x\rangle + \eta H(x)] =
    \Ex{}{\argmax_{x\in \Delta_{d+1}} \left[\langle u, x\rangle + \eta
        \sum_{j\in [d+1]} x_j \ln \left(\ln(1/G_j)\right)\right]}.
\]
% where
% $$x^*_j = \frac{\exp(u_j/\eta)}{\sum_{j'\in [d]} \exp(u'_j/\eta)}$$
% for each $j\in [d+1]$.
\end{lemma}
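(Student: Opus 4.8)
The plan is to evaluate both sides of the asserted identity in closed form and check that they agree coordinate by coordinate. \emph{Left side.} The map $x\mapsto\langle u,x\rangle+\eta H(x)$ is strictly concave on the simplex $\Delta_{d+1}$, and its gradient diverges at the boundary (indeed $\partial H/\partial x_j=-(1+\ln x_j)\to+\infty$ as $x_j\to 0^+$), so its maximizer lies in the relative interior. Imposing stationarity with a single Lagrange multiplier $\mu$ for $\sum_j x_j=1$ yields $u_j-\eta(1+\ln x_j)=\mu$ for every $j$, hence $x^*_j\propto e^{u_j/\eta}$ and, after normalization, $x^*_j=e^{u_j/\eta}\big/\sum_k e^{u_k/\eta}$, i.e. the Gibbs (softmax) distribution at temperature $\eta$.

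\emph{Right side.} Fix a realization of $G=(G_1,\dots,G_{d+1})$ and write $w_j=u_j+\eta\ln\ln(1/G_j)$. The inner objective $x\mapsto\langle w,x\rangle$ is linear, so its maximum over $\Delta_{d+1}$ is attained at a vertex; since the perturbations $\ln\ln(1/G_j)$ are independent with a non-atomic law, with probability $1$ the coordinates $w_j$ are pairwise distinct and the maximizer is the single vertex $e_J$ with $J=\argmax_j w_j$. Hence $\Ex{}{\,\argmax_{x\in\Delta_{d+1}}\langle w,x\rangle}=\Ex{}{e_J}=(\Pr[J=j])_{j\in[d+1]}$, so it suffices to prove $\Pr[J=j]=x^*_j$ for every $j$.

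\emph{Matching the winning probabilities.} This is the classical Gumbel-max trick underlying the equivalence of ``Follow the Perturbed Leader'' and ``Follow the Regularized Leader'' \cite{MWnote}. Since each $G_j$ is uniform on $[0,1]$, the quantities $E_j:=\ln(1/G_j)$ are i.i.d.\ standard exponentials, so $\ln\ln(1/G_j)=\ln E_j$ is, up to sign, a standard Gumbel variable. Dividing the comparison of $w_j$ and $w_k$ through by $\eta$ and applying the appropriate monotone (exponential) reparametrization recasts $J$ as the index selected by a race among independent exponential clocks whose rates are proportional to $e^{u_j/\eta}$; by the standard fact that among independent $\mathrm{Exp}(\lambda_j)$ variables the minimum is attained at index $j$ with probability $\lambda_j\big/\sum_k\lambda_k$, index $j$ is chosen with probability $e^{u_j/\eta}\big/\sum_k e^{u_k/\eta}=x^*_j$, as required. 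The main obstacle is precisely this last step — carefully tracking the sign and monotonicity bookkeeping that turns the perturbed argmax into an exponential race, and then invoking (or re-deriving by a one-line integral) the competing-exponentials identity; the closed-form computation of $x^*$ and the reduction of the expectation to a probability vector are routine.
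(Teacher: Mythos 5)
The paper itself does not prove this lemma---it is cited from \cite{MWnote} and used as a black box---so there is no in-paper argument to compare against; I am reviewing your proposal on its own terms. Your outline has the right shape: the softmax form $x^*_j=e^{u_j/\eta}/\sum_k e^{u_k/\eta}$ of the left-hand side via interior stationarity is correct, and the collapse of the expectation on the right to the vector of winning probabilities $(\Pr[J=j])_j$ (using that perturbed linear optimization over the simplex selects a unique vertex almost surely) is also correct.

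However, the step you flag as the ``main obstacle''---the sign and monotonicity bookkeeping---is exactly where the argument breaks, and deferring it hides a real problem. The standard Gumbel variable whose perturbed argmax reproduces softmax is $-\ln\ln(1/G)$, not $+\ln\ln(1/G)$, and the sign is not cosmetic. With the $+\eta\ln\ln(1/G_j)$ perturbation appearing in the statement, set $E_j=\ln(1/G_j)\sim\mathrm{Exp}(1)$; exponentiating the comparison gives $J=\argmax_j\bigl(u_j+\eta\ln E_j\bigr)=\argmax_j E_j\,e^{u_j/\eta}$, the index attaining the \emph{maximum} of independent exponentials with distinct rates. The competing-exponentials identity you invoke governs the \emph{minimum}, and the two do not agree once $d+1\ge 3$. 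Concretely, with $e^{u_1/\eta}=2$ and $e^{u_2/\eta}=e^{u_3/\eta}=1$, direct integration gives $\Pr[J=1]=\int_0^\infty e^{-t}(1-e^{-2t})^2\,dt=1-\tfrac{2}{3}+\tfrac{1}{5}=\tfrac{8}{15}$, not the softmax value $\tfrac{1}{2}$. (Two coordinates appear to agree because $\ln(E_1/E_2)$ is symmetric about $0$, which is why a binary check does not catch this.) The ``exponential clocks with rates $e^{u_j/\eta}$, minimum wins'' picture you describe is in fact the one produced by the \emph{other} sign: with $-\eta\ln\ln(1/G_j)$ one gets $J=\argmin_j E_j\,e^{-u_j/\eta}$, a genuine race among $\mathrm{Exp}(e^{u_j/\eta})$ clocks, and then $\Pr[J=j]=e^{u_j/\eta}/\sum_k e^{u_k/\eta}$ as claimed. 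So you should not wave off the sign: carrying it through reveals both that your reparametrization as described only follows from the negated noise, and that the lemma as printed has the sign of the Gumbel perturbation reversed.
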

\iffalse The random variable $\ln \ln(G_j)$ is also called the
\emph{Gumbel noise}. As a result, if we add Gumbel noise to each price
over the items, the buyer's favorite item in expectation equals the
utility maximizing bundle for the regularized valuation
$v(x) + \eta H(x)$.  \fi \newcommand{\conv}{\term{Convert}} The random
variable $\ln \ln(G_j)$ is distributed according to the \emph{Gumbel}
distribution. One immediate technical issue we have is that the Gumbel
distribution is unbounded, so the perturbed prices might be
negative. We also need to guarantee that the price on the dummy good
$\perp$ is
0. % To handle this issue, we provide a procedure to convert
% any noisy price vector into a non-negative one that puts 0 on the
% dummy good and guarantees that any unit-demand buyer's favorite item
% remains the same under the new converted price vector.
% % \begin{lemma}
%   Fix any $\eta > 0$ and any valuation $v\in \RR_{>0}^d$ and price
%   vector $p\in \RR_{\geq 0}^d$. For each $j\in [d + 1]$, let $G_j$ be
%   a random number drawn independently and uniformly at random from
%   $[0, 1]$, then
%   \[
%   \argmax_{x\in \Delta_d} [\langle v, x\rangle + \eta H(x) - \langle p
%   , x\rangle] = \Ex{}{\argmax_{x\in \Delta_d} \left[\langle v,
%       x\rangle - \langle p - \eta \ln \ln(1/G_j), x\rangle\right]}
% \]
% \end{lemma}
To overcome this issue, we will translate the perturbed price vector $p$ into a
non-negative price vector $p'$ with the following procedure: set
$p'_{d+1} = 0$ and
$p'_j = (p_j - p_{d+1}) - \min\{0 , \min_{j'\in [d]} (p_{j'} -
p_{d+1})\}$.  We will refer to this procedure as $\conv$, and show
that the choice made by any unit-demand buyer remains the same under
the new price vector $p'$.

\begin{restatable}{lemma}{trans}\label{trans}
  Let $p\in \RR^{d+1}$ any real-valued vector and let $p' = \conv(p)$.
  Then for any vector $v\in \RR_{>0}^{d+1}$ such that $v_{d+1} = 0$,
  \[
    \argmax_{j\in [d+1]} [v_j - p_{j}] = \argmax_{j\in [d+1]} [v_j - p'_j]
  \]
  Furthermore, $p'_j\geq 0$ for all $j\in [d]$.
\end{restatable}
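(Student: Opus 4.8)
The plan is to reduce $\conv$ to a transparent description and then track its effect on the buyer's $\argmax$. First I would normalize: since $\argmax_{j\in[d+1]}[v_j-p_j]$ is unchanged by subtracting the constant $p_{d+1}$ from the objective, it depends only on the shifted prices $q_j:=p_j-p_{d+1}$, and note $q_{d+1}=0$. Unpacking the definition of $\conv$, with $m:=\min\{0,\ \min_{j\in[d]}q_j\}\le 0$ the output satisfies $p'_{d+1}=0$ and $p'_j=q_j-m$ for $j\in[d]$. Non-negativity is then immediate: $q_j\ge\min_{j'\in[d]}q_{j'}\ge m$ gives $p'_j=q_j-m\ge 0$, and $p'_{d+1}=0$, settling the last claim of the lemma.

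For the $\argmax$ equality, the key observation is that $\conv$ inflates the prices of the $d$ real goods by the \emph{uniform} amount $-m\ge 0$ while pinning the dummy good at price $0$. If $m=0$ then $p'=q$ and there is nothing to prove, so the substantive case is $m<0$. Then $\min_{j\in[d]}q_j=m<0$, so some real good $k$ has $q_k=m<0$, whence its utility satisfies $v_k-q_k=v_k-m>v_k>0$, using the hypothesis $v_k>0$. The same good has transformed price $q_k-m=0$ and hence transformed utility $v_k>0$. Thus in \emph{both} the original instance (prices $q$) and the transformed instance (prices $p'$), the best real good strictly beats the dummy good, whose utility is $0$; so in both instances the maximizers all lie in $[d]$. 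Restricted to $[d]$, the two objectives $v_j-q_j$ and $v_j-p'_j=(v_j-q_j)+m$ differ by the constant $m$ and therefore have identical maximizers. Combining with the normalization step yields $\argmax_{j\in[d+1]}[v_j-p_j]=\argmax_{j\in[d+1]}[v_j-p'_j]$.

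The one place where the argument could break, and hence the step to be careful about, is precisely that $\conv$ is \emph{not} a uniform shift of all $d+1$ coordinates: the dummy good is held at $0$ while the real goods are pushed up by $-m$. When $m<0$ this makes every real good more expensive, which a priori might move the argmax onto the dummy. The resolution is the observation above --- a negative $m$ can only occur because some real good had a negative posted price, and that good's utility already exceeds $-m$, so inflating its price by $-m$ still leaves it strictly above the zero-utility dummy. This is the only point at which the assumption $v_j>0$ for $j\in[d]$ is used; the remaining manipulations are just shift-invariance of $\argmax$.
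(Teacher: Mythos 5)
Your proof is correct and proceeds along essentially the same lines as the paper's: both rest on the observation that $\conv$ shifts the $d$ real prices by a uniform amount $-m \ge 0$ while pinning the dummy at $0$, and both invoke $v_j>0$ at exactly the point where one must rule out the argmax migrating onto the dummy good. The difference is purely organizational. The paper cases on which item the buyer picks under $p$ (the dummy vs. some $j^*\in[d]$), with sub-cases on the sign of $m$, and shows the same item remains optimal under $p'$; your proof cases on the sign of $m$ up front and, in the substantive case $m<0$, shows directly that the dummy is strictly suboptimal in \emph{both} instances and then uses shift-invariance of $\argmax$ on $[d]$. Your organization is a bit cleaner, needs fewer inequality chains, and --- because it argues about the full maximizer sets rather than a single chosen item --- more transparently establishes the stated set equality of the two $\argmax$'s (the paper's two-case analysis reads more like a one-directional preservation of the buyer's choice under a fixed tie-breaking rule). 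No gap.
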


Combining the Gumbel noise addition and the procedure $\conv$, we can
now obtain unbiased estimates for $x^*_{\tilde \psi}(p)$ using feedback
from $\rep(\psi)$, the revealed preference oracle for the original
problem instance. More formally, for any fixed price vector $p$,
consider the following distribution $\cD(p)$ of random prices:
\begin{enumerate}
\item For each $j\in [d+1]$, let $G_j$ be a random number drawn
  independently and uniformly at random from $[0, 1]$, and let
  $\tilde p_j = p_j + \ln \ln(G_j)$ for each $j\in [d+1]$.
\item Output $p' = \conv(\tilde p)$.
\end{enumerate}

\newcommand{\simulate}{\text{\sc Sim}}

Given this subroutine for generating random prices, we have a
procedure $\simulate$ (presented in~~\Cref{alg:siman}) for obtaining an
unbiased estimate of $x^*_{\tilde\psi}(p)$ for any price vector $p$.
We will establish the correctness of $\simulate$ in~\Cref{crucial}.

\begin{algorithm}[h]
  \caption{Simulate the response of a regularized buyer $\simulate(p, \eta)$}
 \label{alg:siman}
  \begin{algorithmic}
    \STATE{\textbf{Input:} A price vector p and regularization parameter $\eta$}

    \INDSTATE[1]{Let $p'$ a price vector drawn from $\cD(p)$}
    \INDSTATE[1]{Query the oracle $\rep(\psi)$ with $p'$ (i.e. post
      $p'$ to a random consumer) and obtain feedback $x'$}
    \STATE{\textbf{Output:} $x'$ as an unbiased estimate for
      $x^*_{\tilde\psi}(p)$}
    \end{algorithmic}
  \end{algorithm}

\begin{restatable}{lemma}{crucial}\label{crucial}
  Fix any price vector $p\in \RR_+^{d+1}$, parameter $\eta > 0$, and
  any distribution $\psi$ over value vectors in $\RR_{>0}^{d+1}$.  Let
  $x'$ be the estimate output by $\simulate(p, \eta)$, then
  $\Ex{\psi, \simulate}{x'} = x^*_{\tilde v}(p)$.
\end{restatable}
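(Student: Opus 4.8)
The plan is to unwind the definition of $\simulate$ and reduce everything to a single invocation of \Cref{MWnote}. By construction, $\simulate(p,\eta)$ first samples a perturbed price vector $p'\sim\cD(p)$ (i.e.\ sets $\tilde p_j = p_j + \eta\ln\ln(1/G_j)$ for i.i.d.\ uniform $G_1,\dots,G_{d+1}$ on $[0,1]$, then $p' = \conv(\tilde p)$), and then queries $\rep(\psi)$ at $p'$, which draws $v\sim\psi$ and returns $x' = x^*_v(p')$. Writing $G=(G_1,\dots,G_{d+1})$ and applying the tower rule over the independent sources of randomness,
\[
\Ex{\psi,\simulate}{x'} \;=\; \Ex{v\sim\psi}{\;\Ex{G}{\,x^*_v\big(\conv(\tilde p)\big)\,}\;},
\]
so it suffices to fix an arbitrary value vector $v$ (extended by $v_{d+1}=0$, with the dummy good kept at price $0$ by $\conv$) and show $\Ex{G}{x^*_v(\conv(\tilde p))}=x^*_{\tv}(p)$, where $\tv(x)=\langle v,x\rangle+\eta H(x)$.

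Next I would strip off the $\conv$ step. Since $v_{d+1}=0$, \Cref{trans} gives $\argmax_{j\in[d+1]}[v_j-(\conv(\tilde p))_j]=\argmax_{j\in[d+1]}[v_j-\tilde p_j]$, and since the buyer is unit-demand, $x^*_v(\conv(\tilde p))$ is the vertex $e_{j^\star}$ of $\Delta_{d+1}$ with $j^\star\in\argmax_{j}[v_j-\tilde p_j]$; the maximizer is a.s.\ unique (the Gumbel noise is atomless, so ties have probability zero), hence $x^*_v(\conv(\tilde p))$ is well defined irrespective of the tie-breaking rule. Substituting $\tilde p_j = p_j + \eta\ln\ln(1/G_j)$, the quantity being maximized over $j$ is exactly $\langle v-p,\,e_j\rangle + \eta\sum_{j'} (e_j)_{j'}\ln\ln(1/G_{j'})$ up to sign conventions matching \Cref{MWnote}, so $e_{j^\star}$ coincides with the random maximizer appearing inside the expectation on the right-hand side of \Cref{MWnote} instantiated with $u = v-p$. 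There is one bookkeeping point: \Cref{MWnote} requires $u\in\RR_{>0}^{d+1}$, which $v-p$ need not satisfy. This is harmless, because both $\argmax_{x\in\Delta_{d+1}}[\langle u,x\rangle+\eta H(x)]$ and, for every fixed realization of $G$, $\argmax_j[u_j+\eta\ln\ln(1/G_j)]$ are invariant under adding a common constant to all coordinates of $u$ (using $\sum_j x_j=1$ on the simplex for the former, shift-invariance of $\argmax$ for the latter); so we apply the lemma to $v-p+M\vec{1}$ for $M$ large enough that all coordinates are positive and transfer the conclusion back to $u=v-p$.

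Applying \Cref{MWnote} now yields
\[
\Ex{G}{x^*_v(\conv(\tilde p))} \;=\; \argmax_{x\in\Delta_{d+1}}\big[\langle v-p,x\rangle+\eta H(x)\big] \;=\; \argmax_{x\in\Delta_{d+1}}\big[\tv(x)-\langle p,x\rangle\big] \;=\; x^*_{\tv}(p),
\]
and averaging over $v\sim\psi$ gives $\Ex{\psi,\simulate}{x'}=\Ex{v\sim\psi}{x^*_{\tv}(p)}=x^*_{\tilde\psi}(p)$, which is the claimed identity. I expect the only genuine obstacle to be the middle paragraph: carefully matching the \emph{price} perturbation of $\cD(p)$ (which also renormalizes through $\conv$) to the perturbation of a \emph{linear objective} in \Cref{MWnote}, and cleanly disposing of the positivity hypothesis and the measure-zero tie events. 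The remaining steps are direct substitutions and require no estimates or convergence arguments.
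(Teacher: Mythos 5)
Your proof is correct and follows the same route as the paper's terse one-line argument: expand $\Ex{\psi,\simulate}{x'}$ by the tower rule, strip the $\conv$ step via \Cref{trans}, and invoke the Gumbel--entropy duality of \Cref{MWnote} to recover $x^*_{\tilde\psi}(p)$. The extra care you take around the positivity hypothesis $u\in\RR_{>0}^{d+1}$ in \Cref{MWnote} (disposed of by shift-invariance of the argmax over $\Delta_{d+1}$) and around the measure-zero tie events under the atomless Gumbel noise is a genuine tightening, since the paper's own proof never mentions \Cref{trans} or either of these points explicitly.
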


Using the subroutine $\simulate$ to obtain unbiased estimate of
$x^*_{\tilde \psi}(p)$ for different price vectors $p$, we can now
instantiate $\op$ to solve the $\eta$-regularized problem.

\subsection{Wrap-up: An approximately optimal distribution over prices}
Let $\hat p$ be the price vector output by $\op$ when solving the
$\eta$-regularized problem, and consider the distribution over prices
$\hat\cD = \cD(\hat p)$.  Similar to the divisible goods setting, we
will again compare to the optimal lottery over allocations. In
particular, the optimal lottery is given by
\begin{align*}
  \max_{x\in \dd^{n}} \sum_{i\in [n]} \psi(v_i) [ \langle v_i - c, x_i \rangle] \qquad
  \mbox{such that} \qquad \sum_{i\in [n]} \psi(v_i) x_{ij} \leq s_j \quad \mbox{ for each } j\in [d+1]
\end{align*}

\newcommand{\oeta}{\mathrm{OPT}^{\eta}}
Let $x^\star$ and $\lot$ be the optimal solution and value for the
program defined above. Since any distribution of prices is just
inducing a lottery over allocations, we know that
\[
\max_{\cD \mbox{ s.t. } x_\psi^*(\cD)\leq s} \SW(\cD) \leq \lot
\]
where we write $x_\psi^*(\cD) = \Ex{p\sim \cD, v\sim \psi}{x_v^*(p)}$
to denote expected demand over the goods.

The following lemma bounds the sub-optimality of $\hat \cD$ compared
to $\lot$ in terms of the regularization parameter $\eta$ and accuracy
guarantee of the price vector $\hat p$.

\begin{restatable}{lemma}{adderror}\label{adderror}
  Fix any regularization parameter $\eta$. Suppose that $\hat p$ is an
  $\eps$-approximately optimal price vector for the $\eta$-regularized
  problem. Then the distribution of prices $\hat \cD = \cD(\hat p)$
  satisfies
\[
  \SW(\hat \cD) \geq \lot - \eps -\eta \log(d+1).
\]
\end{restatable}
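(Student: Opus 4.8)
The plan is to bound $\SW(\hat\cD)$ from below by a chain of three inequalities: first pass from the true unit-demand welfare of the random-price distribution $\hat\cD=\cD(\hat p)$ to the welfare $W_\eta(\hat p)$ of $\hat p$ in the $\eta$-regularized instance (losing only the entropy term $\eta\log(d+1)$); then use that $\hat p$ is $\eps$-approximately optimal in that instance to get $W_\eta(\hat p)\ge\oeta-\eps$; and finally observe that $\oeta\ge\lot$, since adjoining the nonnegative regularizer can only raise the value of the optimal lottery.

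The heart of the argument is the first step. Fix a buyer type $v_i$. By \Cref{MWnote} together with \Cref{trans} (which says the post-processing $\conv$ does not change any buyer's chosen item)---equivalently, by \Cref{crucial}---the expected item purchased by a type-$i$ buyer when the price is drawn from $\cD(\hat p)$ equals $x^*_{\tv_i}(\hat p)=\argmax_{x\in\dd}[\langle v_i,x\rangle+\eta H(x)-\langle \hat p,x\rangle]$. Since a type-$i$ buyer contributes the \emph{linear} quantity $\langle v_i-c,\cdot\rangle$ to welfare, the expectation passes through the inner product: $\Ex{p'\sim\cD(\hat p)}{\langle v_i-c,\,x^*_{v_i}(p')\rangle}=\langle v_i-c,\,x^*_{\tv_i}(\hat p)\rangle$. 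Averaging over $v_i\sim\psi$ and comparing with the regularized welfare $W_\eta(\hat p)=\sum_i\psi(v_i)[\langle v_i-c,\,x^*_{\tv_i}(\hat p)\rangle+\eta H(x^*_{\tv_i}(\hat p))]$ gives $\SW(\hat\cD)=W_\eta(\hat p)-\eta\sum_i\psi(v_i)H(x^*_{\tv_i}(\hat p))$, and since $0\le H(x)\le\log(d+1)$ for $x\in\dd$, we conclude $\SW(\hat\cD)\ge W_\eta(\hat p)-\eta\log(d+1)$.

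For the remaining two steps: the hypothesis that $\hat p$ is $\eps$-approximately optimal for the regularized instance---which is precisely what \Cref{captain} yields when $\op$ is run on that instance with $\simulate$ providing unbiased demand estimates---gives $W_\eta(\hat p)\ge\oeta-\eps$, where $\oeta$ is the optimal lottery value of the regularized program. Plugging the optimal un-regularized lottery $x^\star$ (which attains $\lot$ and is still feasible, since the constraint set is unchanged by regularization) into the regularized objective shows $\oeta\ge\lot+\eta\sum_i\psi(v_i)H(x^\star_i)\ge\lot$. Chaining the three bounds gives $\SW(\hat\cD)\ge\oeta-\eps-\eta\log(d+1)\ge\lot-\eps-\eta\log(d+1)$. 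The main obstacle is the bookkeeping in the first step: the original problem's welfare is computed from the \emph{integral} bundles $x^*_{v_i}(p')$ realized under each perturbed price vector, whereas the regularized welfare is evaluated at the \emph{fractional} expected bundle $x^*_{\tv_i}(\hat p)$; linearity of the value-minus-cost contribution is exactly what makes the two agree up to the (nonnegative) entropy gap, and \Cref{crucial} is exactly what equates the two bundles in expectation.
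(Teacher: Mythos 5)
Your proof is correct and follows essentially the same route as the paper's: both decompose the bound into (i) $\SW(\hat\cD)$ differs from the regularized welfare of $\hat p$ by exactly the expected entropy term, which is at most $\eta\log(d+1)$ (via \Cref{crucial}/\Cref{MWnote} and \Cref{trans}), (ii) the $\eps$-approximate optimality of $\hat p$ against $\oeta$, and (iii) $\oeta\geq\lot$ because the regularized objective dominates the linear one on the unchanged feasible set. The only difference is cosmetic: you state the three inequalities and chain them explicitly, whereas the paper presents them in the reverse order and with somewhat looser bookkeeping around the definition of $\oeta$.
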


Therefore, to achieve a target accuracy of $\alpha$, it suffices to
instantiate $\op$ with accuracy parameter $\epsilon = \alpha/2$ to
solve the $\eta$-regularized problem with
$\eta = \alpha/(2 \log(d+1))$. Putting all the pieces together, we
have our algorithm $\opu$ (formally presented
in \ifarxiv\Cref{alg:optpriceUD}\else the version\fi) that achieves the following main result,
which recovers~\Cref{thm:intro-unit-demands}.

\iffalse
\begin{algorithm}[h]
  \caption{Learning the price vector to optimize social welfare for
    unit-demand buyers and indivisible goods: $\opu(\alpha,\delta, s)$}
 \label{alg:optpriceUD}
  \begin{algorithmic}

    \STATE{\textbf{Input:} Target accuracy $\alpha$, confidence
      parameter $\delta$, and per-round expected demand bound
      $s$}

    \INDSTATE{Initialize:
      $\eps= \alpha/2, \qquad \eta = \alpha/(2\log(d+1))$}
    \INDSTATE{Run $\op(\eps, \delta, s)$; For each price vector $p$
      queried by $\op$, obtain an unbiased estimate for $x^*_{\tilde\psi}(p)$
      by calling $\simulate(p, \eta)$} \INDSTATE{Let $\hat p$ be the
      price vector output by $\op$} \STATE{\textbf{Output:} the
      distribution over prices $\hat \cD = \cD(p)$}
    \end{algorithmic}
  \end{algorithm}
\fi

\begin{theorem}\label{thehook}
  For any accuracy parameter $\alpha > 0$, confidence parameter
  $\delta > 0$, and subset
  $S = \{x\in \cF \mid x_j \leq s_j \mbox{ for each } j \in [d]\}$
  given by a supply vector $s$. Given query access to $\rep$, the
  instantiation $\opu(\alpha, \delta, s)$ outputs a distribution
  $\hat \cD$ over prices such that with probability at least
  $1 - \delta$ satisfies $x^*_\psi(\hat \cD)\leq s$ and
  \[
    \SW(\hat \cD) \geq \lot - \alpha.
  \]
  Furthermore, both the run-time of the algorithm and the number of
  queries to $\rep$ is bounded by
  $\poly(d,1/\alpha, \vmax, \log(1/\delta))$.
\end{theorem}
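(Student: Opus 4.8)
The plan is to reduce to the divisible-goods algorithm $\op$ of \Cref{MAINSEC}, run on the $\eta$-regularized instance, and then transfer the guarantee back to the original unit-demand instance. Following $\opu$, set $\eps = \alpha/2$ and $\eta = \alpha/(2\log(d+1))$. The first task is to confirm that the $\eta$-regularized instance --- divisible goods with feasible set $\cF = \dd$, valuation distribution $\tilde\psi$ whose members are $\tv_i(x) = \langle v_i, x\rangle + \eta H(x)$, and the same supply vector $s$ --- satisfies every hypothesis of \Cref{captain}. Coordinate-monotonicity of the valuations fails because of the entropy term, so we invoke \Cref{rm:sat} instead: the supply-saturation lemma established above shows $\SCP(\hat x)$ is supply-saturating for every $\hat x \in \dd$, since both $\hat x$ and every feasible expected bundle lie on the simplex and hence every constraint $\sum_i \tilde\psi(v_i)\, x_{ij} \leq \hat x_j$ must bind. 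The remaining hypotheses hold as already observed: $\dd$ is convex, closed, with non-empty interior; each $\tv_i$ is $\eta$-strongly concave in $\ell_2$ and $(\sqrt{d+1}+\vmax,\,1/2)$-H\"older continuous. Thus \Cref{captain} applies with $\sigma = \eta$ and $\lambda = \sqrt{d+1}+\vmax$.

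Next we must feed $\op$ a suitable oracle. By the Remark following \Cref{captain}, $\op$ touches the buyers only through $\learnprice$, which needs only a bounded, unbiased estimate of the induced bundle $x^*_{\tilde\psi}(p)$ at each queried price $p$; and this is exactly what the procedure $\simulate(p,\eta)$ provides. Indeed $\simulate$ draws $p' \sim \cD(p)$ --- adding Gumbel noise and applying $\conv$ --- queries $\rep(\psi)$ with $p'$, and returns the observed purchase, whose expectation over the perturbation and over $v\sim\psi$ equals $x^*_{\tilde\psi}(p)$ by \Cref{crucial}, and which is bounded since purchased bundles lie in $\dd$. Running $\op(\eps,\delta,s)$ with this simulated oracle therefore outputs, with probability at least $1-\delta$, a price vector $\hat p$ with $x^*_{\tilde\psi}(\hat p) \leq s$ that is $\eps$-approximately optimal for the $\eta$-regularized problem (via the lottery benchmark of \Cref{captain}, which dominates the posted-price benchmark).

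Finally, set $\hat\cD = \cD(\hat p)$. For feasibility in the \emph{original} instance, $x^*_\psi(\hat\cD) = \Ex{p'\sim\cD(\hat p),\, v\sim\psi}{x^*_v(p')}$, which by \Cref{crucial} equals $x^*_{\tilde\psi}(\hat p) \leq s$. For welfare, \Cref{adderror} gives $\SW(\hat\cD) \geq \lot - \eps - \eta\log(d+1) = \lot - \alpha/2 - \alpha/2 = \lot - \alpha$, where $\lot$ is the optimal lottery over allocations for the original unit-demand instance; since every distribution over prices induces a feasible lottery, $\lot$ also upper-bounds $\max_{\cD':\, x^*_\psi(\cD')\leq s}\SW(\cD')$, which recovers \Cref{thm:intro-unit-demands}. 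For the complexity, $\op$ performs $\poly(d,1/\eps,1/\sigma,1/\lambda,\log(1/\delta))$ queries and arithmetic operations; substituting $\eps = \alpha/2$, $\sigma = \eta = \alpha/(2\log(d+1))$, $\lambda = \sqrt{d+1}+\vmax$, and charging each $\simulate$ call one $\rep$-query plus $\poly(d)$ work (sampling $d+1$ Gumbel variables and running $\conv$), gives the claimed $\poly(d,1/\alpha,\vmax,\log(1/\delta))$ bound.

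The argument is essentially a composition, so most of the weight sits in the ingredients we are allowed to cite: the supply-saturation surrogate for monotonicity on the simplex, and the $\eps \to \eps + \eta\log(d+1)$ welfare transfer of \Cref{adderror} (which itself rests on the Gumbel-regularization identity of \Cref{MWnote} and on \Cref{crucial}). Within the present argument the one step I expect to need the most care --- the ``main obstacle'' --- is keeping the supply constraint bookkept against the original $\psi$ rather than $\tilde\psi$: we need $x^*_\psi(\cD(\hat p)) = x^*_{\tilde\psi}(\hat p)$, i.e.\ that the very perturbation introduced to \emph{regularize} valuations simultaneously leaves \emph{expected consumption} unchanged --- which is again precisely the content of \Cref{crucial}.
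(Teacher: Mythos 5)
Your proposal is correct and follows essentially the same route as the paper: run $\op$ on the $\eta$-regularized divisible-goods instance (verified admissible via the supply-saturation lemma, strong concavity, and H\"older continuity), feed it $\simulate$ as the revealed-preference oracle (justified by \Cref{crucial}), set $\hat\cD=\cD(\hat p)$, and close with \Cref{adderror} with $\eps=\eta\log(d+1)=\alpha/2$. The paper leaves the composition implicit after the surrounding lemmas, and you have reassembled it correctly, including the subtle point that \Cref{crucial} simultaneously handles both the regularization of valuations and the preservation of expected consumption needed for the $x^*_\psi(\hat\cD)\leq s$ constraint.
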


%%% Local Variables:
%%% mode: latex
%%% TeX-master: "main"
%%% End:

\section{Limited Supply: Proof of \Cref{THM:INTRO-LIMITED-SUPPLY}}
\label{sec:limited-supply}

We turn our attention to dynamic pricing with limited supply, so as to prove \Cref{THM:INTRO-LIMITED-SUPPLY}.

\xhdr{Setting and notation.} Compared to the main model described in \Cref{sec:model}, the limited supply setting differs in the following ways. A problem instance is characterized by a pair $(s,T)$, where $s\in \RR_+^d$ is the supply vector and $T$ is the time horizon (the maximal number of rounds). Initially the seller has $T\cdot s_j$ units of each good $j$. For ease of exposition, assume that at most one unit of each good can be sold in each round. Execution halts when the time horizon is exceeded, or when the remaining supply of any one good falls below $0$. Performance of a given pricing policy $\pi$ is characterized by its \ETW (over the entire execution), denoted $\SWtot(\pi)$.

We are particularly interested in ``fixed-vector"  pricing policies: ones that always uses the same fixed price vector $p$. The \ETW of such policy is denoted $\SWtot(p)$. Likewise, ``fixed-distribution" pricing policies always draw the price vector independently from the same fixed distribution $\cD$; the \ETW of such policy is denoted $\SWtot(\cD)$.

The \emph{induced bundle} for a given price vector $p$ is a vector $x=x(p)\in \RR_+^d$ such that $x_j$ is the per-round expected consumption of each product $j$ if price $p$ is chosen. A bundle $x \in \RR_+^d$ is \emph{inducible} if $x=x(p)$ for some price vector $p$. 

%Recall that for the divisible goods setting (by Lemma~\ref{timmy}) all bundles in $(0,1]^d$ are inducible.

\xhdr{Connection to ``Bandits with Knapsacks''.} We represent our problem as a special case of ``Bandits with Knapsacks" \citep{BwK-focs13}, a general setting for multi-armed bandit problems with resource consumption (henceforth denoted \BwK). In \BwK, there are several resources consumed by an algorithm, with a limited supply of each. In each round an algorithm chooses from a fixed set of `arms', receives a reward and consumes some resources. Thus, the outcome from choosing an arm is a vector (\emph{outcome vector}) which consists of the reward and the consumption of all resources. The outcome vector is assumed to be an independent draw of some fixed but unknown distribution that depends only on the chosen arm.

Dynamic pricing with limited supply is a canonical special case of \BwK: arms correspond to price vectors, resources correspond to the goods (one resource for each good), and rewards is the seller's utility from a given customer (typically revenue or profit, in our case --- welfare). The outcome in a given round is determined by the purchased bundle.

\subsection{Divisible goods: proof of \Cref{THM:INTRO-LIMITED-SUPPLY}(a)}

We use a different, non-standard connection to \BwK: arms correspond to inducible bundles, rather than price vectors. More precisely, for each inducible bundle $x$ 
we have an arm in \BwK such that choosing this arm means choosing a particular price vector $p_x$ that induces bundle $x$. Henceforth, such an arm is termed \emph{arm-bundle} $x$. The reward and resource consumption from choosing this arm-bundle are defined as those from choosing $p_x$. Note that the expected consumption of each good $j$ is simply $x_j$. An algorithm $\pi$ for such an instance of \BwK, i.e., an algorithm that in each round selects an inducible bundle and observes the purchased bundle, will be called a \emph{\bunPolicy}. Its \ETW is denoted $\SWtot(\pi)$. A ``fixed-bundle" \bunPolicy chooses some arm-bundle $x$ in each round. Its \ETW is denoted as $\SWtot(x)$.

%\sw{todo: make the notations consistent $x(p) \rightarrow x^*_\psi(p)$}

%According to \cite{BwK-focs13}, we approximate $\SWtot(x)$ with an LP-value from a certain natural LP-relaxation. Specialized to our setting and our notation, this LP-value is
%\begin{align}\label{eq:LP-value}
%\LP(x) = \SW(x)\; \min\left( T, \; \min_{\text{goods $j$}} T s_j/x_j \right),
%\end{align}
%where $\SW(x)$ is the expected per-round welfare from $x$. The expression under $\min$ represents the estimated (fractional) stopping time assuming each resource $j$ is consumed at its expected per-round consumption rate $x_j$.
%One can approximate the \ETW of such policy via a similar formula as \cref{eq:LP-value}.
%\begin{align}\label{eq:LP-value}
%\LP(D) = \SW(D)\; \min\left( T, \; \min_{\text{goods $j$}} T s_j/x_j(D) \right),
%\end{align}
%where $\SW(D)$ is the expected per-round welfare from $D$, and $x_j(D)$ is the expected per-round purchase of good $j$.

Recall that choosing an arm-bundle $x$ determines the realized bundles purchased by each type of buyers. (It is a consequence of \Cref{ohfun}.) In other words, the realized bundles do not depend on the choice of $p_x$. Therefore:

\begin{claim}\label{cl:limited-supply-A-to-pi}
Fix a problem instance. For any pricing policy $\pi'$, there exists a \bunPolicy $\pi$ with the same expected total welfare.
\end{claim}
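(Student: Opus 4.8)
The plan is to show that any pricing policy $\pi'$ can be "rewritten" as a bundling policy $\pi$ round-by-round, using the fact (Lemma \ref{ohfun}) that the induced bundle $\hat x = x^*_\psi(p)$ determines not only the aggregate expected demand but also the \emph{realized} bundle $x^*_{v_i}(p)$ purchased by each buyer type $i$, independently of which particular price vector $p$ inducing $\hat x$ is chosen. First I would observe that a general pricing policy $\pi'$ is, in each round $t$, a (possibly randomized, history-dependent) choice of a price vector $p^t$; the only feature of this choice that affects the reward and the resource consumption in round $t$ is the distribution over the purchased bundle $x^*_v(p^t)$, $v \sim \psi$. By Lemma \ref{ohfun}, if $p^t$ induces bundle $\hat x^t = x^*_\psi(p^t)$, then the distribution of $x^*_v(p^t)$ (over $v\sim\psi$) is exactly the distribution of realized bundles under \emph{any} price vector inducing $\hat x^t$ --- in particular under the canonical choice $p_{\hat x^t}$ used to define arm-bundle $\hat x^t$.

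The construction of $\pi$ is then immediate: run $\pi'$, and whenever $\pi'$ would post $p^t$, instead have $\pi$ select arm-bundle $\hat x^t := x^*_\psi(p^t)$ (note $\hat x^t$ is inducible by definition, so this is a legal move in the \BwK\ instance). Feed the observed purchased bundle back to the simulated copy of $\pi'$ so it can continue. Since in each round the joint distribution of (reward, resource consumption, observed bundle) is identical under $\pi$ and $\pi'$ conditioned on the history, a straightforward induction on rounds shows that the two policies induce the same distribution over the entire execution trace --- including the (random) stopping time at which supply or the horizon is exhausted. Hence the expected total welfare $\SWtot(\pi)$ equals $\SWtot(\pi')$.

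The only subtlety --- and the step I would be most careful about --- is that the induced bundle $\hat x^t = x^*_\psi(p^t)$ need only satisfy $\hat x^t \in \cF$ (being a convex combination of the per-buyer bundles), and it need not lie in $\RR_{>0}^d$, so Lemma \ref{ohfun} does not literally apply to bundles on the boundary with a zero coordinate. Fortunately this is not an obstacle: what we actually need is only that the \emph{distribution over realized purchased bundles} produced by posting $p^t$ depends on $p^t$ only through $x^*_\psi(p^t)$, and this follows directly from the separability argument in the proof of Lemma \ref{timmy} / Lemma \ref{ohfun} whenever $p^t$ and $p_{\hat x^t}$ are both optimal dual solutions for $\SCP(\hat x^t)$; in the degenerate cases one can either appeal to supply-saturation (Assumption \ref{rm:sat}) or simply note that $p_{\hat x^t}$ can be taken to be $p^t$ itself, so there is literally nothing to prove. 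In all cases the realized bundles coincide in distribution, which is exactly what the round-by-round coupling requires, and the claim follows.
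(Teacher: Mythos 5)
Your proposal matches the paper's proof exactly: construct $\pi$ by selecting the arm-bundle $\hat x^t = x^*_\psi(p^t)$ whenever the simulated $\pi'$ would post $p^t$, invoke (the discussion surrounding) Lemma~\ref{ohfun} to see that the per-round distribution of realized purchased bundles depends on $p^t$ only through the induced bundle, and conclude by induction on rounds that the two policies generate the same distribution over outcome vectors, hence the same $\SWtot$. Your closing remark about boundary bundles where $\hat x^t \notin \RR_{>0}^d$ is a fair observation of a gap the paper's terse proof glosses over, but your proposed fix of ``take $p_{\hat x^t} = p^t$'' does not quite work as stated, since the canonical $p_x$ must be fixed per arm-bundle independently of the policy being simulated; the cleaner route you also mention --- re-running the separability argument under the supply-saturation condition of Assumption~\ref{rm:sat} --- is the right one.
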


\begin{proof}
The \bunPolicy $\pi$ is constructed as follows: whenever $\pi'$ chooses a price vector $p$ that induces bundle $x$, $\pi$ chooses arm-bundle $x$. Then $\pi$ and $\pi'$ have the same distribution over the outcome vectors in each round $t$ (this is proved by induction on $t$).
\end{proof}

The analysis in \cite{BwK-focs13} emphasizes ``fixed-distribution"  \bunPolicies: where in each round the arm-bundle is sampled independently from a fixed distribution $D$ over arm-bundles. Let $\SWtot(D)$ and $\SW(D)$ denote, resp., the \ETW and the expected per-round welfare from this \bunPolicy. A structural result from \cite{BwK-focs13}, as specialized to our setting, essentially reduces optimization over arbitrary \bunPolicies to that over fixed-distribution \bunPolicies:

\begin{lemma}[specialized from \cite{BwK-focs13}]\label{lm:BwK-LP}
Fix a finite set $\cB$ of arm-bundles. Let $\supSWtot(\cB)$ be the supremum of \ETW achieved by \bunPolicies that can only use arm-bundles from $\cB$. There exists a distribution $D$ over $\cB$ such that
$T\cdot \SW(D)\geq \supSWtot(\cB)$ and $x_j(D)\leq s_j$ for each good $j$.
\end{lemma}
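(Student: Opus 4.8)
The plan is to reproduce the standard linear-programming relaxation underlying \BwK, specialized to arm-bundles, in three moves: (i) convert any \bunPolicy supported on $\cB$ into a feasible point of an LP via an optional-stopping identity; (ii) note the LP optimum is attained; (iii) normalize it to a distribution $D$ over $\cB$ and read off the two inequalities. Throughout I may assume $\cB$ contains a \emph{null arm-bundle} --- a dummy ``do nothing'' option with zero welfare and zero consumption, standard in \BwK --- since adjoining it only enlarges $\supSWtot(\cB)$, so proving the claim for the augmented set suffices.

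For step (i), recall that when a \bunPolicy plays arm-bundle $x$ in a round, the expected per-round welfare equals $\SW(x)$ and the expected per-round consumption of good $j$ equals exactly $x_j$, \emph{regardless of the history}: indeed, by the definition of inducibility together with \Cref{ohfun}, fixing $x$ pins down the realized purchase of every buyer type, so the outcome law of arm-bundle $x$ depends on $x$ alone. Fix any \bunPolicy $\pi$ supported on $\cB$, let $\tau\le T$ be its random halting round, let $N_x$ be the number of rounds in which $\pi$ plays $x$, and set $\xi_x=\ex{N_x}$. Writing $N_x=\sum_{t=1}^{T}\mathbf{1}[\pi\text{ plays }x\text{ in round }t]\,\mathbf{1}[t\le\tau]$ and using that both indicators are determined \emph{before} the round-$t$ feedback is drawn --- while that feedback is a conditionally independent sample from the outcome law of arm-bundle $x$ --- a round-by-round expansion of expectations yields
\[
\SWtot(\pi)=\sum_{x\in\cB}\xi_x\,\SW(x),\qquad \ex{\text{total consumption of good }j}=\sum_{x\in\cB}\xi_x\,x_j .
\]
Since $\sum_x N_x=\tau\le T$ always, $\sum_x\xi_x\le T$; and since $\pi$ halts before the supply of any good goes negative, the total consumption of good $j$ is at most $Ts_j$, so $\sum_x\xi_x x_j\le Ts_j$. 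Hence $(\xi_x)_{x\in\cB}$ is feasible for the LP $\max\{\sum_x\zeta_x\SW(x):\zeta\ge 0,\ \sum_x\zeta_x x_j\le Ts_j\ \forall j,\ \sum_x\zeta_x\le T\}$ with objective $\SWtot(\pi)$, so the LP value is at least $\SWtot(\pi)$ for every such $\pi$, and therefore at least $\supSWtot(\cB)$.

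For steps (ii)--(iii), the LP feasible region is nonempty and compact ($\cB$ is finite), so the maximum is attained at some $\xi^\star$ with $\sum_x\xi^\star_x\le T$; since the null arm-bundle contributes nothing to the objective or to consumption, I may load the slack onto it and take $\sum_x\xi^\star_x=T$. Then $D(x):=\xi^\star_x/T$ is a genuine distribution over $\cB$, and $T\cdot\SW(D)=\sum_x\xi^\star_x\SW(x)\ge\supSWtot(\cB)$ while $x_j(D)=\tfrac1T\sum_x\xi^\star_x x_j\le s_j$ for every $j$, which is the claim.

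The hard part will be the optional-stopping identity of step (i): I would need to verify carefully that the events ``$\pi$ plays $x$ in round $t$'' and ``$t\le\tau$'' are both measurable with respect to the information available before the round-$t$ outcome is revealed, so that conditioning on them leaves that outcome's law at the per-arm mean --- this is precisely where the i.i.d.-per-arm structure of \BwK enters. The only remaining fussiness is the exact halting convention --- whether the last round may overshoot a supply by one unit, which would merely weaken the bound to $x_j(D)\le s_j+1/T$ and is absorbed by the slack terms in \Cref{THM:INTRO-LIMITED-SUPPLY} --- together with the null-arm bookkeeping; the rest is routine linear programming.
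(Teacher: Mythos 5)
The paper gives no proof of this lemma; it is imported verbatim from the cited Bandits-with-Knapsacks paper, and your reconstruction is exactly the standard LP-relaxation argument from that source. The Wald-type identity $\SWtot(\pi)=\sum_{x\in\cB}\ex{N_x}\,\SW(x)$ with the analogous identity for consumption is correct: the choice of arm-bundle in round $t$ and the event $\{t\le\tau\}$ are both measurable with respect to the pre-round-$t$ history, while the round-$t$ outcome is an independent draw whose law depends only on the chosen arm-bundle (by \Cref{ohfun}), so the per-round conditional expectations collapse to $\SW(x)$ and $x_j$. The LP feasibility of $(\xi_x)_x$, attainment of the LP optimum, and normalization via the null arm then give $D$ as claimed.

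Two small points of bookkeeping. First, you rightly flag the overshoot issue: under the paper's halting convention, the round in which a supply first goes negative is played, so the clean bound $\sum_x \xi_x x_j \le Ts_j$ requires either (a) defining $\xi_x$ to exclude the halting round and bounding its contribution separately, or (b) the BwK convention that the offending round is skipped; either way the slack is $O(1)$ and vanishes against the $O(\sqrt{T\log T/\Smin})$ term in \Cref{THM:INTRO-LIMITED-SUPPLY}, as you say. Second, the normalization step loads slack onto a null arm-bundle, which must therefore be assumed to lie in $\cB$; this is the standard BwK convention and is harmless here since the null arm is trivially inducible by prohibitively high prices, but it does mean your $D$ technically lives on $\cB \cup \{\text{null}\}$ unless you add the null arm to $\cB$ at the outset, as you do. Neither of these affects the application. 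Overall this is a correct reconstruction of precisely the argument the paper cites.
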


\noindent Here $T\cdot \SW(D)$ is seen as an approximation for $\SWtot(D)$, given that $x_j(D)\leq s_j$ for each $j$.

\xhdr{Reduction to best fixed bundle.}
A distribution $D$ over arm-bundles can be replaced by arm-bundle $\bar{x}= \Ex{x\sim D}{x}$, in the sense that $\SW(\bar{x})\geq \SW(D)$ (because is concave in the expected bundle, see \Cref{caveman}), and $x_j(D)=\bar{x}_j$ for each good $j$.
\footnote{A similar statement --- that any distribution over arms is ``dominated" by some arm --- is false for many other special cases of \BwK, including another version of dynamic pricing with limited supply \citep{BwK-focs13}.}

Let $\supSWtot$ be the supremum of \ETW over all \bunPolicies $\pi$. By \Cref{cl:limited-supply-A-to-pi}, it is also the supremum of \ETW over all pricing policies.

We claim that for any given $\eps>0$ there exists a finite set $\cB$ of arm-bundles such that
\begin{align*}
    \supSWtot(\cB)\geq \supSWtot - \eps.
\end{align*}
This holds because $\SW(x)$ is a H\"{o}lder-continuous function of bundle $x$ (see \Cref{holderman}), and so for any $\epsilon$, there is a fine enough discretization of bundles that yields an $\epsilon$-net for social welfare.
%\as{I probably need to write a more careful proof here.}

Putting this together, we reduce arbitrary \bunPolicies to fixed-bundle policies:

\begin{corollary}\label{cor:LP-value}
For each $\eps>0$ there exists is an arm-bundle $x$ such that
\begin{align*}
T\cdot\SW(x)\geq \supSWtot-\eps
    \quad \text{and}\quad
x_j\leq s_j \;\text{for each good $j$}.
\end{align*}
\end{corollary}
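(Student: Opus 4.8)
The plan is to chain the three facts assembled immediately above the statement — the $\eps$-net claim $\supSWtot(\cB)\ge\supSWtot-\eps$, \Cref{lm:BwK-LP}, and the concavity-based replacement of a distribution over arm-bundles by its mean — and then add one extra step that pushes the candidate bundle off the boundary of $\cF$ so that it is genuinely inducible. Fix $\eps>0$. First I would invoke the $\eps$-net claim to get a finite set $\cB$ of arm-bundles with $\supSWtot(\cB)\ge\supSWtot-\eps/2$, and then apply \Cref{lm:BwK-LP} to $\cB$ to obtain a distribution $D$ over $\cB$ with
\[
T\cdot\SW(D)\geq\supSWtot(\cB)\geq\supSWtot-\eps/2
\qquad\text{and}\qquad x_j(D)\le s_j\ \text{for every good }j.
\]
Setting $\bar x=\Ex{x\sim D}{x}$: each $x\in\cB$ is inducible and hence lies in $\cF$, and $\cF$ is convex (\Cref{assf}), so $\bar x\in\cF$, while coordinatewise $\bar x_j=x_j(D)\le s_j$; thus $\bar x\in S$. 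Since $\SW$ is concave on $\cF$ (\Cref{caveman}), Jensen's inequality gives $\SW(\bar x)\ge\Ex{x\sim D}{\SW(x)}=\SW(D)$, so $T\cdot\SW(\bar x)\ge\supSWtot-\eps/2$.

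The bundle $\bar x$ need not be strictly positive in every coordinate, so \Cref{timmy} does not yet certify that it is an arm-bundle. To repair this I would instead take $x^\star=\argmax_{x\in S_\xi}\SW(x)$ for a parameter $\xi>0$ small enough (and with $S_\xi$ nonempty) that $T\bigl(\lambda(d\xi)^\beta+\sqrt d\,\xi\|c\|\bigr)\le\eps/2$. By \Cref{tmac}, $\SW(x^\star)\ge\max_{x\in S}\SW(x)-\lambda(d\xi)^\beta-\sqrt d\,\xi\|c\|\ge\SW(\bar x)-\eps/(2T)$, and since $x^\star\in S_\xi$ we have $\xi\le x^\star_j\le s_j-\xi$, so $x^\star\in\cF\cap\RR_{>0}^d$ and $x^\star\le s$. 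Hence $x^\star$ is inducible by \Cref{timmy} — a bona fide arm-bundle — and $T\cdot\SW(x^\star)\ge T\cdot\SW(\bar x)-\eps/2\ge\supSWtot-\eps$. Taking $x=x^\star$ proves the corollary.

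\textbf{Main obstacle.} The estimates are all one-liners; the one genuinely delicate point is that the mean bundle $\bar x$ produced by the \BwK reduction together with concavity may lie on the boundary of $\cF$, where \Cref{timmy} does not guarantee inducibility, so it would not be a legal arm-bundle. The fix is precisely the $S_\xi$ detour above: retreating to the strictly interior set $S_\xi$ restores strict positivity (hence inducibility) while costing only $O(T\xi^\beta)$ in expected total welfare by the H\"older-continuity bound \Cref{tmac}, which is absorbed into the $\eps$ budget by shrinking $\xi$. Everything downstream of that is routine chaining of the two cited lemmas with Jensen's inequality.
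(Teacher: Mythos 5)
Your proof follows the same core chain as the paper: discretize to a finite $\eps$-net $\cB$ of arm-bundles, apply \Cref{lm:BwK-LP} to obtain a distribution $D$, and replace $D$ by its mean $\bar x$ using concavity of $\SW$ (\Cref{caveman}) and Jensen. The one place you genuinely diverge is the inducibility of the final bundle. The paper handles this implicitly: $\cB$ is taken to be a finite set of \emph{arm-bundles} from the outset, so one discretizes over strictly positive points of $\cF$ (which are inducible by \Cref{timmy}), and then $\bar x$, being a convex combination of strictly positive points, is automatically strictly positive and hence an arm-bundle with no further work. You instead notice that $\bar x$ as written is not certified to be interior, and repair this post hoc by retreating to $x^\star=\argmax_{x\in S_\xi}\SW(x)$ and absorbing the welfare loss into the $\eps$ budget via \Cref{tmac}. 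Both patches are valid and equally routine; yours is slightly more explicit about the boundary issue (which the paper glosses over), at the minor cost of an extra lemma invocation. Net: the proposal is correct, and the difference from the paper is confined to how inducibility of the final bundle is certified.
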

\noindent Again, $T\cdot \SW(x)$ is seen as an approximation for $\SWtot(x)$, given that $x_j\leq s_j$ for each $j$.

\xhdr{Completing the proof.} Fix parameter $\alpha>0$. Applying \Cref{cor:LP-value} with $\eps=\alpha T/2$ and letting $p$ be a price vector that induces bundle $x$, we obtain a price vector $p$ such that
\begin{align*}
T\cdot\SW(p)\geq \supSWtot-\alpha T/2
    \quad \text{and}\quad
x_j(p)\leq s_j \;\text{for each good $j$}.
\end{align*}

The algorithm from \Cref{thm:intro-limited-expected} can compute a price vector $p^*$ such that
\begin{align*}
\SW(p^*)\geq \SW(p) -\alpha/2
    \quad \text{and}\quad
x_j(p^*)\leq s_j \;\text{for each good $j$}.
\end{align*}
In particular,
    $T\cdot \SW(p^*)\geq \supSWtot-\alpha T$.
It remains to bound the difference between the \ETW $\SWtot(p^*)$ and the estimate $T\cdot \SW(p^*)$:

\begin{lemma}\label{lm:limited-deviations}
Let
    $\Smin = \min_j s_j$
and assume that $T\,\Smin >32\,\log T$. Then
$$\SWtot(p^*) \geq T\cdot \SW(p^*) - O(\sqrt{T\log (T)/\Smin}).$$
\end{lemma}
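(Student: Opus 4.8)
The plan is to reduce the claim to a bound on the expected number of rounds lost to early exhaustion, and then control that via a variance-sensitive concentration bound. First I would couple the limited-supply execution against the ``unconstrained'' execution that posts $p^*$ in all $T$ rounds: writing $w_t$ for the welfare produced by the $t$-th buyer at price $p^*$ (so $\ex{\sum_{t=1}^{T} w_t}=T\cdot\SW(p^*)$ by~\eqref{swel}), the limited-supply run is exactly this run stopped at the first round $\tau$ at which the cumulative consumption of some good $j$ exceeds its total supply $Ts_j$, with $\tau=T$ if no such round occurs. H\"older-continuity of the valuations together with boundedness of $\cF$ and $c\ge 0$ give a bound $w_t\le W$ for some $W$ depending only on the fixed problem parameters (e.g.\ $W=\vmax$ in the unit-demand application). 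Since $w_t\le W$, we have $\sum_{t\le\tau}w_t\ge\sum_{t\le T}w_t-W(T-\tau)$ pointwise, so after taking expectations
\[
  \SWtot(p^*)\;\ge\;T\cdot\SW(p^*)\;-\;W\cdot\ex{T-\tau},
\]
and it remains to show $\ex{T-\tau}=O\!\left(\sqrt{T\log(T)/\Smin}\right)$.

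For each good $j$, its cumulative consumption $C_j(t)=\sum_{t'\le t}(x^*_{v_{t'}}(p^*))_j$ is a sum of $t$ i.i.d.\ $[0,1]$-valued variables with mean $\mu_j=(x^*_\psi(p^*))_j$, and the guarantee of $\op$ (\Cref{captain}) is exactly that $\mu_j\le s_j$. Two facts drive the estimate: $C_j(\cdot)$ is non-decreasing, so $\{\tau\le t\}\subseteq\bigcup_j\{C_j(t)>Ts_j\}$; and, since each per-round increment lies in $[0,1]$, its second moment is at most its mean, so the per-round consumption variance of good $j$ is at most $\mu_j\le s_j$. I would then fix $L=\lceil\sqrt{8T\log(T)/\Smin}\rceil$, noting that the hypothesis $T\,\Smin>32\log T$ forces $L<T/2$ so that the round $t^*:=T-L$ is well defined, and apply Bernstein's inequality to $C_j(t^*)$ with total variance at most $Ts_j$ and required deviation at least $Ls_j$:
\[
  \Prob{}{C_j(t^*)>Ts_j}\;\le\;\exp\!\left(-\frac{(Ls_j)^2/2}{Ts_j+Ls_j/3}\right)\;\le\;\exp\!\left(-\frac{3L^2s_j}{8T}\right)\;\le\;T^{-3}.
\]
A union bound over the (at most $d$) goods gives $\Prob{}{\tau\le t^*}\le d\,T^{-3}$, whence $\ex{T-\tau}\le (T-t^*)+T\cdot\Prob{}{\tau\le t^*}\le L+d\,T^{-2}=O\!\left(\sqrt{T\log(T)/\Smin}\right)$, which combined with the first display proves the lemma.

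The step I expect to be the crux is the tail bound: a plain Hoeffding estimate only yields $\Prob{}{C_j(t^*)>Ts_j}\le\exp(-2L^2s_j^2/T)$, which would require $L=\Theta(\sqrt{T\log T}/\Smin)$ and thus a loss of order $\sqrt{T\log T}/\Smin$ --- larger than the claimed $\sqrt{T\log T/\Smin}$ precisely when $\Smin<1$. Recovering the extra $\sqrt{\Smin}$ factor requires the observation that the per-round consumption variance is controlled by its mean (hence by $s_j$), so that a variance-sensitive (Bernstein) bound applies; everything else is routine stopping-time bookkeeping.
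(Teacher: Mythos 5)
Your proof is correct, and it reaches the lemma by a route that differs from the paper's in how the welfare side of the deviation is controlled. The paper posits a single ``clean execution'' event --- a uniform-over-$t$ concentration bound controlling simultaneously the cumulative consumption of each good, $|y_{j,t}-tx_j|\le c_0\sqrt{tx_j}$, and the cumulative realized welfare, $Z_t\ge wt-c_0\sqrt{t}$ --- and, conditionally on that event, deterministically lower-bounds the stopping time $\tau$ (\Cref{cl:limited-deviations-tau,cl:limited-deviations-mintau}) before reading off $Z_\tau\ge w\tau-c_0\sqrt{\tau}$. You decouple the two sides: you dispense with welfare concentration entirely in favor of the pointwise bound $w_t\le W$, so that $\SWtot(p^*)\ge T\cdot\SW(p^*)-W\,\ex{T-\tau}$ holds before any probabilistic argument, and you then control $\ex{T-\tau}$ by a single Bernstein tail estimate at the fixed round $t^*=T-L$ plus a union over the $d$ goods. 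The engine producing the $1/\sqrt{\Smin}$ factor is identical in both --- a variance-sensitive bound on consumption, exploiting $\mathrm{Var}\le\mu_j\le s_j$ for $[0,1]$-valued increments --- and you correctly identify this as the crux; the multiplicative-Chernoff form $|y_{j,t}-tx_j|\le c_0\sqrt{tx_j}$ the paper invokes encodes the same variance sensitivity. What each route buys: yours makes explicit the passage from a tail bound on $\tau$ to a bound on the \emph{expectation} $\ex{Z_\tau}$ (the paper states ``it suffices to show in clean execution \dots'' and leaves this conversion implicit, which quietly requires the same boundedness/nonnegativity of per-round welfare you invoke), and it needs only a one-shot rather than uniform-over-$t$ concentration inequality; the paper's clean-execution framework gives a slightly stronger high-probability statement as a by-product. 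Both arguments implicitly rely on a deterministic bound $W$ on per-round welfare (the paper for the Hoeffding step on $Z_t$, you for the pointwise inequality), justified by boundedness of $\cF$ and of the valuation class.
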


(The lemma applies to all price vectors $p^*$ such that $x_j(p^*)\leq s_j$ for each good $j$. Its proof is deferred to \Cref{pf:lm:limited-deviations}.) Putting this all together, we see that
$$\SWtot(p^*) \geq \supSWtot-\alpha T - O(\sqrt{T\log (T)/\Smin}).$$
This completes the proof of \Cref{THM:INTRO-LIMITED-SUPPLY}.

\subsection{Indivisible goods and unit demands: proof sketch of of \Cref{THM:INTRO-LIMITED-SUPPLY}(b)}

Recall that we compete against a weaker benchmark: the best fixed distribution $\cD$ over the price vectors; more precisely, against 
    $\supSWtot := \sup_{\cD:\; x(\cD)\leq s} \SWtot(\cD) $.

We can bound the deviation between $\SWtot(\cD)$ and $T\cdot \SW(\cD)$ via the following lemma (which is stated and proved similarly to \Cref{lm:limited-deviations}).

\begin{lemma}\label{lm:limited-deviations-D}
Let
    $\Smin = \min_j s_j$
and assume that $T\,\Smin >32\,\log T$. Let $\cD$ be a distribution over price vectors such that $x(\cD)\leq s$. Then:
\[ \left| \SWtot(\cD) - T\cdot \SW(\cD) \right| \leq  O(\sqrt{T\log (T)/\Smin}).\]
\end{lemma}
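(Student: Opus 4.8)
The plan is to mirror the proof of \Cref{lm:limited-deviations}, exploiting that in the unit-demand model the per-round consumption of each good is $\{0,1\}$-valued and hence at least as well concentrated as the bounded consumptions of the divisible setting. Because the fixed-distribution policy redraws $p\sim\cD$ independently in every round, the welfare contributions $W_1,W_2,\dots$ and the sale indicators $Y^j_t\in\{0,1\}$ (``is a unit of good $j$ sold in round $t$'') form i.i.d.\ sequences with $\Ex{}{W_t}=\SW(\cD)$ and $\Ex{}{Y^j_t}=x_j(\cD)\le s_j$ (here the hypothesis $x(\cD)\le s$ is used). Let $\tau\le T$ be the halting round --- either $T$, or the first round at which the cumulative consumption of some good $j$ reaches its total supply $Ts_j$. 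The event $\{\tau\ge t\}$ is determined by rounds $1,\dots,t-1$ and is therefore independent of $W_t$ (this is exactly where the i.i.d.\ structure of a fixed-\emph{distribution} policy matters, and it fails for adaptive policies); an easy coupling with a hypothetical continuation that keeps drawing fresh rounds past $\tau$ then yields
\[
T\cdot\SW(\cD)-\SWtot(\cD)=\SW(\cD)\cdot\Ex{}{T-\tau}.
\]
Since a single round's welfare lies in $[-\|c\|,\vmax]$, this reduces the lemma to the bound $\Ex{}{T-\tau}=O(\sqrt{T\log(T)/\Smin})$.

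To that end I would control the tail $\prob{T-\tau>k}$ for each $k\ge 1$. If good $j$ runs out by round $T-k$, then $\sum_{t=1}^{T-k}Y^j_t\ge Ts_j$, while its mean is at most $(T-k)s_j$, so the upward deviation is at least $ks_j\ge k\Smin$; as this is a sum of independent Bernoullis with total variance at most $Ts_j$, a Bernstein (equivalently, multiplicative Chernoff) bound gives $\prob{\sum_{t=1}^{T-k}Y^j_t\ge Ts_j}\le\exp(-\Omega(k^2\Smin/T))$. Union-bounding over the $d$ goods, $\prob{T-\tau>k}\le d\exp(-\Omega(k^2\Smin/T))$; taking $k^\star=\Theta(\sqrt{T\log(dT)/\Smin})$ makes this at most $1/T$, and the assumption $T\Smin>32\log T$ ensures $k^\star\le T$ so the estimate is not vacuous. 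Summing the tail, $\Ex{}{T-\tau}\le k^\star+\sum_{k>k^\star}\prob{T-\tau>k}=O(\sqrt{T\log(dT)/\Smin})$, which is $O(\sqrt{T\log(T)/\Smin})$ after the $\log d$ is absorbed into the constant (e.g.\ using $d\le T$). Plugging back into the displayed identity and taking absolute values completes the proof.

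The step I expect to be the crux is obtaining the correct $\Smin$-dependence: $\tau$ deviates from $T$ by the fluctuations of $\sum_t Y^j_t$, which have standard deviation $\Theta(\sqrt{Ts_j})$ rather than $\Theta(\sqrt{T})$, so a variance-sensitive concentration inequality is essential --- a plain Hoeffding/Azuma bound would only give the weaker $\sqrt{T}/\Smin$ in place of $\sqrt{T/\Smin}$. The only other delicate point is the coupling used to derive $T\cdot\SW(\cD)-\SWtot(\cD)=\SW(\cD)\cdot\Ex{}{T-\tau}$, but here the argument is routine given that prices and buyers are drawn afresh each round (cf.\ the induction in \Cref{cl:limited-supply-A-to-pi}).
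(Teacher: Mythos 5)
Your proposal is correct, but it takes a genuinely different route from the paper's. The paper (mirroring its proof of \Cref{lm:limited-deviations}) works with a high-probability ``clean execution'' event: with probability $1-T^{-2}$, one has simultaneously a Chernoff deviation bound on the cumulative consumption $y_{j,t}$ of every good $j$ at every round $t$, \emph{and} a Chernoff lower bound $Z_t\ge wt - c_0\sqrt{t}$ on the realized cumulative welfare; it then bounds the stopping time $\tau$ deterministically within the clean event and plugs into the welfare bound, finally (implicitly) passing from the clean event to the expectation. You instead invoke Wald's first identity to get the \emph{exact} relation $T\cdot\SW(\cD)-\SWtot(\cD)=\SW(\cD)\cdot\Ex{}{T-\tau}$, using that $\{\tau\ge t\}$ is $\mathcal{F}_{t-1}$-measurable while $W_t$ is independent of $\mathcal{F}_{t-1}$ for a fixed-distribution policy, and then bound $\Ex{}{T-\tau}$ directly by integrating a Bernstein-type tail on the consumption sums. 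The payoff of your route is that it eliminates one of the two concentration arguments: because Wald handles the welfare side exactly, you only ever need boundedness of a single round's welfare (to argue $|\SW(\cD)|=O(1)$), not concentration of $Z_t$, and you do not need to reason inside a high-probability event and then back out to expectations. Your observation that a variance-sensitive bound (Bernstein/multiplicative Chernoff, giving fluctuations $\Theta(\sqrt{Ts_j})$) is essential for the $\sqrt{T/\Smin}$ scaling is exactly right, and it is precisely what the paper's $|y_{j,t}-tx_j|\le c_0\sqrt{tx_j}$ encodes. One small caveat, shared by both proofs: the union bound over $d$ goods formally introduces a $\log d$ which you absorb by assuming $d\le\poly(T)$; the paper's $c_0=\sqrt{8\log T}$ with ``probability $1-T^{-2}$'' is equally cavalier about this, so your treatment is consistent with the paper's intended level of rigor.
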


Fix $\eps>0$ and choose some distribution $\cD$ such that
    $x(\cD)\leq s$
and 
    $\SWtot(\cD)\geq \supSWtot -\eps$.
Apply \Cref{THM:INTRO-LIMITED-SUPPLY} to construct a distribution $\cD^*$ such that
    $x(\cD^*)\leq s$
and
    $\SW(\cD^*)\geq \SW(\cD)-\alpha$.
By \Cref{lm:limited-deviations-D}, it follows that
\begin{align*}
    \SWtot(\cD)- \SWtot(\cD^*)
        &\leq T(\SW(\cD)-\SW(\cD^*)) +  O(\sqrt{T\log (T)/\Smin}) \\
        &\leq \alpha T +  O(\sqrt{T\log (T)/\Smin}).
\end{align*}

\subsection{Proof of \Cref{lm:limited-deviations}}
\label{pf:lm:limited-deviations}

Let $\pi$ denote the the fixed-price policy with price vector $p^*$. For the sake of the argument, let us consider the execution of $\pi$ in the problem instance $\cI$ with time horizon $T$, but without the supply constraint. Let $Z_t$ be the realized total welfare of this execution by time $t$. Without loss of generality, we view an execution of $\pi$ in the original problem instance as an execution in the unlimited-supply instance, truncated at round $\tau$ when the original problem instance would halt. Thus, the total realized welfare of $\pi$ in the original problem instance is $Z_\tau$, where $\tau$ is a stopping time.

Let $x_j = x_j(p^*)$ be the expected consumption of a given good $j$. Let $y_{j,t}$ be the realized total consumption of this good by time $t$. Let $w = \SW(p^*)$ be the expected per-round welfare for $p^*$. By Chernoff Bound, letting
    $c_0 = \sqrt{8\, \log T}$,
with probability at least $1-T^{-2}$ we have
\begin{align}\label{eq:limited-chernoff}
|y_{j,t}-t x_j| \leq c_0\,\sqrt{t x_j}
    \quad\text{and}\quad
    Z_t \geq wt - c_0\sqrt{t}
    \quad \text{for each good $j$ and all rounds $t\leq T$}.
\end{align}

An execution of $\pi$ on unlimited-supply problem instance $\cI$ is called \emph{clean} if the event in \cref{eq:limited-chernoff} holds. To prove the lemma, it suffices to show in clean execution,
\begin{align}\label{eq:limited-deviations-Ztau}
Z_\tau \geq T\cdot \SW(p^*) - O(\sqrt{T\log (T)/\Smin}).
\end{align}
So we will assume a clean execution from now on.

Let $B_j=s_j\, T$ be the supply for good $j$. The stopping time $\tau$ can be expressed as
\begin{align}\label{eq:limited-deviations-tau}
\tau = \min_{\text{goods $j$}} \min(T,\tau_j),
\quad\text{where}\quad
\tau_j = \min\left\{\text{rounds $t$}: y_{j,t}>B_j\right\}.
\end{align}
Informally, we can think of each $\tau_j$ as the stopping time for good $j$. Let us analyze $\tau_j$.

\begin{claim}\label{cl:limited-deviations-tau}
$\tau_j \geq \frac{B-c_0\sqrt{B_j}}{x_j}$, for each good $j$.
\end{claim}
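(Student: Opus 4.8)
The plan is to establish the slightly stronger statement that, in a clean execution, for every round $t$ with $t\,x_j < B_j - c_0\sqrt{B_j}$ the good $j$ has not yet been exhausted, i.e. $y_{j,t}\leq B_j$. Since $\tau_j$ is defined in \eqref{eq:limited-deviations-tau} as the first round at which $y_{j,t}>B_j$, this says $\tau_j>t$ for all such $t$, and taking the supremum over them yields $\tau_j\geq (B_j-c_0\sqrt{B_j})/x_j$.

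Concretely, I would fix a round $t$ with $t\,x_j < B_j - c_0\sqrt{B_j}$. In particular $t\,x_j < B_j$, so $\sqrt{t\,x_j}<\sqrt{B_j}$. Now invoke the clean-execution bound from \eqref{eq:limited-chernoff}, namely $y_{j,t}\leq t\,x_j + c_0\sqrt{t\,x_j}$, and chain the two estimates:
\[
y_{j,t}\;\leq\; t\,x_j + c_0\sqrt{t\,x_j}\;<\;(B_j-c_0\sqrt{B_j}) + c_0\sqrt{B_j}\;=\;B_j .
\]
Hence $t<\tau_j$, and since this holds for every round $t<(B_j-c_0\sqrt{B_j})/x_j$ we conclude $\tau_j\geq (B_j-c_0\sqrt{B_j})/x_j$. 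If $x_j=0$ the right-hand side is $+\infty$, which is consistent: the clean-execution bound then forces $y_{j,t}=0\leq B_j$ for all $t$, so $\tau_j=+\infty$.

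There is essentially no obstacle here — the claim is a one-line consequence of the concentration estimate built into the definition of a clean execution. The only point needing (minor) care is that the fluctuation term $c_0\sqrt{t\,x_j}$ must be bounded by $c_0\sqrt{B_j}$, which is why one first extracts the crude inequality $t\,x_j<B_j$ before substituting back; without this intermediate step the chain of inequalities would not close. This claim is then combined with \eqref{eq:limited-deviations-tau} (to lower bound $\tau$) and with the welfare lower bound $Z_t\geq wt-c_0\sqrt{t}$ from \eqref{eq:limited-chernoff} to derive \eqref{eq:limited-deviations-Ztau}, but that derivation lies beyond the scope of this claim.
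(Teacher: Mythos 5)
Your proof is correct and takes essentially the same route as the paper's: both invoke the clean-execution bound $y_{j,t}\leq t x_j + c_0\sqrt{t x_j}$, use the crude observation $t x_j < B_j$ to bound $\sqrt{t x_j}\leq\sqrt{B_j}$, and conclude $y_{j,t}\leq B_j$ for all rounds $t$ below the threshold. The only differences are cosmetic (strict vs.\ non-strict inequalities, and your explicit remark about the degenerate case $x_j=0$, which the paper passes over silently).
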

\begin{proof}
Let $\eps = c_0\sqrt{B_j}/x_j$. It suffices to prove that for each round $t\leq B_j/x_j-\eps$ we have $y_{j,t}\leq B_j$. This is so because by \cref{eq:limited-chernoff} we have
    $ y_{j,t} \leq t x_j + c_0\,\sqrt{t x_j} \leq B_j - \eps x_j + c_0\sqrt{B_j} \leq B_j$.
\end{proof}

\begin{claim}\label{cl:limited-deviations-mintau}
$\min(T,\tau_j)\geq T-2 c_0 \sqrt{T/s_j}$, for each good $j$.
\end{claim}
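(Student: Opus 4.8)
The plan is to derive \Cref{cl:limited-deviations-mintau} from \Cref{cl:limited-deviations-tau}, which gives $\tau_j \geq (B_j - c_0\sqrt{B_j})/x_j$ with $B_j = s_j T$ and $x_j = x_j(p^*)$, via a short case analysis on the size of $x_j$ relative to $s_j$. Throughout I would use that $p^*$ is feasible, so $x_j \leq s_j$, and that the standing hypothesis $T\,\Smin > 32\log T$ tames the lower-order term: since $s_j \geq \Smin$ and $c_0 = \sqrt{8\log T}$, we have $B_j = s_j T > 32\log T = 4 c_0^2$, whence $c_0\sqrt{B_j} \leq \tfrac12 B_j$ and therefore $B_j - c_0\sqrt{B_j}\geq \tfrac12 B_j$.

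First I would dispatch the regime $x_j < s_j/2$. Here \Cref{cl:limited-deviations-tau} together with the bound just noted gives $\tau_j \geq \tfrac12 B_j/x_j > B_j/s_j = T$, so $\min(T,\tau_j) = T$, which is certainly at least $T - 2c_0\sqrt{T/s_j}$.

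In the complementary regime $x_j \geq s_j/2$, I would instead keep the two terms of \Cref{cl:limited-deviations-tau} separate: $\tau_j \geq B_j/x_j - c_0\sqrt{B_j}/x_j$. Since $x_j \leq s_j$, the first term is at least $B_j/s_j = T$; since $x_j \geq s_j/2$ and $B_j = s_j T$, the subtracted term is at most $2 c_0 \sqrt{s_j T}/s_j = 2 c_0 \sqrt{T/s_j}$. Hence $\tau_j \geq T - 2c_0\sqrt{T/s_j}$, and since $T$ itself also exceeds this quantity, $\min(T,\tau_j) \geq T - 2c_0\sqrt{T/s_j}$, as claimed.

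I do not expect a genuine obstacle here. The only point needing care is that the bound of \Cref{cl:limited-deviations-tau} is vacuous unless $\sqrt{B_j} > c_0$, and that is precisely where the assumption $T\,\Smin > 32\log T$ enters — it actually yields the stronger $\sqrt{B_j} > 2c_0$, which is what makes the constant $2$ come out correctly in both cases. Everything else is elementary algebra.
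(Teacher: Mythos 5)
Your proof is correct and takes essentially the same approach as the paper's: derive the bound from \Cref{cl:limited-deviations-tau}, use the standing assumption $T\,\Smin > 32\log T$ to get $c_0\sqrt{B_j}\leq B_j/2$, and split into a regime where $\tau_j\geq T$ trivially versus one where $x_j$ is large enough that $c_0\sqrt{B_j}/x_j \leq 2c_0\sqrt{T/s_j}$. The paper cases on whether $T \leq (B_j - c_0\sqrt{B_j})/x_j$ directly, while you case on $x_j \lessgtr s_j/2$, but these are functionally equivalent and the algebra in each branch is the same.
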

\begin{proof}
If $T \leq \frac{B-c_0\sqrt{B_j}}{x_j}$, then the claim follows trivially by \Cref{cl:limited-deviations-tau}. Else, we have
$$ T \geq \frac{B-c_0\sqrt{B_j}}{x_j} \geq \frac{B_j}{2x_j},
    \quad\text{so}\quad
\tau_j \geq T-\frac{c_0 B_j}{x_j} \geq T-2c_0 T/\sqrt{B_j} = T-2c_0 \sqrt{T/s_j}. \qedhere
$$
\end{proof}

Plugging this into \cref{eq:limited-deviations-tau}, it follows that
    $\tau\geq T-2c_0\sqrt{T/\Smin}$.
Since
    $Z_\tau \geq w\tau - c_0\sqrt{\tau}$
by \cref{eq:limited-chernoff}, the lower bound on $\tau$ implies \cref{eq:limited-deviations-Ztau}. This completes the proof of \Cref{lm:limited-deviations}.

%%% Local Variables:
%%% mode: latex
%%% TeX-master: "main.tex"
%%% End:

\section{Conclusions and open questions}
\label{sec:conclusions}

We provide a polynomial time dynamic pricing algorithm for maximizing welfare over the allocation of $d$ goods, for buyers satisfying reasonable assumptions on their valuation functions. Prior work either required explicit assumptions on the aggregate price response function unsupported by micro-economic foundations, or had running time exponential in $d$.

Let us highlight two interesting directions. First: give an algorithm with a more reasonable polynomial run time. While we achieve polynomial dependence on $d$ and $\alpha$, this is mainly a proof of concept result: the degree of the polynomial run-time of our algorithm is quite high. A (much) smaller degree is desirable, but appears beyond the reach of our current techniques. Are there practical algorithms that achieve the same guarantees?  Second: extend our results to revenue optimization, a more traditional objective in the dynamic pricing literature. With our current techniques, this extension requires a major assumption on valuations, namely that buyers' valuations are \emph{uniformly homogeneous} with degree $m < 1$: that there exists a constant $m < 1 $ such that for every buyer $i$, bundle $x$, and scalar $\lambda$,
    $v_i(\lambda\, x)=\lambda^m v_i(x)$ (the extension to revenue maximization subject to this assumption follows from the techniques of \cite{RUW16}). Can revenue maximization be handled subject to weaker assumptions?

%%% Local Variables:
%%% mode: latex
%%% TeX-master: "main"
%%% End:

\section*{Acknowledgements.} We are grateful to Moshe Babaioff for
valuable feedback on an early draft of this paper. This work was
partially supported by NSF grant CNS-1253345, a Sloan Foundation
Fellowship, and a DARPA grant. Parts of this work have been done while
Zhiwei Steven Wu was visiting Microsoft Research.

\newpage
\bibliographystyle{acmtrans}

\bibliography{refs,bib-abbrv,bib-slivkins,bib-bandits,bib-AGT,bib-ML}
 % \bibliography{main.bbl}

\appendix

\section{Basic Definitions and Tools}

The following condition is useful for obtaining the Slater's condition
and strong duality in convex program.

\begin{definition}[Non-empty interior]\label{def:strictdude} Let $C\in \RR^d$ be a convex
  set defined by the following set of constraints:
    \begin{align*}
      f_i(x) \leq 0, \qquad& i = 1, \ldots , k\\
      g_j(x) \leq 0, \qquad& j = 1, \ldots, m
    \end{align*}
    where each $f_i$ is a convex function and each $g_j$ is an affine
    function.  We say that the set $C$ has a~\emph{non-empty interior}
    if there exists a point $x^*\in C$ such that $f_i(x) < 0$ for each
    $i$.
\end{definition}

\begin{lemma}
\label{lem:sconvex}
Let $\phi\colon C \rightarrow \RR$ be a $\sigma$-strongly convex function, and let $x^* = \argmin_{x \in C} \phi(x)$ be the minimizer of $\phi$.  Then, for any $x\in C$,
\iffull \[ \else $ \fi
\|x - x^*\|^2_2 \leq \frac{2}{\sigma} \cdot (\phi(x) - \phi(x^*)).
\iffull \] \else $ \fi
\iffull
Similarly, if $\phi$ is $\sigma$-strongly concave, and $x^* = \argmax_{x \in C} \phi(x)$, then for any $x \in C$,
\iffull \[ \else $ \fi
\|x - x^*\|_2^2 \leq \frac{2}{\sigma} \cdot (\phi(x^*) - \phi(x)).
\iffull \] \else $ \fi\fi
\end{lemma}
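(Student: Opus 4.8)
\textbf{Proof proposal for Lemma~\ref{lem:sconvex}.} The plan is to combine the defining inequality of strong convexity with the first-order optimality condition at the constrained minimizer $x^*$. First I would recall that $\sigma$-strong convexity of $\phi$ over $C$ means that for every $x \in C$ there is a subgradient $g \in \partial\phi(x)$ such that
\[
\phi(y) \geq \phi(x) + \langle g, y - x\rangle + \frac{\sigma}{2}\|y - x\|_2^2 \qquad \text{for all } y \in C.
\]
Applying this at the point $x^*$, with an arbitrary $x \in C$ playing the role of $y$, gives $\phi(x) \geq \phi(x^*) + \langle g^*, x - x^*\rangle + \frac{\sigma}{2}\|x - x^*\|_2^2$ for a suitable $g^* \in \partial\phi(x^*)$.

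Next I would invoke the first-order optimality condition for minimization over a convex set: since $x^* = \argmin_{x\in C}\phi(x)$ and $C$ is convex, there exists a subgradient $g^* \in \partial\phi(x^*)$ with $\langle g^*, x - x^*\rangle \geq 0$ for all $x \in C$ (equivalently, $0 \in \partial\phi(x^*) + N_C(x^*)$, the normal cone to $C$ at $x^*$). Using this same $g^*$ in the displayed inequality above makes the inner-product term nonnegative, so
\[
\phi(x) \geq \phi(x^*) + \frac{\sigma}{2}\|x - x^*\|_2^2,
\]
which rearranges immediately to $\|x - x^*\|_2^2 \leq \frac{2}{\sigma}\bigl(\phi(x) - \phi(x^*)\bigr)$, as claimed.

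Finally, the strongly concave case follows by applying the above to $-\phi$, which is $\sigma$-strongly convex, with $\argmin(-\phi) = \argmax(\phi) = x^*$; this yields $\|x - x^*\|_2^2 \leq \frac{2}{\sigma}\bigl(\phi(x^*) - \phi(x)\bigr)$. The only subtlety worth a sentence of care is the constrained first-order optimality condition — that a minimizer over $C$ admits a subgradient whose inner product with every feasible direction $x - x^*$ is nonnegative; this is the standard variational-inequality characterization, and for an interior minimizer it reduces to $0 \in \partial\phi(x^*)$. Everything else is a one-line rearrangement.
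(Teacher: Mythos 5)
Your proof is correct and is the standard argument; the paper simply states this lemma in its tools appendix without proof, treating it as a known fact, so there is no alternative paper proof to compare against. Your combination of the strong-convexity inequality at $x^*$ with the constrained first-order optimality condition $\langle g^*, x - x^*\rangle \geq 0$ for all $x \in C$ is exactly the canonical route, and the reduction of the concave case to the convex case via $-\phi$ is clean.

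One small wording note: you phrase strong convexity as ``for every $x$ there is a subgradient $g \in \partial\phi(x)$ such that the inequality holds,'' but then you need the inequality to hold for the particular $g^*$ supplied by the optimality condition. This is harmless because $\sigma$-strong convexity in the usual sense (equivalently, $\phi - \tfrac{\sigma}{2}\|\cdot\|_2^2$ convex) gives the lower bound for \emph{every} subgradient at $x$, and in this paper's setting the valuations are assumed differentiable so the subgradient is unique anyway; but stating the quantifier as ``for all $g \in \partial\phi(x)$'' would remove the appearance of needing to match two separately-chosen subgradients.
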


\begin{theorem}[Real-valued Additive Chernoff-Hoeffding Bound]\label{chern}
  Let $X_1, \ldots , X_m$ be i.i.d. random variables with
  $\Ex{}{X_i} = \mu$ and $a\leq X_i\leq b$ for all $i$. Then for every
  $\alpha > 0$,
\[
  \Pr\left[ \left| \frac{\sum_i X_i}{m} - \mu\right| \geq
    \alpha\right] \leq 2 \exp\left( \frac{-2 \alpha^2 m}{(b -
      a)^2}\right).
\]
\end{theorem}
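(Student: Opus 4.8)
The plan is to prove this by the standard exponential-moment (Chernoff) method combined with Hoeffding's lemma. First I would reduce to a one-sided tail bound: it suffices to show
\[
\prob{\tfrac{1}{m}\textstyle\sum_i X_i - \mu \geq \alpha} \leq \exp\!\left(\frac{-2\alpha^2 m}{(b-a)^2}\right),
\]
since applying this same estimate to the variables $-X_i$ (which lie in $[-b,-a]$, an interval of the same width $b-a$, and have mean $-\mu$) bounds the lower tail $\prob{\tfrac1m\sum_i X_i - \mu \leq -\alpha}$, and a union bound over the two one-sided events produces the factor $2$ in the statement.

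For the one-sided bound, fix $t > 0$ and set $S = \sum_{i=1}^m (X_i - \mu)$. By Markov's inequality applied to $e^{tS}$ together with independence of the $X_i$,
\[
\prob{S \geq m\alpha} = \prob{e^{tS} \geq e^{tm\alpha}} \leq e^{-tm\alpha}\,\ex{e^{tS}} = e^{-tm\alpha}\prod_{i=1}^m \ex{e^{t(X_i - \mu)}}.
\]
The key ingredient is \emph{Hoeffding's lemma}: if $Y$ is a random variable with $\ex{Y} = 0$ and $Y \in [c,d]$ almost surely, then $\ex{e^{tY}} \leq e^{t^2 (d-c)^2/8}$. I would establish this using convexity of $y \mapsto e^{ty}$ on $[c,d]$, which gives the pointwise bound $e^{ty} \leq \tfrac{d-y}{d-c} e^{tc} + \tfrac{y-c}{d-c} e^{td}$; taking expectations and using $\ex{Y} = 0$ bounds $\ex{e^{tY}}$ by $e^{\phi(t)}$ for an explicit smooth function $\phi$ with $\phi(0) = \phi'(0) = 0$ and $\phi''(t) \leq (d-c)^2/4$ for all $t$ (here $\phi''(t)$ is recognized as the variance of a two-point random variable supported on $\{c,d\}$, hence at most a quarter of the squared range), so Taylor's theorem yields $\phi(t) \leq t^2 (d-c)^2/8$.

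Applying Hoeffding's lemma to $Y_i = X_i - \mu \in [a-\mu, b-\mu]$, an interval of width exactly $b-a$, gives $\ex{e^{t(X_i-\mu)}} \leq e^{t^2(b-a)^2/8}$ for each $i$, so
\[
\prob{S \geq m\alpha} \leq \exp\!\left(-tm\alpha + \frac{m t^2 (b-a)^2}{8}\right).
\]
The exponent is a quadratic in $t$ minimized at $t^\star = 4\alpha/(b-a)^2$; plugging this in gives $\prob{S \geq m\alpha} \leq \exp(-2\alpha^2 m/(b-a)^2)$, the desired one-sided estimate. Combining it with the symmetric lower-tail bound via the union bound completes the proof. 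I expect the only non-routine step to be Hoeffding's lemma — and within it, the inequality $\phi''(t) \leq (d-c)^2/4$; the remaining steps (Markov, independence, the final one-variable optimization, and the symmetrization) are routine bookkeeping.
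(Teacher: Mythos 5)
Your proof is correct, and it is the standard Chernoff--Hoeffding argument: reduce to a one-sided tail by symmetry, exponentiate, apply Markov's inequality and independence to factor the moment generating function, invoke Hoeffding's lemma to bound each factor by $e^{t^2(b-a)^2/8}$, and then optimize the free parameter $t$ (the value $t^\star = 4\alpha/(b-a)^2$ and the resulting exponent $-2\alpha^2 m/(b-a)^2$ both check out). The one thing to note is that the paper itself does not prove this statement at all --- it is stated in the appendix under ``Basic Definitions and Tools'' as a standard black-box fact, so there is no proof in the paper to compare against. Your write-up is exactly the proof a careful reader would supply if asked to fill in the citation, and all the arithmetic (in particular the interval $[a-\mu, b-\mu]$ having width $b-a$, and the final quadratic optimization) is right.
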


\section{Proof for Noisy Gradient Descent}
\label{app:noisyman}
\noisyman*

\begin{proof}
  Let $x^*\in \argmin_{x\in C}c(x)$. For each time step $t$, we will
  write $\hat x_{t+1} = \Pi_C (y_{t+1})$. Using the basic property of
  convexity and the elementary identity
  $2\langle a, b\rangle = \|a\|^2 + \|b\|^2 - \|a - b\|^2$, we can
  derive the following
\begin{align*}
  c(x_t) - c(x^*) &\leq \langle  g_t , x_t - x^*\rangle\\
                  &= \frac{1}{\eta} \left\langle (x_t - y_{t+1}), (x_t - x^*)\right\rangle\\
                  &= \frac{1}{2\eta} \left(\|x_t - y_{t+1}\|^2 + \|x_t - x^*\|^2 - \|y_{t+1} - x^*\|^2\right)\\
                  &= \frac{1}{2\eta} \left(\|x_t - x^*\|^2 - \|y_{t+1} -x^*\|^2 \right) + \frac{\eta}{2} \|g_t\|^2
\end{align*}

Since $x^*\in C$, by the property of the projection mapping $\Pi$, we
know that
\[
  \|y_{t+1} - x^* \|^2 \geq \|\Pi_C(y_{t+1}) - x^*\|^2 = \|x_{t+1} -
  x^* + \hat x_{t+1} - x_{t+1}\|^2 \geq \|x_{t+1} - x^*\|^2 - 2E \|x_{t+1} - x^*\|
\]
Since the set $C$ has diameter bounded by $D$, it follows that
\begin{align*}
  c(x_t) - c(x^*) &\leq\frac{1}{2\eta} \left( \|x_t - x^*\|^2 - \|x_{t+1} -x^*\|^2 + 2E\|x_{t+1} - x^*\|  \right) +
                    \frac{\eta}{2} \|g_t\|^2\\
                  &\leq \frac{1}{2\eta} \left( \|x_t - x^*\|^2 - \|x_{t+1} -x^*\|^2 \right) +
                    \frac{\eta G^2}{2}  + \frac{DE}{\eta}
\end{align*}

Also note that $\|g_t\| \leq G$, then summing over the
resulting inequality over all time steps $s$ yields
\[
  \sum_{t=1}^T (c(x_t) - c(x^*)) \leq \frac{D^2}{2\eta} + \frac{\eta
    G^2}{2} + \frac{DE}{\eta}
\]
Therefore, if we set the step size $\eta = D/(G\sqrt{T})$, we get the
following guarantee using convexity
\[
c(z) \leq \frac{1}{T}\sum_{t=1}^T c(x_t) \leq DG/\sqrt{T} + GE\sqrt{T},
\]
which recovers our claim.
\end{proof}

\section{Missing Details and Proofs of~\Cref{MAINSEC}}

% \begin{claim}\label{sim}
% Let $\hat x$ be an inducible bundle, then $\hat x\in \cF$.
% \end{claim}

% \begin{proof}
%   Suppose that $\hat x$ is inducible. Let $\hat p\in \RR_+^d$ be a
%   price vector such that $x^*_\psi(p) = \hat x$. Let
%   $x'_i = x_{v_i}^*(\hat p)$ for each $i$.  By definition of
%   inducible, we know that $\sum_i \psi(v_i) x'_i = \hat x$. Thus,
%   $\hat x$ is a convex combination of points in $\cF$, and since $\cF$
%   is convex, we must have $\hat x\in \cF$.
% \end{proof}

 % \timmy*

 \ohfun*
\begin{proof}
  First, we will show that
  $g_{\hat x}(\hat p) = \min_{p\in \RR_+^d} g_{\hat x}(p) = \OPT(\hat
  x)$.  Note that
  \begin{align*}
    g_{\hat x}(\hat p)= \max_{x\in \cF^n} \cL_{\hat x}(x, \hat p) &= \max_{x\in \cF^n}\sum_{i} \psi(v_i) \, v_i(x_{i}) -
                                                           \sum_{j=1}^d \hat p_j \left( \sum_{v_i\in \cV} \psi(v_i) \, x_{ij} - \hat
                                                           x_j\right)\\
                                                         &= \max_{x\in \cF^n} \sum_{i} \psi(v_i) \left( v_i(x_{i}) - \langle \hat p_j,  x_{i}\rangle + \langle \hat p, \hat x\rangle\right)
  \end{align*}
  By~\cref{nips}, we have $x^\bullet_i =  \argmax_{x_i\in \cF} \left[v_i(x_i) - \langle \hat
      p, x_i\rangle + \langle \hat p, \hat x\rangle\right]$.
  % where the last equality follows from the fact that the term
  % $\langle \hat p, \hat x\rangle$ does not depend on the decision
  % variables $x$.
  This means $g_{\hat x}(\hat p) = \cL_{\hat x}(x\bl, \hat p)$.
  Since for each $j\in [d]$, we have
  $\sum_i \psi(v_i) x\bl_{ij} = \hat x_j$, we get
  \[
    g_{\hat x}(\hat p) = \cL_{\hat x}(x\bl, \hat p) = \sum_i \psi(v_i) \,
    v_i(x_{i}\bl) \leq \OPT(\hat x)
  \]
  where the last inequality follows from the fact that $x\bl$ is a
  feasible solution to the program $\SCP(\hat x)$. Therefore, we must
  have $g(\hat p) = \OPT(\hat x)$, and so $\hat p$ is an optimal dual
  solution to the convex program.  Since $\hat p$ is the optimal dual
  solution, it follows from~\cref{nips} that $x\bl$ is the optimal
  primal solution. Finally, the uniqueness of the primal solution
  follows from the fact that the objective in~\cref{ha} is strongly
  concave. % We will now show the uniqueness of the dual solution. Let
\end{proof}

\begin{claim}\label{spiderman}
  Let $\hat x \in \left(\cF\cap \RR_{>0}^d\right)$ be an inducible
  bundle, induced by price vector $\hat p$. Then
  $\SW(\hat p) = \SW(\hat x)$.
\end{claim}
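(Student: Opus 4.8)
\textbf{Proof proposal for Claim~\ref{spiderman}.} The plan is to unfold both sides of the claimed equality using their definitions and to identify the terms via Lemma~\ref{ohfun}. Recall that $\SW(\hat p) = \Ex{v_i\sim\psi}{v_i(x^*_{v_i}(\hat p)) - c(x^*_{v_i}(\hat p))}$ by~\cref{swel}, while $\SW(\hat x) = \OPT(\hat x) - \langle c, \hat x\rangle$ by~\cref{bundleform}. So it suffices to show that $\OPT(\hat x) = \Ex{v_i\sim\psi}{v_i(x^*_{v_i}(\hat p))}$ and that $\langle c, \hat x\rangle = \Ex{v_i\sim\psi}{c(x^*_{v_i}(\hat p))}$.

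First I would handle the valuation term. Since $\hat x\in\left(\cF\cap\RR_{>0}^d\right)$ is induced by $\hat p$, i.e.\ $x^*_\psi(\hat p) = \hat x$, Lemma~\ref{ohfun} applies: the vector $x\bl\in\cF^n$ with $x\bl_i = x^*_{v_i}(\hat p)$ is the (unique) optimal primal solution of $\SCP(\hat x)$. Hence $\OPT(\hat x) = \sum_{v_i\in\cV}\psi(v_i)\, v_i(x\bl_i) = \Ex{v_i\sim\psi}{v_i(x^*_{v_i}(\hat p))}$, which is exactly the expected valuation term in $\SW(\hat p)$.

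Next I would handle the cost term, which is just bookkeeping using linearity of the production cost. We have $\hat x = x^*_\psi(\hat p) = \Ex{v_i\sim\psi}{x^*_{v_i}(\hat p)}$ by definition of the induced bundle, so $\langle c, \hat x\rangle = \langle c, \Ex{v_i\sim\psi}{x^*_{v_i}(\hat p)}\rangle = \Ex{v_i\sim\psi}{\langle c, x^*_{v_i}(\hat p)\rangle} = \Ex{v_i\sim\psi}{c(x^*_{v_i}(\hat p))}$, using that $c(\cdot) = \langle c, \cdot\rangle$ is linear and that expectation commutes with the inner product (the support of $\psi$ is finite). Combining the two displays gives
\[
  \SW(\hat x) = \OPT(\hat x) - \langle c, \hat x\rangle = \Ex{v_i\sim\psi}{v_i(x^*_{v_i}(\hat p)) - c(x^*_{v_i}(\hat p))} = \SW(\hat p),
\]
as desired. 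There is no real obstacle here; the only substantive input is Lemma~\ref{ohfun}, which pins down the realized bundles of each buyer type under $\hat p$ as the optimal primal solution of $\SCP(\hat x)$, and everything else is unwinding definitions and linearity of cost.
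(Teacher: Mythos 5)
Your proof is correct and takes essentially the same approach as the paper: both invoke Lemma~\ref{ohfun} to identify the realized bundles $x^*_{v_i}(\hat p)$ with the optimal primal solution $x\bl$ of $\SCP(\hat x)$, so that $\OPT(\hat x) = \Ex{v_i\sim\psi}{v_i(x\bl_i)}$, and then finish by linearity of expectation over the cost term. You spell out the cost-term bookkeeping a little more explicitly than the paper does, but the decomposition and the key input are identical.
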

% \spiderman*

\begin{proof}
  For each $i$, let $x\bl_i(\hat x)$ be the allocation to buyer $i$ in
  the convex program $\SCP(\hat x)$. By~\Cref{ohfun}, we know that
  $x\bl_i(\hat x) = x^*_{v_i}(\hat p)$, and also
  $\OPT(\hat x) = \Ex{v_i\sim\psi}{v(x\bl_i(\hat x))}$. By linearity
  of expectation,
  \[
    \SW(\hat p) = \Ex{v_i\sim \psi}{v(x\bl_i(\hat x)) - \langle c,
      x\bl_i \rangle} = \OPT(\hat x) -  \langle c, \hat x\rangle,
  \]
  which recovers our claim.
\end{proof}

\caveman*
\begin{proof}
  Let $\hat x, \hat y\in \cF$, and let $\hat z = (\hat x + \hat
  y)/2$. It suffices to show that
\[
  \SW\left(\hat z \right) \geq \frac{\SW(\hat x) +
    \SW(\hat y)}{2}.
\]
Now for each $v_i\in \cV$, let $x^\bullet_i$, $y^\bullet_i$,
$z^\bullet_i$ be the bundles assigned to buyer $i$ in the optimal
solutions of the convex programs $\SCP(\hat x)$, $\SCP(\hat y)$ and
$\SCP(\hat z)$. Furthermore, let
$r^\bullet_i = (x^\bullet_i + y^\bullet_i)/2$.
% By~\Cref{ohfun}, we know that $x\bl$, $y\bl$ and $z\bl$ are the
% optimal solutions for the convex programs $\SCP(\hat x)$,
% $\SCP(\hat y)$ and $\SCP(\hat z)$ with values
% $\Ex{v_i\sim \psi}{v_i(x\bl_i)}$, $\Ex{v_i\sim \psi}{v_i(y\bl_i)}$ and
% $\Ex{v_i\sim \psi}{v_i(z\bl_i)}$ respectively.
Observe that the assignment $r\bl$ is also a feasible solution for the
convex program $\SCP(\hat z)$, so by the optimality of $z\bl$, we have
\[
\Ex{v_i\sim \psi}{v_i(z\bl_i)} \geq \Ex{v_i\sim \psi}{v_i(r\bl_i)}
\]
By the definition of $\SW$ and $\OPT$, we have
\[
\SW(\hat x) = \Ex{v_i\sim \psi}{v_i(x\bl_i)} - \langle c, \hat x\rangle, \quad \mbox{ and }\quad
\SW(\hat y) = \Ex{v_i\sim \psi}{v_i(y\bl_i)} - \langle c, \hat y \rangle.
\]
It follows that
\begin{align*}
  \SW(\hat x) + \SW(\hat y) &= \Ex{v_i\sim \psi}{v_i(x\bl_i)} - \langle c, \hat x\rangle  +\Ex{v_i\sim \psi}{v_i(y\bl_i)} - \langle c, \hat y \rangle\\
&= \Ex{v_i\sim \psi}{v_i(x\bl_i)}  +\Ex{v_i\sim \psi}{v_i(y\bl_i)} - 2 \langle c, \hat z \rangle\\
(\mbox{by concavity of } v_i) \quad &\leq 2 \Ex{v_i\sim \psi}{v_i(r\bl_i)} - 2 \langle c, \hat z \rangle = 2 \SW(\hat z)
\end{align*}
which completes our proof.
\end{proof}

We can translate the H\"{o}lder continuity condition on the
valuation functions in class $\cV$ to a H\"{o}lder continuity
condition of the function $\OPT$ and $\SW$, which will be useful for
our analysis later.

\begin{restatable}{lemma}{holderman}\label{holderman}
  $\OPT \colon  \cF \rightarrow \RR$ as defined
    in~\cref{bundleform} is $(d^{1-\beta} \lambda , \beta)$-H\"{o}lder
    continuous w.r.t. the $\ell_1$ norm.
\end{restatable}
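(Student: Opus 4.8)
The plan is to relate $\OPT(\hat x)$ and $\OPT(\hat x')$, for $\hat x,\hat x'\in\cF$, by taking an optimal solution of one convex program and perturbing it into a \emph{feasible} solution of the other. Fix $\hat x,\hat x'\in\cF$; by symmetry in $\hat x,\hat x'$ it suffices to bound $\OPT(\hat x)-\OPT(\hat x')$. Let $x\bl=(x\bl_i)_i$ be the optimal solution of $\SCP(\hat x)$; by supply saturation (\Cref{saturation}, or \Cref{rm:sat}) we have $\sum_i\psi(v_i)\,x\bl_{ij}=\hat x_j$ for every coordinate $j$. I will exhibit $y=(y_i)_i$ with $y_i\in\cF$ and $\sum_i\psi(v_i)\,y_{ij}\le\hat x'_j$ for all $j$ --- so that $y$ is feasible for $\SCP(\hat x')$ and hence $\OPT(\hat x')\ge\sum_i\psi(v_i)\,v_i(y_i)$ --- obtained from $x\bl$ coordinate by coordinate: in coordinates $j$ with $\hat x'_j\ge\hat x_j$ keep $y_{ij}=x\bl_{ij}$, and in coordinates with $\hat x'_j<\hat x_j$ scale every buyer's $j$-th component down by the common factor $\hat x'_j/\hat x_j$. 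For the canonical box $\cF=[0,1]^d$ this keeps each $y_i\in\cF$; for $\cF=\Delta_{d+1}$ (the case needed for unit demands) the mass freed by the scaling is instead reallocated proportionally among the under-supplied coordinates, which keeps each $y_i$ in the simplex. In either case the construction has the key property that the $\psi$-weighted aggregate displacement in each coordinate is bounded by the change in the target: $\sum_i\psi(v_i)\,|x\bl_{ij}-y_{ij}|\le|\hat x_j-\hat x'_j|$.

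Given such a $y$, the value gap is controlled by a H\"older/Jensen computation. From feasibility of $y$,
\[
\OPT(\hat x)-\OPT(\hat x')\ \le\ \sum_i\psi(v_i)\bigl(v_i(x\bl_i)-v_i(y_i)\bigr).
\]
Applying the $(\lambda,\beta)$-H\"older continuity of each $v_i$ and subadditivity of $t\mapsto t^\beta$ gives $v_i(x\bl_i)-v_i(y_i)\le\lambda\|x\bl_i-y_i\|_1^\beta\le\lambda\sum_j|x\bl_{ij}-y_{ij}|^\beta$. Averaging over $i$ and using concavity of $t\mapsto t^\beta$ coordinatewise yields $\sum_i\psi(v_i)\,|x\bl_{ij}-y_{ij}|^\beta\le\bigl(\sum_i\psi(v_i)\,|x\bl_{ij}-y_{ij}|\bigr)^\beta\le|\hat x_j-\hat x'_j|^\beta$. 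Summing over $j$ and invoking the power-mean inequality $\sum_{j=1}^d a_j^\beta\le d^{1-\beta}\bigl(\sum_j a_j\bigr)^\beta$ gives $\OPT(\hat x)-\OPT(\hat x')\le d^{1-\beta}\lambda\,\|\hat x-\hat x'\|_1^\beta$, and swapping the roles of $\hat x$ and $\hat x'$ gives the matching lower bound, proving the claim.

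The main obstacle is the first step: constructing the perturbed solution $y$ that is feasible for $\SCP(\hat x')$, stays inside $\cF$ buyer by buyer, and obeys the per-coordinate displacement bound. For the box this is a one-line scaling; for a general $\cF$ (and in particular the simplex), reducing over-supplied coordinates forces a simultaneous reallocation of the freed mass to under-supplied coordinates, and one must verify that this reallocation (i) keeps every $y_i$ in $\cF$, (ii) respects $\sum_i\psi(v_i)\,y_{ij}\le\hat x'_j$, and (iii) still satisfies $\sum_i\psi(v_i)\,|x\bl_{ij}-y_{ij}|\le|\hat x_j-\hat x'_j|$ in each coordinate. The remaining value-gap estimate is routine; note that the split-by-coordinate H\"older step is exactly what costs the factor $d^{1-\beta}$ --- a single one-shot application of H\"older and Jensen would already give the constant $\lambda$, but the weaker $d^{1-\beta}\lambda$ bound is all that is needed downstream.
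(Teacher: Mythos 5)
Your proof is correct and rests on the same core mechanism as the paper's: use supply saturation to peg $\sum_i\psi(v_i)x\bl_{ij}=\hat x_j$, perturb the optimal solution of $\SCP(\hat x)$ into a feasible solution of $\SCP(\hat x')$ so that the $\psi$-weighted per-coordinate displacement is at most $|\hat x_j-\hat x'_j|$, and then push this through H\"older continuity of the $v_i$ plus Jensen. The only organizational difference is that the paper passes through a chain $b^0,\ldots,b^d$ of bundles each differing from the next in a single coordinate and bounds each link by $\lambda\Delta_r^\beta$, whereas you construct one simultaneous scaling and split by coordinate inside the H\"older/Jensen estimate; both incur the same $d^{1-\beta}$ through the power-mean step, so the two calculations are interchangeable. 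Your closing observation is correct and is a genuine (if small) improvement: applying H\"older directly to $\|x\bl_i-y_i\|_1$ and Jensen once across buyers, without the coordinate split, yields $|\OPT(\hat x)-\OPT(\hat x')|\leq\lambda\|\hat x-\hat x'\|_1^\beta$, i.e.\ the $d^{1-\beta}$ factor in the stated lemma is superfluous. You also rightly flag the one spot both arguments gloss over: the paper asserts $x'\in\cF^n$ without justification when it lowers coordinate $j$ of each $x\bl_i$; this needs a downward-closure-type property of $\cF$ beyond what Assumption~\ref{assf} literally states, though it does hold for the box $[0,1]^d$ and, after adding the dummy good, for $\Delta_{d+1}$ exactly as you describe.
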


% \holderman*
\begin{proof}
  Let two bundles $\hat x, \hat y\in \cF$ be such that
  $\hat x_j = \hat y_j + \Delta$ and $\hat x_{j'} = \hat y_{j'}$.  We
  will first show that
  $\OPT(\hat x) - \OPT(\hat y) \leq \lambda \|\hat x - \hat
  y\|_2^\beta = \Delta^\beta$.

  For each $i$, let $x^\bullet_i$ and $y^\bullet_i$ be the assigned
  bundles to buyer $i$ in $\SCP(\hat x)$ and $\SCP(\hat y)$
  respectively. By our assumption, we know
  $\sum_i\psi(v_i) x\bl_i = \hat y_j + \Delta$. Then there exists
  $\delta_1, \ldots, \delta_n \in \RR_+$ such that
  $\sum_{i = 1}^n \psi(v_i)\delta_i = \Delta$ and
  $x\bl_{ij} \geq \delta_i$. Now consider a vector $x'\in \cF^{nd}$ such
  that for each $i$,
\[
  x'_{ij} = x\bl_{ij} - \delta_i \geq 0
  \qquad \mbox{ and } \qquad
  x'_{ij'} = x\bl_{ij'} \quad \mbox{for each }j'\neq j.
\]
Note that $x'$ is a feasible solution to the convex program
$\SCP(\hat y)$, so we have
$\sum_i \psi(v_i) v_i(x'_i) \leq \OPT(\hat y)$. Furthermore,
\begin{align*}
  \sum_i \psi(v_i) v_i(x'_i)              &=  \sum_i \psi(v_i) v_i(x'_i) - \left( \OPT(\hat x) - \OPT(\hat x) \right)\\
                                          &\geq \OPT(\hat x) - \left|\OPT(\hat x) - \sum_i \psi(v_i) v_i(x'_i) \right|\\
                                          &= \OPT(\hat x) - \left|\sum_i \psi(v_i) (v_i(x\bl_i) - v_i(x'_i)) \right|\\
  \mbox{(H\"{o}lder continuity of $v_i$)} \qquad &\geq\OPT(\hat x) - \lambda\left|\sum_i \psi(v_i) \, \delta_i^\beta \right|\\
  \mbox{(Jensen's inequality)} \qquad &\geq \OPT(\hat x) - \lambda \left(\sum_i \psi(v_i) \delta_i \right)^\beta
                                        = \OPT(\hat x) - \lambda \Delta^\beta\\
\end{align*}
Therefore, we have shown that the values of $\OPT$ on any two bundles
$\hat x$ and $\hat y$ that differ by one coordinate satisfy H\"{o}lder
continuity condition. To show this condition for any two bundles
$x, y\in \left(\cF \bigcap \RR_{>0}^d \right)$, we can apply the same
argument iteratively. First, we can form a set of $(d + 1)$ bundles
$b^0 = x, b^1, \ldots, b^d = y$ such that bundles $b^j$ and $b^{j+1}$
differ by at most one coordinate, and
$$\sum_{r=1}^{d} \|b^r - b^{r-1}\|_1 = \|x - y\|_1$$ Let
$\Delta_r = \|b^r - b^{r+1}\|_1$ for each $r$, we can then write
\begin{align*}
  |\OPT(x) - \OPT(y)| &\leq \sum_{r=0}^{d - 1} |\OPT(b^r) - \OPT(b^{r+1})|\\
                      &\leq \lambda \sum_{r=0}^{d - 1} \Delta_r^\beta \leq \lambda \, d \left(\frac{1}{d}\sum_{r=0}^{d - 1} \Delta_r\right)^\beta = d^{1- \beta}\lambda \|x - y\|_1^\beta\\
\end{align*}
where the last inequality follows from applying {Jensen's inequality}.
\end{proof}

\subsection{\Cref{sec:bunprice}}

\sixers*

\begin{proof}
  Since both $\cF^n$ and $\cP(\eps)$ are convex and $\cP(\eps)$ is compact,
  by Sion's minimax theorem~\cite{sion1958}, there is a
  value $\ROPT$ such that
\begin{equation}\label{lakers}
  \max_{x\in \cF^n} \min_{p \in \cP(\eps)} \cL_{\hat x}(x, p) = \min_{p
    \in \cP(\eps)}\max_{x\in \cF^n} \cL_{\hat x}(x, p) = \ROPT
\end{equation}
Since $\cP(\eps) \subseteq \RR_+^d$, we must have
$\ROPT \geq \OPT(\hat x)$. Now we only need to show that
$\ROPT \leq \OPT(\hat x) + \alpha$ with $\alpha = \eps^2\sigma
/4$. Let $(x^\star, p^\star)$ be a pair of minimax strategies of
\cref{lakers}. That is
\[
  x^\star \in \argmax_{x\in \cF^n} \min_{p\in \cP(\eps)} \cL(x, p)
  \qquad \mbox{ and } \qquad p^\star \in \argmin_{p\in \cP(\eps)}\max_{x\in
    \cF^n} \cL(x, p)
\]
It suffices to show that
$\cL(x^\star, p^\star) \leq \OPT(\hat x) + \alpha$. Suppose not, then
we have
\begin{align*}
  \OPT(\hat x) + \alpha   &<  \min_{p\in
                            \cP(\eps)}\cL(x^\star, p)\\
                          & = \sum_{i}\psi(v_i)\,v_i(x^\star_i) - \max_{p\in \cP(\eps)} \sum_{j\in
                            [d]} \left(\sum_{i} \psi(v_i)\, x^\star_{ij} - \hat x
                            \right)p_j\\
                          & \leq \sum_{i}\psi(v_i)\,v_i(x^\star_i)
\end{align*}

Now consider the bundle $y$ such that
$y_j = \max\{\sum_i \psi(v_i) x^\star_{ij}, \hat x_j\}$. For each $i$,
let $y\bl_i = x_{v_i}^*(p^*(y))$ and
$x\bl_i = x_{v_i}^*(p^*(\hat x))$. By definition,
$\OPT(y) = \sum_i \psi(v_i) v_i(y\bl_i)$.  It is clear that
$$
\OPT(y) \geq \sum_i \psi(v_i) v_i(x^\star_i) > \OPT(\hat x).
$$
Let $L = \lambda^{1/\beta} \left(\frac{4d}{\eps^2\sigma}
\right)^{(1-\beta)/\beta}$, then we can construct the following price
vector $p'\in \cP(\eps)$ such that $p_j' = L$ for each good $j$ with
$\sum_{i} \psi(v_i)\, x^\star_{ij} > \hat x_j$, and $p'_j = 0$ for all
other goods. By~\Cref{holderman}, we can derive the following
\begin{equation}\label{warriors}
\OPT(y) - \OPT(\hat x) \leq d^{1-\beta}\lambda \|y - \hat x\|_1
\end{equation}

It follows that
\begin{align}
  \OPT(\hat x) + \alpha < \cL(x^\star, p^\star) &\leq \cL(x^\star, p')\\
                                                &= \sum_i \psi(v_i) v_i(x^\star_i) - \left\langle p', \sum_i \psi(v_i)x^\star_i - \hat x \right\rangle\\
                                                &= \sum_i \psi(v_i) v_i(x^\star_i) - \sum_{j: y_j > \hat x_j} L\, (y_j - \hat x_j)\\
                                                &\leq  \OPT(y) - L \| y - \hat x\|_1 \label{cavs}
\end{align}

By combining \cref{warriors,cavs}, we obtain
\begin{align*}
\alpha < d^{1-\beta}\lambda \|y - \hat x\|_1^\beta - L\|y - \hat x\|_1 = \lambda \|y - \hat x\|_1^\beta \left(d^{1-\beta} - \frac{L}{\lambda} \|y -\hat x\|_1^{(1-\beta)} \right)
\end{align*}
Since $\alpha > 0$, we know that
$\|y - \hat x\|_1 < d\left(\frac{\lambda}{L} \right)^{1/(1-\beta)}$. By
our setting of $L$,
\[
  \alpha < d \lambda \left(\frac{\lambda}{L} \right)^{\beta / (1-\beta)}
  = \frac{\eps^2\sigma}{4} = \alpha
\]
which is a contradiction. Therefore, the minimax value $\ROPT$ is no
more than $\OPT(\hat x) + \alpha$.
\end{proof}

 \westbrook*

\begin{proof}
  By~\Cref{ohfun}, we know that
  $\OPT(\hat x) = \min_{p\in \RR_+^d} g_{\hat x}(p)$.
% It  follows that
% \[
% \min_{p\in \cP(\eps)} g_{\hat x}(p) = \min_{p\in \cP'(\eps)} g_{\hat x}(p) = \ROPT
% \]
% where $\ROPT$ denotes the Lagrangian value when we restrict the price
% space to $\cP(\eps)$ as in~\Cref{sixers}.
We will abuse notation to write $x'_i = x_{v_i}^*(p')$ for each buyer
of type $i$. By the result of~\Cref{sixers}, we also
have
\[
g_{\hat x}(p') = \cL(x', p') \leq \ROPT + \alpha \leq \OPT(\hat x) + 2\alpha.
\]
For each $i$, let $x_i\bl= x_{v_i}\bl(\hat x)$. Then note that
$\cL(x\bl, p') = \OPT(\hat x) - \langle p' , \hat x - \hat x\rangle =
\OPT(\hat x)$. Moreover, $x'$ is the maximizer for $\cL(\cdot, p')$,
so it follows that
\begin{align}\label{jordan}
0 \leq \cL(x', p') - \cL(x\bl, p') \leq 2\alpha.
\end{align}

For each buyer of type $i$, any bundle and price vector $p$, we will
write $u_i(x, p) = v_i(x) - \langle p, x \rangle$ to denote the
quasilinear utility function of buyer $i$. By our assumption on the
valuation functions in $\cV$, we know that $u_i(\cdot, p)$ is a
$\sigma$-strongly concave function over the space $W$ for any price
vector $p$. We know from \cref{jordan} that
\[
  \cL(x', p') - \cL(x\bl, p') = \sum_i \psi(v_i) [u_i(x'_i, p') -
  u_i(x\bl_i, p')] \leq 2\alpha.
\]
Since for each type $i$, the bundles $x'_i$ and $x\bl_i$ lie in the
set $W$, we have the following based on~\Cref{lem:sconvex}
\[
  \frac{\sigma}{2} \|x\bl_i - x'_i\|_2^2 \leq u_i(x'_i, p') -
  u_i(x\bl_i , p')
\]
Also, by Jensen's inequality, we have
\[
  \|\hat x - x'\|_2^2 = \left\| \sum_i \psi_i (x\bl_i - x'_i)
  \right\|_2^2 \leq \sum_i \psi_i \| x\bl_i - x'_i \|_2^2
\]
It follows that
\[
\frac{\sigma}{2} \|\hat x - x'\|_2^2 \leq 2\alpha
\]
and so $\|\hat x - x'\|_2\leq 2\sqrt{\alpha/\sigma}$.
\end{proof}

\unbiasedking*

\begin{proof}
  Given $x\bl = \argmax_{x \in \cF^n} \cL(x, p)$, we know by the
  Envelope theorem that the gradient of $g$ can be obtained as
\begin{align}
  \frac{\partial g_{\hat x}(p)}{\partial p_j} = \hat x_j - \sum_i \psi(v_i) \, x\bl_{ij}, \qquad \mbox{for each } j \in [d] \label{durant}.
\end{align}
By the definition of $x\bl$ we have,
\begin{align*}
  x\bl &= \argmax_{x\in \cF^n} \left[\sum_i \psi(v_i) v_i(x_i) - \sum_{j\in [d]} p_j \left(\sum_i \psi(v_i) x_{ij} - \hat x_j \right) \right]\\
&=  \argmax_{x\in \cF^n} \left[\sum_i \psi(v_i) v_i(x_i) - \sum_{j\in [d]} p_j \left(\sum_i \psi(v_i) x_{ij} \right) \right]\\
&=  \argmax_{x\in \cF^n} \left[\sum_i \psi(v_i) \left( v_i(x_i) - \langle p, x_{i} \rangle \right) \right]\\
\end{align*}
Observe that the decision variables $x_i$ for each buyer $i$ is
independent of all other buyers' decision in the $\argmax$ expression
above. This means
\[
x\bl_i = \argmax_{x_i \in \RR_+^d} \left[ v_i(x_i) - \langle p, x_i \rangle \right]
\]
It follows that for each buyer of type $i$, the bundle $x\bl_i$
corresponds to her purchased bundle under prices $p$, that is
$x_{v_i}^*(p)$. By combining with \cref{durant}, we now have that
\[
  \nabla g_{\hat x}(p) = \Ex{v\sim \psi}{\hat x - x_v^*(p)} = \hat x - x_\psi^*(p),
\]
which completes our proof.
\end{proof}

 \harden*

\begin{proof}
  The algorithm will start by running $\log(1/\delta)$ independent
  instantiations (indexed by $l$) of noisy gradient descent method
  (\Cref{hellyea}) to optimize the function $g_{\hat x}$. We will set
  number of iterations $T_1$ and learning rate $\eta$ to be
\[
  T_1 = \frac{4096 dL^2 R^2}{\eps^4 \sigma^2} \qquad \mbox{ and }\qquad
  \eta = \frac{R}{L \sqrt{dT}}
\]
with
$L = \lambda^{1/\beta} \left(\frac{4d}{\eps^2\sigma}
\right)^{(1-\beta)/\beta}$.  At each iteration point $p$ by the
gradient descent method, we will obtain an unbiased estimate of the
gradient by first posting the prices $p$, observe the bundle $x^*(p)$
purchased by a random buyer, and then compute the vector
$(\hat x - x^*(p))$ as the estimate.  Since both $\hat x$ and $x^*(p)$
satisfy $\|\hat x\|, \|x^*(p)\|\leq R$ and
$\hat x, x^*(p)\in \RR_+^d$, we have
\[
\|\hat x - x^*(p) \| \leq R
\]
Furthermore, by our choice of search space $\cP(\eps)$, each
iteration point $p$ satisfies $\|p\| \leq L\sqrt{d}$. By the guarantee
of~\Cref{hellyea}, each instantiation of the noisy gradient descent
method will output a price vector $p(l)$ such that
\[
\Ex{}{g_{\hat x}(p(l))} \leq \min_{p\in \cP'(\eps)} g_{\hat x}(p) + \frac{\eps^2 \sigma}{64}.
\]
By Markov inequality, for each instantiation $l$,
\[
\Pr\left[g_{\hat x}(p(l)) - \min_{p\in \cP'(\eps)} g_{\hat x}(p) \geq \frac{\eps^2 \sigma}{32}\right] \leq 1/2.
\]
Since we have $\log(1/\delta)$ independent instantiations of the noisy
gradient descent method, with probability at least $1 - \delta$,
there exists an instantiation $l^*$ that outputs a price vector
$\hat p$ such that
\[
  g_{\hat x}(\hat p) - \min_{p\in \cP'(\eps)} g_{\hat x}(p) \leq
  \frac{\eps^2 \sigma}{32}.
\]
By~\Cref{westbrook}, we have that
\[
\|x^*_\psi(\hat p) - \hat x \| \leq \eps.
\]
The number of rounds of interactions with the buyers is bounded
by $$O\left(\frac{dL^2 R^2\log(1/\delta)}{\eps^4 \sigma^2} \right),$$
which recovers our claim.
\end{proof}

\begin{algorithm}[h]
  \caption{Learning the price vector to induce a target bundle:
    $\learnprice(\hat x, \eps, \delta)$}
 \label{alg:learnprice}
  \begin{algorithmic}
    \STATE{\textbf{Input:} A target bundle $\hat x\in C$, target
      accuracy $\eps$, and confidence parameter $\delta$}

    \INDSTATE{Initialize: restricted price space
      $\cP(\eps)=\left\{p\in \RR_+^{d} \mid \|p\|\leq
        \sqrt{d}L \right\}$ where
      \[
        L = \left(\lambda\right)^{1/\beta}\,\left(
          \frac{4d}{\eps^2\sigma} \right)^{(1-\beta)/\beta}, \quad
        T_0= \log(2/\delta), \quad T_1= \frac{16384\, d\,L^2
          R^2}{\eps^4 \sigma^2}, \quad T_2 = \frac{4R^2 \log(16d
          R^2/\delta)}{\eps^2}, \quad \eta= \frac{R}{L\sqrt{dT_1}}\qquad
      \]
    }
    \INDSTATE{For $l = 1, \ldots , T_1$:}
    \INDSTATE[2]{Set $p^1_j = 1 \mbox{ for all good }j \in [d]$}
    \INDSTATE[2]{For $t = 1, \ldots , T_0$:}
    \INDSTATE[3]{Query $\rep$ with $p^t$ and observe an unbiased estimate for $x^*_\psi(p^t)$}
    \INDSTATE[3]{Update price vector with gradient descent:
      \[
        \hat{p}^{t+1}_j = p^t_j - \eta \left(\hat x_j -
          x^*(p^t)_j\right) \mbox{ for each }j\in [d],\qquad p^{t+1} =
        \Pi_{\cP(\eps)} \left[\hat p^{t+1}\right]
\]
}
    \INDSTATE[2]{Let $p(l) = 1/T\sum_{t=1}^T p^t.$}
    \INDSTATE[2]{For $s = 1, \ldots , T_2$:}
    \INDSTATE[3]{Query $\rep$ and observe an unbiased estimate $x(s)$for  $x_\psi^*(p^t)$}
    \INDSTATE[2]{Let $x(l) = \frac{1}{T_2} \sum_{s = 1}^{T_2} x(s)$}

    \INDSTATE{Let $l^* = \argmin_{l\in [T_1]} \| x(l) - \hat x \|$ }
    \STATE{\textbf{Output:} $\hat p = p(l^*)$}
    \end{algorithmic}
  \end{algorithm}

 \bunprice*

\begin{proof}
  We will first run the subroutine in~\Cref{harden} with accuracy
  parameter $\eps/4$ and confidence parameter $\delta/2$. This will
  output a list of $\log(2/\delta)$ prices $P$ such that with
  probability at least $1-\delta/2$, there exists $p^*\in P$ such that
  \begin{equation}
    \| \hat x - x^*_\psi(p^*) \| \leq \eps/4.\label{irving}
  \end{equation}
  For the remainder of the proof, we will condition on this accuracy
  guarantee, which is the case except with probability $\delta/2$.

  Next, for each of the price vector $p\in P$, we will post the price
  $p$ and observe purchased bundles by random buyers under prices $p$
  for $T_2$ number of rounds where
  $T_2 = \frac{16 d R^2 \log(4d/\delta)}{\eps^2}.$ For each good $j$,
  the amount of good $j$ purchased by a random buyer is a random
  variable with in the range of $[0, R]$ by our assumption on the
  class of valuations $\cV$. Let $\overline x_j$ be the average amount
  of $j$-th good purchased by these $T_2$ number of buyers. Note that
  $\Ex{\psi}{\overline x_j} = (x^*_\psi(p))_j$ for all good $j$.  By
  applying Chernoff bound (\Cref{chern}), we have with probability
  $1 - \delta/(2d \log(2/\delta))$ that
\[
  \| \overline x_j - (x^*_\psi(p))_j\| \leq \frac{\eps}{4\sqrt{d}}
\]
By applying a union bound over all price vectors in the set $P$ and
all goods $j\in [d]$, we have with probability at least $1 - \delta/2$ that
\begin{equation}
\|\overline x - x^*_\psi(p)\| \leq \frac{\eps}{4}\label{thom}
\end{equation}
We will condition on this event for the rest of the proof, which is
the case except with probability $\delta/2$. Let $x'$ be the average
bundle for posting the price vector $p^*$. Applying triangle
inequality on~\cref{irving,thom}, we get
\begin{equation}
\|x' - \hat x \| \leq \eps/2.\label{bogut}
\end{equation}
Finally, we will select the price vector $\hat p\in P$ with its
average bundle $\overline x$ closest to target bundle $\hat x$
w.r.t. $\ell_2$ distance. Based on~\cref{bogut}, we must have
\[
\|\overline x - \hat x \| \leq \eps/2.
\]
By applying triangle inequality on~\cref{thom}, the output price
vector must satisfy
\[
\| \hat x - x_\psi^*(\hat p) \| \leq \eps,
\]
which recovers our claim.
\end{proof}

% Our proof will rely on the standard Chernoff-Hoeffding deviation bound.
%\section{Appendix}

%\subsection{Chernoff-Hoeffding Bound}
%
%\begin{theorem}[Real-valued Additive Chernoff-Hoeffding Bound]\label{chern}
%  Let $X_1, \ldots , X_m$ be i.i.d. random variables with
%  $\Ex{}{X_i} = \mu$ and $a\leq X_i\leq b$ for all $i$. Then for every
%  $\alpha > 0$,
%\[
%  \Pr\left[ \left| \frac{\sum_i X_i}{m} - \mu\right| \geq
%    \alpha\right] \leq 2 \exp\left( \frac{-2 \alpha^2 m}{(b -
%      a)^2}\right).
%\]
%\end{theorem}

\subsection{\Cref{awesomeness}}

\envelope*

\begin{proof}
First, recall that in \cref{bundleform}, $\SW$ is defined as
\[
  \SW(\hat x) = \OPT(\hat x) - \langle c , \hat x\rangle.
\]
It follows that to prove our claim, we just need to show that
$\hat p\in \partial \OPT(\hat x)$. Let
$\hat y\in \RR_{>0}^d \cap \cF$. By the definition of subgradient,
it suffices to show that
\[
  \OPT(\hat x) + \langle \hat p, (\hat y - \hat x) \rangle \geq
  \OPT(\hat y).
\]
Let $x\bl$ and $y\bl$ be the optimal solutions for $\SCP(\hat x)$ and
$\SCP(\hat y)$ respectively. This allows us to derive
\begin{align*}
  \OPT(\hat x) + \langle \hat p, (\hat y - \hat x) \rangle &= \sum_i \psi(v_i) v_i(x\bl_i) + \langle \hat p, (\hat y_j - \hat x_j)\\
                                                           &= \sum_i \psi(v_i) v_i(x\bl_i) + \sum_j \hat p_j \left(\sum_i \psi(v_i) (y\bl_{ij} - x\bl_{ij}) \right)\\
&= \sum_i \psi(v_i)\left( v_i(x\bl_i) - \langle \hat p, x\bl \rangle \right) + \sum_j \hat p_j \left( \sum_i \psi(v_i) y\bl_{ij}\right)
\end{align*}
Note that for each $i$, $x\bl_i = x^*_{v_i}(\hat p)$ by~\Cref{ohfun}, so we
must have for each $i$
$$
v_i(x\bl_i) - \langle \hat p, x\bl_i \rangle \geq v_i(y\bl_i) - \langle \hat p, y\bl_i \rangle.
$$
It follows that
\[
  \OPT(\hat x) + \langle \hat p, (\hat y - \hat x) \rangle \geq
 \sum_i \psi(v_i)\left( v_i(y\bl_i) - \sum_j \hat p_j y\bl_{ij}  \right) + \sum_j \hat p_j \left( \sum_i \psi(v_i) y\bl_{ij}\right) = \sum_i \psi(v_i) v_i(y\bl_i) = \OPT(\hat y),
\]
which recovers our claim.
\end{proof}

\captainamerica*

\begin{proof}
  Let $\pi$ be an lottery allocation. Let $\tilde x\in \cF^n$ be an
  allocation such that each buyer is allocated the average bundle
  $\tilde x_i = \Ex{v_i\sim\psi}{\pi(i)}$ in $\pi^\star$. Since each
  buyer's valuation is concave, by Jensen's inequality, we have
  \[
    \sum_i\psi(v_i)\, [v_i(\tilde x_i) - \langle c, \tilde x_i\rangle]
    \geq \Ex{v_i\sim \psi, \pi}{v(\pi(i)) - c(\pi(i))}.
  \]
  In other words, the optimal welfare $\lot$ can also be achieved by a
  deterministic allocation. Moreover, the optimal welfare by a
  deterministic allocation is $\max_{x\in S} \SW(x)$. Therefore, we
  only need to bound the difference between the welfare of the
  resulting price vector and $\max_{x\in S} \SW(x)$.

  To do that, we will need to bound three different types of error,
  and show that they add up to at most $\alpha$ with probability at
  least $1 - \beta$. Before we proceed into analyzing each type of
  error, we condition on the event that all $(T+1)$ instantiations of
  $\learnprice$ achieve their target accuracy guarantees. Since each
  of them has confidence parameter $\delta$, by union bound, we know
  this is the case except with probability $(T + 1)\delta' = \delta$.

  First, we will show that the bundle $\overline x$ computed by the
  noisy subgradient descent method satisfies
  \begin{equation}\label{firstbaby}
    \SW(\overline x) \geq \max_{x\in S_\xi} \SW(x) - \alpha/2.
  \end{equation}
  Observe that the perturbation error in our subgradient descent is
  precisely the inducing error of $\learnprice$, which is bounded by
  $\eps$. By~\Cref{noisyman} and our settings of $T, \eta$ and $\eps$
  (in \Cref{alg:optprice}), we recover the bound in~\cref{firstbaby}.

  Next, since are not optimizing over the entire set $S$, we need to
  bound the loss in welfare for only optimizing over $S_\xi$. The
  result of~\Cref{tmac} can be applied, and by our choice of $\xi$ we
  get
  \[
    \max_{x\in S} \, \SW(x) - \max_{x'\in S_\xi} \, \SW(x') \leq \alpha/4.
  \]
  % For any bundle $b$ in $\cF \cap \RR_{>0}^d$, let
 %  $b' = \Pi_{\cF \cap [\gamma, \infty)^d} [b]$, and so
 %  \[
 %    \| b - b'\|_1 \leq d\gamma \qquad \mbox{and,} \qquad
 %    \| b - b'\|_2 \leq \sqrt{d}\gamma
 %  \]
 %  By~\Cref{holderman} and our choice of $\gamma$
 %  (in~\Cref{alg:optprice}), we have
%   \begin{align}
%     \SW(b) - \SW(b') &= \OPT(b) - \OPT(b') + \langle c, b' - b\rangle\\
% \label{sametrick}    \mbox{(Cauchy Schwarz inequality)}\quad    &\leq d\lambda (d\gamma)^\beta + \|c\|_2 \sqrt{d}\gamma\\
%     &= \alpha/4.
%   \end{align}
 It follows that
  \begin{equation}\label{secondbaby}
    \SW(\overline x) \geq \max_{x\in S_\xi} \,\SW(x) -\alpha/2 \geq  \max_{x'\in S}\, \SW(x') - 3\alpha/4.
  \end{equation}

  Finally, in our last instantiation of $\learnprice$, we will learn a
  price vector $\hat p$ induce a bundle $\hat x$ that is close to
  $\overline x$, that is $\|\hat x - \overline x\| \leq \eps$. Given
  our choice of $\eps$, we can bound the difference in welfare as in
  the analysis of~\Cref{tmac} to get
  \begin{equation}\label{thirdbaby}
    \SW(\hat p) = \SW(\hat x) \geq \SW(\hat x) - \alpha/4.
  \end{equation}
  % This implies that
  % $$
  % \max_{p\in \RR_+^d\colon x_\psi^*(p) \in S} \SW(p) = \max_{x\in S} \SW(x)
  % $$
  By combining~\cref{firstbaby,secondbaby,thirdbaby}, we have obtained
  \[
    \SW(\hat p) \geq \max_{x\in S} \SW(x) - \alpha.
  \]
  Note that $\op$ only interacts with the buyers through
  $\learnprice$, and each instantiation requires no more than
  $\poly(d, 1/\alpha, \log(1/\delta))$ rounds of interactions with the
  buyers.% , therefore
  % our algorithm only requires $\poly(d, 1/\alpha, \log(1/\delta))$
  % rounds of interactions with the buyers as well.
\end{proof}

\begin{algorithm}[h]
  \caption{Learning the price vector to optimize social welfare:
    $\op(\alpha,\delta, s)$}
 \label{alg:optprice}
  \begin{algorithmic}

    \STATE{\textbf{Input:} Target accuracy $\alpha$, confidence
      parameter $\delta$, and per-round expected demand bound $s$}

 \INDSTATE{Initialize:
      \[
        \eps= \left(\frac{\alpha^2}{16 R S^2} \right)^{\beta/(5\beta -
          4)},\qquad M = \sqrt{d} (\lambda)^{1/\beta} \left(4d/\sigma
        \right)^{(1-\beta)/\beta}
    \]
    \[
      T = \frac{16R^2M^2}{\alpha^2 \eps^{(2-2\beta)/\beta}},\qquad
 \eta = \frac{R}{ \sqrt{T} M (1/\eps)^{(2- 2\beta)/\beta}}, \qquad \delta' = \delta
      / (T + 1), \qquad
      \xi = % \left(\frac{\alpha}{2d R (\lambda + \|c\|)}\right)^{1/\beta}\sw{change this}
      \left( \frac{\alpha}{4\lambda d^\beta + \sqrt{d} \|c\|} \right)^{1/\beta}
      \]
      Restricted bundle space $S_\xi$, initial bundle $x^1 \in S_\xi$
      and prices $p^1 = \learnprice(x^1, \eps, \delta')$}
    \INDSTATE{\textbf{For} $t = 1, \ldots , T$:} \INDSTATE[2]{Let
      $y^{(t+1)} = \Pi_{S_\xi}[ x^{t} + \eta (p^{t} - c)]$}

    \INDSTATE[2]{Let $p^{t+1} = \learnprice(x^{t+1}, \eps)$}
    \INDSTATE{Let $\overline x = 1/T \sum_{t=1}^T x^t$}
    \STATE{\textbf{Output:} the last price vector
      $\hat p = \learnprice(\overline x, \gamma, \delta')$}
    \end{algorithmic}
  \end{algorithm}

\iffull\else

\begin{restatable}{lemma}{interior}\label{interiorman}
  Let $C\subseteq \RR_+^d$ be a convex set such that
  $[0,\gamma]^d\subseteq C$ for some $\gamma\in (0, 1]$.  For any
  $\xi \in (0, 1/2)$, let the set
\[
C_\xi = (1 - 2\xi)C + (\xi\gamma)\, \vec{1},
\]
where $\vec{1}$ denotes the $d$-dimensional vector with 1 in each
coordinate. Suppose that $[0,\gamma]^d\subseteq C$. Then each point $x$ in
$C_\xi$ is in the $\xi\gamma$-interior of $C$, that is:
\[
 \{ x+ (\xi\gamma) y \mid \|y\|\leq 1\} \subseteq C.
\]
\end{restatable}

\begin{proof}
  Let $x' \in C_\xi$ and $y'$ be a point in the unit ball centered at
  $\vec{0}$. It suffices to show that $x' + (\xi\gamma)y'\in C$. Since
  $x'\in C_\xi$, there exists $x\in C$ such that
  \[
    x' = (1 - 2\xi) x + (\xi\gamma)\, \vec{1}.
  \]
  Since $y'$ is a point in the unit ball centered at $\vec{0}$, there
  exists $y \in C$ such that
  \[
    \frac{\gamma}{2} (y' + \vec{1}) = y
  \]
  To see this, note that $C$ contains a ball of radius $\gamma/2$
  whose center is $(\gamma/2)\vec{1}$. Therefore, for some
  $x, y\in C$,
  \begin{align*}
    x' + (\xi\gamma)y' &= (1 - 2\xi) x + (\xi\gamma)\, \vec{1} + (\xi\gamma) (2y/\gamma - \vec{1})\\
                       &= (1 - 2\xi)x + 2\xi y.
  \end{align*}
  By the convexity of $C$, we know that $x' + (\xi\gamma)y'\in C$, as
  desired.
\end{proof}

We can bound the difference between the optimal welfare in $C_\xi$ and
$C$.

\begin{restatable}{lemma}{tmac}\label{tmac}
  Let $C\subseteq \RR_+^d$ be a convex set such that
  $[0,\gamma]^d\subseteq C$ for some $\gamma\in (0, 1]$. For any
  $\xi \in (0,1)$,
\[
  \max_{x\in C} \SW(x) - \max_{x'\in C_\xi} \SW(x') \leq d^{1-\beta}\lambda (2\xi R \sqrt{d})^\beta + 2\xi R \|c\|.
\]
\end{restatable}
\begin{proof}
  Let $x^* \in \argmax_{x\in S} \SW(x)$. We know that
  $y = x^* - \xi \vec{1}\in S_\xi$, and $\|x^* - y\|_1 \leq d \xi$ and
  $\|x^* - y\|_2 \leq \sqrt{d} \xi$.  By~\Cref{holderman}, we have
  \begin{align*}
    \SW(x^*) - \SW(y) &\leq \OPT(x^*) - \OPT(y) + \langle c, (x^* - y)\rangle\\
    (\mbox{H\"{o}lder continuity of }\OPT) \qquad   &\leq  \lambda \|x^* - y\|_1^\beta + \|c\|\,\|x^* - y\|_2\\
                      &\leq \lambda (d \xi)^\beta  + \sqrt{d} \xi \|c\|.
  \end{align*}
  Furthermore, we also know that
  $\max_{x'\in S_\xi} \SW(x')\geq \SW(y)$, which proves the stated bound.
\end{proof}

% \begin{proof}
%   Let $x^* \in \argmax_{x\in C} \SW(x)$. We know that
%   $y = (1 - 2\xi)x^* + (\xi\gamma) \vec{1}\in C_\xi$, and
%   \[
%     \|x^* - y\| = \|x^* - (1 - 2\xi) x^* - \xi\gamma\vec{1}\| \leq \xi\|2 x^* -
%     \gamma\vec{1}\| \leq 2 \xi R.
%   \]
%   By~\Cref{holderman}, we have
%   \begin{align*}
%     \SW(x^*) - \SW(y) &\leq \OPT(x^*) - \OPT(y) + \langle c, (x^* - y)\rangle\\
%     (\mbox{H\"{o}lder continuity of }\OPT) \qquad   &\leq  \lambda \|x^* - y\|_1^\beta + \|c\|\,\|x^* - y\|_2\\
%                       &\leq \lambda d^{1-\beta}(2\xi R \sqrt{d})^\beta + 2\xi R \|c\|
%   \end{align*}
%   Furthermore, we also know that
%   $\max_{x'\in C_\xi} \SW(x')\geq \SW(y)$, which proves the stated bound.
% \end{proof}

\iffalse
Finally, the following lemma translates an accuracy guarantee for
optimizing in bundle space to an accuracy guarantee in price space.

\begin{restatable}{lemma}{ginobini}\label{ginobini}
  Let $C\subseteq \cF$. Suppose that a price vector $p\in \RR_+^d$
  induces bundle $x_\psi^*(p)$ that satisfies $x_\psi^*(p)\in C$ and
  \[
    \SW(x_\psi^*(p)) \geq \max_{x\in C} \SW(x) - \alpha
  \]
  for some $\alpha \geq 0$. Then the prices $p$ satisfy
  \[
    \SW(p) \geq \max_{p' \in \RR_+^d\colon x_\psi^*(p')\in C} \SW(p') - \alpha.
  \]
  If $C = \cF$, then the prices $p$ satisfy
  $\SW(p) \geq \max_{p' \in \RR_+^d} \SW(p') - \alpha$.
\end{restatable}
\fi

\fi

\section{Missing Details in~\Cref{SEC:UNIT-DEMAND} }

\begin{lemma}
  The entropy function $H(x) = \sum_{j=1}^N x_i\log(1/x_i)$ is
  $(\sqrt{N}, 1/2)$-H\"{o}lder continuous w.r.t. $\ell_1$-norm over the simplex.
\end{lemma}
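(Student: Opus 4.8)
The plan is to reduce the claim to a one-dimensional estimate on the scalar map $\phi(t) := t\log(1/t)$ (with the convention $\phi(0):=0$), since $H(x) = \sum_{j=1}^N \phi(x_j)$. By the triangle inequality, for any two points $x,y$ of the simplex,
\[
|H(x)-H(y)| \;\le\; \sum_{j=1}^N |\phi(x_j)-\phi(y_j)|,
\]
so it suffices to prove the scalar bound $|\phi(a)-\phi(b)| \le \sqrt{|a-b|}$ for all $a,b\in[0,1]$, and then recombine the coordinatewise bounds.

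To prove the scalar bound I would assume $b\le a$ without loss of generality, put $h := a-b \in [0,1]$, and study $g(b) := \phi(b+h)-\phi(b)$ as a function of $b$ on $[0,1-h]$. Because $\phi$ is concave ($\phi''(t)=-1/t<0$), its derivative $\phi'(t) = -\log t - 1$ is non-increasing, hence $g'(b) = \phi'(b+h)-\phi'(b)\le 0$, so $g$ is non-increasing and therefore takes all of its values between $g(1-h)$ and $g(0)$. Since $g(0)=\phi(h)\ge 0$ and $g(1-h) = \phi(1)-\phi(1-h) = -\phi(1-h)\le 0$, this yields $|\phi(a)-\phi(b)| = |g(b)| \le \max\{\phi(h),\phi(1-h)\}$. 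Both boundary terms are then dispatched by elementary one-variable inequalities: from $\log s\le s/e$ (with $s=1/\sqrt h$) one gets $\phi(h) = 2h\log(1/\sqrt h) \le \tfrac{2}{e}\sqrt h \le \sqrt h$, and from $-\log(1-h)\le h/(1-h)$ one gets $\phi(1-h) = (1-h)\log\tfrac{1}{1-h} \le h \le \sqrt h$. This gives $|\phi(a)-\phi(b)|\le\sqrt{|a-b|}$.

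Finally, combining the coordinatewise bounds with Cauchy--Schwarz,
\[
|H(x)-H(y)| \;\le\; \sum_{j=1}^N \sqrt{|x_j-y_j|} \;\le\; \sqrt N\Big(\sum_{j=1}^N |x_j-y_j|\Big)^{1/2} \;=\; \sqrt N\,\|x-y\|_1^{1/2},
\]
which is exactly $(\sqrt N,1/2)$-Hölder continuity with respect to the $\ell_1$ norm. The only point needing care is that $\phi$ is not Lipschitz at the origin ($\phi'(t)\to\infty$ as $t\to 0^+$), so the scalar bound cannot be obtained from a mean value estimate; the monotonicity-plus-endpoint-evaluation argument above is precisely what sidesteps this, and everything else is routine. (Here $\log$ denotes the natural logarithm, the convention for which the constant is exactly $\sqrt N$; in another base the same argument produces the analogous bound with the corresponding base-change factor.)
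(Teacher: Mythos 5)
Your proof is correct, and it takes a genuinely different route to the key one-dimensional estimate $|\phi(a)-\phi(b)|\le|a-b|^{1/2}$ for $\phi(t)=t\log(1/t)$. The paper instead writes $|\phi(a)-\phi(b)|\le\int_a^b|\phi'(t)|\,dt$, applies the Cauchy--Schwarz (H\"older) inequality to this integral, and exploits the fact that $\int_0^1|\phi'(t)|^2\,dt=1$; the unbounded derivative at the origin is tolerated because $\phi'\in L^2([0,1])$. You avoid any integral computation: concavity of $\phi$ makes $b\mapsto\phi(b+h)-\phi(b)$ monotone, so its extreme values on $[0,1-h]$ sit at the two endpoints, which are then controlled by the elementary inequalities $\log s\le s/e$ and $\log(1+u)\le u$. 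Both proofs then recombine coordinatewise in equivalent ways (the paper via Jensen's inequality, you via Cauchy--Schwarz with the all-ones vector), yielding the same constant $\sqrt N$. Your version is more elementary and avoids the slightly delicate point that the mean-value/FTC representation is being used with an improperly integrable derivative; the paper's version is shorter once one knows the $L^2$ norm of $\phi'$. Either is perfectly adequate here.
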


\begin{proof}
  We will first show that the function $f(t) = t \log(1/t)$ is
  H\"{o}lder continuous. The first-order derivative of the function is
  $f'(t) = \log(1/t) - 1$.

  For any pair $a, b\in [0, 1]$, we can write
  \[
    |f(a) - f(b)| \leq \int_a^b |f'(t)| \, dt \leq \left( \int_a^b
      |f'(t)|^2 \, dt \right)^{1/2} \cdot \left( \int_a^b 1^2 \,
      dt \right)^{1/2}
  \]
  where the last inequality follows from H\"{o}lder inequality.
  Moreover,
  $$\int_a^b |f'(t)|^2 \, dt \leq \int_0^1 |f'(t)|^2 \, dt = 1
  \qquad \mbox{ and }\qquad \int_a^b 1^2 \, dt = |a - b|$$ It follows
  that
\[
  |f(a) - f(b)| \leq |a - b|^{1/2}
\]
Therefore, the function $f$ is $(1, 1/2)$-H\"{o}lder continuous.

Now we will use this to show the H\"{o}lder continuity of the entropy
function. Let $x, y$ be two probability vectors in $N$-dimensional
space, and $\delta_j = |x_i - y_i|$, then 
\begin{align*}
  |H(x) - H(y)| &\leq \sum_{j=1}^N |f(x_j) - f(y_j)| \\
                & \leq \sum_{j=1}^N  |x_j - y_j|^{1/2} \\
                &= {N} \left(1/N\sum_{j=1}^N |x_j - y_j|^{1/2}
                  \right)\leq N \left(1/N\sum_{j=1}^N |x_j - y_j|\right)^{1/2}\\
                & =   \sqrt{N}(\|x - y\|_1)^{1/2}
\end{align*}
where the last inequality follows from Jensen inequality. 
\end{proof}

Then the H\"{o}lder continuity condition immediately follows from
above.

\begin{corollary}\label{lem:holderdude}
  Let function $\tilde v\colon \dd \rightarrow \RR_+$ be
  defined as $\tilde v(x) = \langle v, x\rangle + \eta H(x)$ for some
  $\eta >0$. Suppose that each coordinate of $v$ is non-negative and
  upper bounded by $\vmax$. Then the function is
  $(\sqrt{d+1} + \vmax, 1/2)$-H\"{o}lder continuous
  w.r.t. $\ell_1$-norm.
\end{corollary}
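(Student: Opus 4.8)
The plan is to reduce to the entropy lemma just established and bound the linear part by hand. Fix any $x, y \in \dd$. Writing $\tilde v(z) = \langle v, z\rangle + \eta H(z)$ and applying the triangle inequality,
\[
  |\tilde v(x) - \tilde v(y)| \;\le\; |\langle v,\,x-y\rangle| \;+\; \eta\,|H(x) - H(y)| ,
\]
so it suffices to bound each term by a multiple of $\|x-y\|_1^{1/2}$.

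For the linear term I would use two elementary bounds together. First, since $x$ and $y$ are probability vectors and every coordinate of $v$ lies in $[0,\vmax]$, both $\langle v, x\rangle$ and $\langle v, y\rangle$ lie in $[0,\vmax]$, so $|\langle v, x-y\rangle| \le \vmax$. Second, by H\"older's inequality $|\langle v, x-y\rangle| \le \|v\|_\infty\,\|x-y\|_1 \le \vmax\,\|x-y\|_1$. Combining, $|\langle v,x-y\rangle| \le \vmax\,\min(1,\|x-y\|_1) \le \vmax\,\|x-y\|_1^{1/2}$, using $\min(1,t)\le\sqrt{t}$ for all $t\ge 0$. For the entropy term, the preceding lemma with $N = d+1$ gives $|H(x)-H(y)| \le \sqrt{d+1}\,\|x-y\|_1^{1/2}$, hence $\eta\,|H(x)-H(y)| \le \eta\sqrt{d+1}\,\|x-y\|_1^{1/2}$. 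Adding the two bounds gives $|\tilde v(x)-\tilde v(y)| \le (\vmax + \eta\sqrt{d+1})\,\|x-y\|_1^{1/2}$, which is at most $(\sqrt{d+1}+\vmax)\,\|x-y\|_1^{1/2}$ in the regime $\eta\le 1$ relevant to the reduction (there $\eta = \alpha/(2\log(d+1))$). This is exactly the claimed $(\sqrt{d+1}+\vmax,\,1/2)$-H\"older continuity with respect to $\|\cdot\|_1$.

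There is essentially no obstacle here: all the real content sits in the preceding lemma on the H\"older continuity of $H$, which in turn rests on the one-dimensional fact that $t\mapsto t\log(1/t)$ has square-integrable derivative on $[0,1]$ together with Jensen's inequality, and on $\tilde v$ being a sum of two functions that are $\tfrac12$-H\"older with the same exponent (so the H\"older constants simply add). The only point worth stating explicitly is the $\min(1,t)\le\sqrt{t}$ trick, which is what lets the globally Lipschitz linear functional be folded into a $\tfrac12$-H\"older bound with the clean constant $\vmax$ rather than a diameter-dependent one.
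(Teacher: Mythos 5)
Your proof is correct and fills in exactly the calculation the paper leaves implicit (the paper offers no argument beyond ``immediately follows from above''): triangle inequality to split off the entropy term, the preceding lemma for $\eta\,|H(x)-H(y)|\le\eta\sqrt{d+1}\,\|x-y\|_1^{1/2}$, and a conversion of the Lipschitz bound on the linear part into a $\tfrac12$-H\"older bound. Your $\min(1,t)\le\sqrt t$ device is the right way to get the clean constant $\vmax$ for the linear piece; you also correctly flag that the resulting constant is $\vmax+\eta\sqrt{d+1}$, so the constant $\sqrt{d+1}+\vmax$ stated in the corollary is only valid under the tacit (and, given $\eta=\alpha/(2\log(d+1))$ with small $\alpha$, innocuous) assumption $\eta\le 1$ --- a minor imprecision in the paper's statement rather than a gap in your reasoning.
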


\trans*

\begin{proof}
  To show that each coordinate of $p'$ is non-negative, note that
  $p'_{d+1} = 0$ and
  \[
    p'_j = (p_j - p_{d+1}) - \min\{0 , \min_{j'\in [d]} (p_{j'} -
    p_{d+1})\} \geq (p_j - p_{d+1}) - \min_{j'\in [d]} (p_{j'} -
    p_{d+1}) \geq 0.
  \]

  To show that the preference of the buyer remains the same under the
  two price vectors, we consider two cases: buyer will either choose
  the last item, or select one of the first $d$ items.

  Suppose that we are in the first case. Then we know that
  $v_{d+1} - p_{d+1} \geq \max_{j'\in [d]} (v_{j'} - p_{j'})$, and it
  follows that
  $v_{d+1} - p'_{d+1} = v_{d+1} - (p_{d+1} - p_{d+1}) \geq \max_{j'\in
    [d]} (v_{j'} - (p_{j'} - p_{d+1}))$. Note that for each
  $j' \in [d]$, we also have $p'_{j'} \geq (p_{j'} - p_{d+1})$, and
  this means 
\[
  v_{d+1} - p'_{d+1} \geq \max_{j'\in [d]} (v_{j'} - p'_{j'}).
\]
Therefore, the buyer will continue to choose item $(d+1)$.

Next consider the second case in which the buyer chooses an item
$j^*\in [d]$ under the price vector $p$. Note when we construct $p'$,
the prices for the first $d$ items are translated by the same amount,
so we have
\[
  j^* = \argmax_{j\in [d]} [v_j - p_{j}] =  \argmax_{j\in [d]} [v_j - p'_{j}]
\]
Now it suffices to show that the buyer will not select the last item
under $p'$.  To show that, we will further consider two sub-cases:
either $\min_{j'\in [d]} (p_{j'} - p_{d+1}) < 0$ or
$\min_{j'\in [d]} (p_{j'} - p_{d+1}) \geq 0$. Suppose that
$\min_{j'\in [d]} (p_{j'} - p_{d+1}) < 0$. Let
$j^\triangle = \argmin_{j'\in [d]} (p_{j'} - p_{d+1})$, then
\[
  v_{j^*} - p'_{j^*} = v_{j^*} - p_{j^*} + p_{j^\triangle} \geq
  v_{j^\triangle} - p_{j^\triangle} + p_{j^\triangle} =
  v_{j^\triangle} > 0 = v_{d+1} - p'_{d+1},
\]
this means the buyer will strictly prefer $j^*$.  Suppose that
$\min_{j'\in [d]} (p_{j'} - p_{d+1}) \geq 0$. Let
\[
  v_{j^*} - p'_{j^*} = v_{j^*} - p_{j^*} + p_{d+1} \geq v_{d+1} -
  p_{d+1} + p_{d+1} = v_{d+1} - p'_{d+1}
\]
which also implies that the buyer will still prefer $j^*$ to the empty
item.
\end{proof}

\crucial*

\begin{proof}
Note that the distribution of prices $\cD(p)$ satisfies
\[
  x^*_{\tilde \psi}(p) = \Ex{\psi}{\argmax_{x\in \Delta_{d+1}}
    [\langle v - p, x\rangle + \eta H(x)]} = \Ex{\psi, p'\sim
    \cD(p)}{\argmax_{x\in \Delta_{d+1}} \left[\langle v - p',
      x\rangle\right]},
\]
which recovers our lemma.
\end{proof}

\begin{algorithm}[h]
  \caption{Learning the price vector to optimize social welfare for 
    unit-demand buyers and indivisible goods: $\opu(\alpha,\delta, s)$}
 \label{alg:optpriceUD}
  \begin{algorithmic}

    \STATE{\textbf{Input:} Target accuracy $\alpha$, confidence
      parameter $\delta$, and per-round expected demand bound
      $s$}

    \INDSTATE{Initialize:
      $\eps= \alpha/2, \qquad \eta = \alpha/(2\log(d+1))$}
    \INDSTATE{Run $\op(\eps, \delta, s)$; For each price vector $p$
      queried by $\op$, obtain an unbiased estimate for $x^*_{\tilde\psi}(p)$
      by calling $\simulate(p, \eta)$} \INDSTATE{Let $\hat p$ be the
      price vector output by $\op$} \STATE{\textbf{Output:} the
      distribution over prices $\hat \cD = \cD(p)$}
    \end{algorithmic}
  \end{algorithm}

\adderror*

\begin{proof}
  Let $\oeta$ be the welfare achieved by the optimal fixed price
  vector $p^\eta$ in the $\eta$-regularized problem. Then by
  definition of $\oeta$
\[
  \Ex{v_i\sim \psi}{\langle v - c, x^\star_i\rangle + \eta H(x^\star_i)} \leq \oeta
\]
Since the entropy function is non-negative, it follows that
$\Ex{v_i\sim \psi}{\langle v - c, x^\star_i\rangle} = \lot\leq
\oeta.$

Next, we will show that there exists a price vector $p^\eta$ induces
the optimal fractional allocation in the regularized problem. Let
$x^\eta$ be optimal fractional allocation for the $\eta$-regularized
problem such that
$x_{\tilde\psi}^\eta = \Ex{v\sim
  \tilde\psi}{x^\eta_v}$. By~\Cref{ohfun}, we know that there exists a
price vector that induces the expected $x_{\tilde\psi}^\eta$ and the
associated randomized allocation. By the accuracy guarantee of
$\hat p$, we have
\[
  \Ex{\tilde v\sim \tilde \psi}{ \tilde v(x^*_{\tilde v}(\hat p)) -
    \langle c, x^*_{\tilde v}(\hat p) \rangle } \geq \oeta - \eps \geq \lot -\eps.
\]
Furthermore, by~\Cref{MWnote,trans}, we can rewrite 
\[
  \Ex{\tilde v\sim \tilde \psi}{ \langle v-c, x^*_{\tilde v}(\hat p)
    \rangle } = \Ex{v\sim \psi}{ \langle v-c, x^*_{\tilde v}(\hat p)
    \rangle + \eta H(x^*_{\tilde v}(\hat p))} = \Ex{v\sim \psi}{
    \langle v-c, x^*_{v}(\hat \cD) \rangle} + \eta \Ex{v\sim \psi}{
    H(x^*_{\tilde v}(\hat p))}
\]
where $x^*_v(\cD) = \Ex{p\sim \cD}{x^*_v(p)}$ is the randomized
allocation $\cD$ induces for each valuation $v$.
Following from the fact the entropy
term in regularized valuation is bounded by $\eta \log(d+1)$, we have
\[
  \SW(\hat \cD) = \Ex{v\sim \psi}{ \langle v-c, x^*_{v}(\hat \cD)
    \rangle} \geq \lot - \eps - \eta \log(d+1).
\]
which recovers our bound.
\end{proof}

%%% Local Variables:
%%% mode: latex
%%% TeX-master: "main"
%%% End:

\end{document}

%%% Local Variables:
%%% mode: latex
%%% TeX-master: t
%%% End: